\providecommand{\U}[1]{\protect\rule{.1in}{.1in}}
\tikzset{global scale/.style={scale=#1,every node/.append style={scale=#1}}}
\newtheorem{theorem}{Theorem}
\newtheorem{algorithm}[theorem]{Algorithm}
\newtheorem{definition}[theorem]{Definition}
\newtheorem{example}[theorem]{Example}
\newtheorem{lemma}[theorem]{Lemma}
\newtheorem{notation}[theorem]{Notation}
\newtheorem{problem}[theorem]{Problem}
\newtheorem{proposition}[theorem]{Proposition}
\newtheorem{remark}[theorem]{Remark}
\begin{document}

\title{Computing the greatest common divisor \\of\ several parametric univariate polynomials \\via generalized subresultants}
\author{Hoon Hong\\Department of Mathematics, North Carolina State University\\Box 8205, Raleigh, NC 27695, USA\\hong@ncsu.edu\\[10pt] Jing Yang\thanks{Corresponding author.}\\SMS--HCIC--School of Mathematics and Physics,\\Center for Applied Mathematics of Guangxi,\\Guangxi Minzu University, Nanning 530006, China\\yangjing0930@gmail.com\\[-10pt]}
\date{}
\maketitle

\begin{abstract}
In this paper, we tackle the following problem: compute the gcd for
\emph{several} univariate polynomials with \emph{parametric} coefficients. It
amounts to partitioning the parameter space into ``cells'' so that the gcd has
a uniform expression over each cell and to constructing a uniform expression
of gcd in each cell. We tackle the problem as follows. We begin by making a
natural and obvious extension of subresultants of two polynomials to several
polynomials. Then we develop the following structural theories about them.

\begin{enumerate}
\item We generalize Sylvester's theory to several polynomials, in order to
obtain an elegant relationship between generalized subresultants and the gcd
of several polynomials, yielding an elegant algorithm.

\item We generalize Habicht's theory to several polynomials, in order to
obtain a systematic relationship between generalized subresultants and
pseudo-remainders, yielding an efficient algorithm.

\end{enumerate}

\noindent Using the generalized theories, we present a simple (structurally
elegant) algorithm which is significantly more efficient (both in the output
size and computing time) than algorithms based on previous approaches.

\end{abstract}

\section{Introduction}

\label{sec:introduction} The greatest common divisor (gcd) for two univariate
polynomials is a fundamental object in computational algebra and geometry with
numerous applications in science and engineering. Due to its importance, there
have been extensive research on underlying theories and algorithms for
computing the gcd of given two polynomials. In this paper, we tackle the
following generalization of the gcd problem: compute the gcd for
\emph{several} univariate polynomials with \emph{parametric} coefficients.

The motivation for considering such a generalization of gcd comes from two
sources: (1) One often needs to work with more than two polynomials. For
instance, when studying the multiplicity structure of the roots of a
polynomial, it is natural to consider its derivatives of several orders. (2)
One often wants to carry out structural analysis for a parameterized family of inputs.

\medskip Obviously, the gcd depends on values of the parameters. Hence the
problem essentially consists of the following two challenges:

\begin{enumerate}
\item[C1:\ \;] Partition the parameter space into ``cells'' so that the gcd
has a uniform expression over each cell.

To be more specific, find a system of polynomial equations and inequalities in
the parameters that describe each cell.

\item[C2:\ \;] Construct a uniform expression of gcd in each cell.

To be more specific, find a univariate polynomial for the gcd where its
coefficients are rational functions in the parameters.
\end{enumerate}

\noindent For the case of two polynomials, there have been extensive research
on the above two challenges (i.e., C1 and C2) during the last two centuries,
producing deep theories and efficient algorithms as well as numerous
applications (just list a few, \cite{1853_Sylvester, 
2003_Lascoux_Pragacz,
1907_Bocher, 
1968_Householder, 
1967_Collins, 
1971_BARNETT, 
1971_Brown_Traub,
1997_Hong, 
1998_Wang, 
Hong:99a, 
2000_Wang, 
2004_WANG_HOU, 
2008_Akira,
2004_DIAZ_TOCA_GONZALEZ_VEGA, 
2007_DAndrea_Hong_Krick_Szanto,
2006_DAndrea_Krick_Szanto, 
2009_DAndrea_Hong_Krick_Szanto,
2015_DAndrea_Krick_Szanto, 
2017_Krick_Szanto_Valdettaro,
2017_Bostan_DAndrea_Krick_Szanto_Valdettaro,
2019_Andrea_Krick_Szanto_Valdettaro, 
2020_Bostan_Krick_Szanto_Valdettaro,
Wang:2024,
Li:98, 
Hong:99b, 
Hong:2000a, 
2001_DAndrea_Dickenstein,
2013_DAndrea_Krick_Szanto, 
2020_Roy_Szpirglas, 
2023_Cox_DAndrea,
2013_Jaroschek} and \cite{ 
2008_Szanto, 
2017_Imbach_Moroz_Pouget,
2018_Perrucci_Roy, 
2020_Roy_Szpirglas}). By now, the most successful approach
seems to be the integration of the following two theories:

\begin{itemize}
\item Sylvester's theory on an elegant relationship between subresultants and
the gcd of two given polynomials, yielding an elegant algorithm;

\item Habicht's theory on a systematic relationship between subresultants and
pseudo-remainders, yielding an efficient algorithm.
\end{itemize}

\noindent They together tackle the above two challenges (C1 and C2) elegantly
and efficiently.

\bigskip{\ For the case of more than two polynomials, there have been sporadic
research on the above two challenges (C1 and C2). They can be roughly
classified into two kinds of strategies: (1) recursive approach and (2)
generalized-resultant-matrix-based approach. Below we briefly describe, for
each approach, the main underlying ideas, the main advantages, and some
challenges for further improvements. }

\begin{itemize}
\item \emph{Recursive approach} \cite{1996_Yang_Hou_Zeng}. This approach
recursively employs the subresultant algorithm for two polynomials (see
Algorithm \ref{alg:recursive} in the present paper). The main advantage of
this approach is that the number of polynomials describing the cells is small.
On the other hand, due to the recursive process, this approach often leads to
nested determinants, \textit{i.e.}, determinants of determinants of ..., and
so on, which hinder further structural analysis. Furthermore, the polynomials
describing the cells could have unnecessarily high degrees and the process
could be very time-consuming. Hence the challenge is to produce ``flat''
(non-nested) determinants with smaller degrees.

\item \emph{Generalized-resultant-matrix-based approach}
\cite{1971_BARNETT,1976_Kakie,1978_Vardulakis_Stoyle,Ho:89,2002_Diaz_Toca_Gonzalez_Vega}%
. The subresultants of two polynomials can be expressed as the determinants of
various ``resultant'' matrices (\cite{1853_Sylvester,1857_Cayley,1970_Barnett}%
). Naturally there have been efforts to generalize the resultant matrices to
several polynomials. In 1971, Barnett first generalized the companion
resultant matrix of two polynomials to several polynomials \cite{1971_BARNETT}%
. Later, Kaki\'{e}, Vardulakis and Stoyle proposed two different approaches
for the generalization of Sylvester matrix
\cite{1976_Kakie,1978_Vardulakis_Stoyle}. In 1989, Ho made a further
investigation on Kaki\'{e}'s result and presented a parallel algorithm for
computing the gcd of several univariate polynomials with constant
coefficients~\cite{Ho:89}. In 2002, Diaz-Toca and Gonzalez-Vega, inspired by
Barnett's idea in~\cite{1971_BARNETT}, extended the construction of B\'ezout
matrix, hybrid B\'ezout matrix and Hankel matrix to several polynomials in
\cite{2002_Diaz_Toca_Gonzalez_Vega}. With these generalized resultant
matrices, one can partition the parameter space into cells by rank conditions
and provide a uniform expression for the gcd over each cell. The main
advantage of this approach is that it leads to ``flat'' (non-nested)
determinants, which aid further structural analysis. On the other hand, one
observes that the number of polynomials describing the cells could be very
large. The challenge is to produce cell descriptions that involve smaller
number of polynomials.

\end{itemize}

\noindent Thus there is a need for developing a new approach that tackles the
challenges of the previous approaches while keeping their advantages.

\bigskip In this paper, we develop such an approach in that it produces flat
(non-nested) determinants and that the number and the degrees of polynomials
describing the cells are small. We begin by making a natural and
straightforward extension of subresultants of two polynomials to several
polynomials (Definition~\ref{def:sres_npoly} in Section~\ref{sec:gen_sres}).
Then we develop the following structural theories about them.

\begin{enumerate}
\item We \emph{generalize Sylvester's theory to several polynomials}
(Section~\ref{sec:gen_sres}), in order to obtain an elegant relationship
between generalized subresultants and the gcd of several polynomials, yielding
an \emph{elegant algorithm}.


We extend the parametric conditions and expressions for gcd of two polynomials
to multiple polynomials and give the parametric conditions and gcd expressions
in the form of formulas. Once the polynomials are given, one may immediately
write down the conditions and gcd expressions without any further mental work.
Moreover, the conditions and gcd expressions are all non-nested determinants
or determinant polynomials, which makes them easier for further analysis.

See Theorem~\ref{thm:gsylvester} and Algorithm \ref{alg:non_recursive_naive}
for details.

\item We \emph{generalize Habicht's theory to several polynomials}
(Section~\ref{sec:gen_habicht}), in order to obtain a systematic relationship
between generalized subresultants and pseudo-remainders, yielding an
\emph{efficient algorithm}.


The most time-consuming part of the algorithm obtained from the generalized
Sylvester's theory is the computation of determinant polynomials. For this
purpose, we
reveal an inherent relation among the subresultants appearing in the
algorithm. Although the relation is not a direct generalization of the
classical Habicht's theorem, they share the same idea, that is, converting the
computation of subresultants into the pseudo-remainder calculation. By
combining the classical Habicht's theorem and the newly revealed relation
together, we design an efficient algorithm for computing the parametric gcds
for a set of parametric polynomials.

See Theorem~\ref{thm:ghabicht} and Algorithm \ref{alg:non_recursive_cute} for details.
\end{enumerate}

\noindent Finally we compare the resulting algorithm with the classic one,
which shows that the proposed algorithm is significantly more efficient, both
in the output size and computing time (Section~\ref{sec:performance}).

\bigskip\noindent\textit{Comparison with related works}: There have been
numerous works on computing gcds for polynomials. We will focus on efforts in
parametric gcd for several univariate polynomials.

\begin{itemize}
\item In \cite{1976_Kakie,1978_Vardulakis_Stoyle}, the authors generalized the
following well-known property of the subresultant of two univariate
polynomials: the degree of the gcd of two polynomials is determined by the
rank of the corresponding Sylvester matrix . They extended the concept of
Sylvester matrix to several univariate polynomials so that a similar property holds.

In this paper, we generalize another well-known property of the subresultant
of two univariate polynomials: the degree of the gcd of two polynomials is
determined by vanishing of principal coefficient of subresultants and the gcd
is given by the corresponding subresultant. We extend subresultant to several
polynomials so that a similar property holds.

\item In \cite{Ho:89}, the author further developed the work in
\cite{1976_Kakie} and found one way to identify a maximal set of linearly
independent rows in the subresultant matrices from \cite{1976_Kakie}. This
result was used to investigate the problem of computing the gcd for several
univariate polynomials with numeric coefficients.

In this paper, we further refine the result in the sense that we allow the
coefficients to be parameters and provide an explicit formula to write down
the gcd in terms of coefficients of the given polynomials.

\item Habicht's theorem, proposed by W. Habicht in \cite{Habicht:48}, is one
of the essential ingredients in the classical subresultant theory. It captures
the fundamental relationships among three consecutive subresultants in the
subresultant chain of two parametric polynomials with formal coefficients and
lays the foundation for validating the gap structure of subresultant chain. It
also plays a vital role in the development of subresultant theory and its
subsequent applications. With the help of Habicht's theorem, one can compute
the parametric gcds of two parametric univariate polynomials very efficiently.

In this paper, we propose an analog of Habicht's theorem for subresultants of
several univariate polynomials with formal coefficients, which can be used to
enhance the efficiency of the algorithm for computing parametric gcds of
several parametric univariate polynomials.
\end{itemize}

The paper is structured as follows. In Section~\ref{sec:problem}, we state the
problem precisely. In Section \ref{sec:review}, we briefly review the
subresultant theory of two univariate polynomials due to Sylvester and
Habicht. In Section~\ref{sec:gen_sres}, we describe a natural and obvious
extension of the classical subresultant for two polynomials to several
polynomials. In Section~\ref{sec:gen_sres}, we generalize Sylvester's theory
to several polynomials and end with an elegant algorithm. In
Section~\ref{sec:gen_habicht}, we generalize Habicht's theory to several
polynomials, and end with an efficient algorithm. In
Section~\ref{sec:performance}, we compare the performance of the proposed
algorithm and the previous approaches.

\section{Problem statement}

In this section, we give a precise statement of the problem. For this, we need
some notations.

\label{sec:problem}

\begin{notation}
[Global Notation]\label{notation1} We will use the following notation
throughout the paper.

\begin{itemize}
\item $d=\left(  d_{0},\ldots,d_{n}\right)  \in\mathbb{N}_{>0}^{n+1}$ be such
that $d_{0}\leq d_{1},\ldots,d_{n}$


\item $a_{i}=\left(  a_{i0},\ldots,a_{id_{i}}\right)  $ be indeterminates (parameters);

\item $F=(F_{0},\ldots,F_{n})$ where $F_{i}=\sum_{j=0}^{d_{i}}a_{ij}x^{j}%
\in\mathbb{Z}[a_{i}][x]$;
\end{itemize}
\end{notation}

\begin{remark}
\ 

\begin{enumerate}
\item The condition that $d_{0}\leq d_{1},\ldots,d_{n}$ is a natural
assumption for the parametric gcd problem considered in this paper because the
degree of gcd of several polynomials does not surpass the minimal degree of
the input polynomials.

\item The coefficients $a_{ij}$'s are indeterminates. When they take different
values, the gcd of $F$ may have different forms.
\end{enumerate}
\end{remark}

\noindent Now we are ready to give a precise statement of the problem.

\begin{problem}
[Parametric gcd problem]The parametric gcd problem is stated as:

\begin{enumerate}
\item[In\ :] $d$, a vector of positive integers, standing for the degrees of
$F$ (See Notation~\ref{notation1}).

\item[Out:] $\mathcal{G}$ is a representation of the parametric gcd for $F$,
that is,%
\[
\mathcal{G}=\left(  \left(  C_{1},G_{1}\right)  ,\ldots,\left(  C_{p}%
,G_{p}\right)  \right)
\]
such that%
\[
\gcd\left(  F\right)  =\left\{
\begin{array}
[c]{lll}%
G_{1} & \text{if} & C_{1}\\
G_{2} & \text{if} & C_{2}\\
\ \vdots & \ \vdots & \ \vdots\\
G_{p} & \text{if} & C_{p}%
\end{array}
\right.
\]
where $C_{i}$ is a condition on the parameters $a$, describing a cell, and
where $G_{i}$ is a parametric expression for the gcd of~$F$ over the cell.
\end{enumerate}
\end{problem}

\begin{example}
\ 

\begin{enumerate}
\item[In\ :] $d = (3,3,4)$ representing $F=(F_{0},F_{1},F_{2})$ where
\begin{align*}
F_{0}  &  =a_{03}x^{3}+a_{02}x^{2}+a_{01}x+a_{00}\\
F_{1}  &  =a_{13}x^{3}+a_{12}x^{2}+a_{11}x+a_{10}\\
F_{2}  &  =a_{24}x^{4}+a_{23}x^{3}+a_{22}x^{2}+a_{21}x+a_{20}%
\end{align*}

\item[Out:] $\mathcal{G}=\left(  (G_{1},C_{1}),\ldots,(G_{10},C_{10})\right)
$ where%
\[%
\begin{array}
[c]{lll}%
G_{1}={\operatorname*{dp}M}_{1} &  & C_{1}:\det M_{1}\neq0\\
G_{2}={\operatorname*{dp}M}_{2} &  & C_{2}:\det M_{1}=0\ \wedge\ \det
M_{2}\neq0\\
G_{3}={\operatorname*{dp}M}_{3} &  & C_{3}:\det M_{1}=0\ \wedge\ \det
M_{2}=0\ \wedge\ \det M_{3}\neq0\\
\vdots &  & \vdots\\
G_{10}={\operatorname*{dp}M}_{10} &  & C_{10}:\det M_{1}=0\ \wedge
\ \cdots\ \wedge\ \det M_{9}=0\ \wedge\ \det M_{10}\neq0
\end{array}
\]

where again%
\begin{align*}
M_{1}  &  =%
\begin{bmatrix}
a_{03} & a_{02} & a_{01} & a_{00} &  & \\
& a_{03} & a_{02} & a_{01} & a_{00} & \\
&  & a_{03} & a_{02} & a_{01} & a_{00}\\
a_{13} & a_{12} & a_{11} & a_{10} &  & \\
& a_{13} & a_{12} & a_{11} & a_{10} & \\
&  & a_{13} & a_{12} & a_{11} & a_{10}%
\end{bmatrix}
\\
M_{2}  &  =%
\begin{bmatrix}
a_{03} & a_{02} & a_{01} & a_{00} & \\
& a_{03} & a_{02} & a_{01} & a_{00}\\
a_{13} & a_{12} & a_{11} & a_{10} & \\
& a_{13} & a_{12} & a_{11} & a_{10}\\
a_{24} & a_{23} & a_{22} & a_{21} & a_{20}%
\end{bmatrix}
\\
M_{3}  &  =%
\begin{bmatrix}
a_{03} & a_{02} & a_{01} & a_{00} &  & \\
& a_{03} & a_{02} & a_{01} & a_{00} & \\
&  & a_{03} & a_{02} & a_{01} & a_{00}\\
&  & a_{13} & a_{12} & a_{11} & a_{10}\\
a_{24} & a_{23} & a_{22} & a_{21} & a_{20} & \\
& a_{24} & a_{23} & a_{22} & a_{21} & a_{20}%
\end{bmatrix}
\\
&  \vdots\\
M_{10}  &  =%
\begin{bmatrix}
a_{03} & a_{02} & a_{01} & a_{00}%
\end{bmatrix}
\end{align*}
In the above ${\operatorname*{dp}}$ stands for the widely used notion of
determinant polynomial (see Definition \ref{def:dp} and the following example).
\end{enumerate}
\end{example}

\section{Review on subresultant theory of two polynomials}

\label{sec:review}

In this section, we review the notion of subresultants of two polynomials,
Sylvester's theorem and Habicht's theorem on them. We will do so using a
\textbf{new indexing} scheme for them. The reason is that the new indexing
will facilitate their generalization to several polynomials. Thus, we strongly
encourage the readers (even those readers who are familiar with the classical
theory) to read this section in order to get familiar with the new indexing
scheme. In the rest of the paper, let $\mathcal{Z}$ denote an integral domain
such as $\mathbb{Z}$, $\mathbb{Q}$, $\mathbb{Z}\left[  a\right]  $ and so on.


\subsection{Subresultants for two polynomials}

\begin{definition}
[Determinant polynomial of matrix]\label{def:dp}Let
$M\in\mathcal{Z}^{p\times q}$ where $p\leq q$ (that is, $M$
is square or wide).

\begin{itemize}
\item The \emph{determinant polynomial} of $M$, written as
${\operatorname*{dp}}(M)$, is defined by
\[
{\operatorname*{dp}}(M)=\sum_{0\leq j\leq q-p}c_{j}x^{j}%
\]
where $c_{j}=\det\left[  M_{1}\ \cdots\ M_{p-1}\ M_{q-j}\right]  $ and $M_{k}$
stands for the $k$-th column of $M$.

\item The \emph{principal coefficient} of $\operatorname*{dp}(M)$, written as
$\operatorname*{pcdp}(M)$, is defined by
\[
{\operatorname*{pcdp}}\left(  M\right)  =\ \operatorname*{coeff}%
\nolimits_{x^{q-p} }\left(  \operatorname*{dp}\left(  M\right)  \right)
\]

\end{itemize}
\end{definition}

\begin{example}
Let
\[
M=\left[
\begin{array}
[c]{ccccc}%
m_{11} & m_{12} & m_{13} & m_{14} & m_{15}\\
m_{21} & m_{22} & m_{23} & m_{24} & m_{25}\\
m_{31} & m_{32} & m_{33} & m_{34} & m_{35}%
\end{array}
\right]
\]
Note $p=3$ and $q=5$. Thus

\begin{itemize}
\item ${\operatorname*{dp}}\left(  M\right)  =c_{2}x^{2}+c_{1}x^{1}+c_{0}
x^{0}$ where
\begin{align*}
c_{2}  &  =\det\left[  M_{1}\ M_{2}\ M_{5-2}\right]  =\det\left[
\begin{array}
[c]{cc|c}%
m_{11} & m_{12} & m_{13}\\
m_{21} & m_{22} & m_{23}\\
m_{31} & m_{32} & m_{33}%
\end{array}
\right] \\
c_{1}  &  =\det\left[  M_{1}\ M_{2}\ M_{5-1}\right]  =\det\left[
\begin{array}
[c]{cc|c}%
m_{11} & m_{12} & m_{14}\\
m_{21} & m_{22} & m_{24}\\
m_{31} & m_{32} & m_{34}%
\end{array}
\right] \\
c_{0}  &  =\det\left[  M_{1}\ M_{2}\ M_{5-0}\right]  =\det\left[
\begin{array}
[c]{cc|c}%
m_{11} & m_{12} & m_{15}\\
m_{21} & m_{22} & m_{25}\\
m_{31} & m_{32} & m_{35}%
\end{array}
\right]
\end{align*}

\item ${\operatorname*{pcdp}}\left(  M\right)  =c_{2}=\det\left[  M_{1}
\ M_{2}\ M_{5-2}\right]  =\det\left[
\begin{array}
[c]{cc|c}%
m_{11} & m_{12} & m_{13}\\
m_{21} & m_{22} & m_{23}\\
m_{31} & m_{32} & m_{33}%
\end{array}
\right]  $
\end{itemize}
\end{example}

\begin{definition}
[Coefficient matrix of a list of polynomials]Let $P=(P_{1},\ldots,P_{t})$
where
\[
P_{i}=\sum_{0\le j\le p_{i}}b_{ij}x^{j}\in\mathcal{Z}[x]
\]
and $p_{i}=\deg P_{i}$. Let $m=\max_{1\le i\le t} p_{i}$. Then the
\emph{coefficient matrix} of $P$, written as
$\textcolor{red}{{\operatorname*{cm}}(P)}$, is defined as the $t\times(m+1)$
matrix whose $(i,j)$-th entry is the coefficient of $P_{i}$ in the term
$x^{m+1-j}$.
\end{definition}

\begin{example}
\label{example:P-cm(P)} Let $P=(P_{1},P_{2},P_{3})$ where
\begin{align*}
P_{1}  &  =b_{03}x^{3}+b_{02}x^{2}+b_{01}x+b_{00}\\
P_{2}  &  =b_{13}x^{3}+b_{12}x^{2}+b_{11}x+b_{10}\\
P_{3}  &  =b_{22}x^{2}+b_{21}x+b_{20}%
\end{align*}
Thus
\begin{align*}
\operatorname{cm}(P)=\operatorname{cm}(P_{1},P_{2},P_{3})= \left[
\begin{array}
[c]{cccc}%
b_{03} & b_{02} & b_{01} & b_{00}\\
b_{13} & b_{12} & b_{11} & b_{10}\\
& b_{22} & b_{21} & b_{20}%
\end{array}
\right]
\end{align*}

\end{example}

\begin{notation}
[Determinant polynomial of a list of polynomials]Let $P=(P_{1},\ldots,P_{t})$
be such that $\operatorname*{cm}(P)$ is square or wide. Then we will use the
following short hand notations.

\begin{itemize}
\item $\operatorname*{dp}(P)=\operatorname*{dp}(\operatorname*{cm}(P))$,

\item ${\operatorname*{pcdp}}(P)={\operatorname*{pcdp}}(\operatorname*{cm}%
(P)).$
\end{itemize}
\end{notation}

\begin{example}
Let $P=(P_{1},P_{2},P_{3})$ be as in Example \ref{example:P-cm(P)}. Thus

\begin{itemize}
\item ${\operatorname*{dp}}(P) ={\operatorname*{dp}}({\operatorname*{cm}}(P))
={\operatorname*{dp}}\left[
\begin{array}
[c]{cccc}%
b_{03} & b_{02} & b_{01} & b_{00}\\
b_{13} & b_{12} & b_{11} & b_{10}\\
& b_{22} & b_{21} & b_{20}%
\end{array}
\right]  = c_{1}x+c_{0} $, where
\[
c_{1}=\det\left[
\begin{array}
[c]{cc|c}%
b_{03} & b_{02} & b_{01}\\
b_{13} & b_{12} & b_{11}\\
& b_{22} & b_{21}%
\end{array}
\right]  ,\ \ c_{0}={\det}\left[
\begin{array}
[c]{cc|c}%
b_{03} & b_{02} & b_{00}\\
b_{13} & b_{12} & b_{10}\\
& b_{22} & b_{20}%
\end{array}
\right]
\]

\item ${\operatorname*{pcdp}}(P)={\operatorname*{pcdp}}(\operatorname*{cm}%
(P))=c_{1}=\det\left[
\begin{array}
[c]{cc|c}%
b_{03} & b_{02} & b_{01}\\
b_{13} & b_{12} & b_{11}\\
& b_{22} & b_{21}%
\end{array}
\right]  $
\end{itemize}
\end{example}

\medskip

\noindent Next we recall the concept of subresultant for two univariate polynomials.

\begin{definition}
\label{def:sres_2polys} Let $F_{0},F_{1}\in\mathcal{Z}[x]$ with $\deg
(F_{i})=d_{i}$ and $d_{0}\leq d_{1}$. Let $0<k\leq d_{0}$.

\begin{itemize}
\item The $k$-\emph{subresultant} of $F_{0}$ and $F_{1}$, written as
$R_{k}(F_{0},F_{1})$, is defined by
\[
R_{k}(F_{0},F_{1})={\operatorname*{dp}}(x^{d_{1}-(d_{0}-k)-1}F_{0}
,\ldots,x^{0}F_{0},x^{k-1}F_{1},\ldots,x^{0}F_{1})
\]

\item The \emph{principal} \emph{coefficient} of $R_{k}(F_{0},F_{1})$, written
as $r_{k}(F_{0},F_{1})$, is defined by
\[
r_{k}(F_{0},F_{1})=\operatorname*{coeff}\nolimits_{x^{d_{0}-k}}\left(
R_{k}\left(  F_{0},F_{1}\right)  \right)
\]

\end{itemize}

\noindent One can extend the above definition to the case when $k=0$. In this
case, it is required that $d_{0}\neq d_{1}$. Then
\[
R_{0}(F_{0},F_{1})=a_{0d_{0}}^{d_{1}-d_{0}-1}F_{0}
\]
where $a_{0d_{0}}$ is the leading coefficient of $F_{0}$.
\end{definition}

\begin{example}
Let
\begin{align*}
F_{0}  &  =a_{03}x^{3}+a_{02}x^{2}+a_{01}x+a_{00}\\
F_{1}  &  =a_{14}x^{4}+a_{13}x^{3}+a_{12}x^{2}+a_{11}x+a_{10}%
\end{align*}
Let $k=2$. Then
\[
P=\left(  x^{2}F_{0},x^{1}F_{0},x^{0}F_{0},x^{1}F_{1},x^{0}F_{1}\right)
\]
Thus

\begin{itemize}
\item The $2$-\emph{subresultant} of $F_{0}$ and $F_{1},$ written as
$R_{2}(F_{0},F_{1})$, is
\begin{align*}
R_{2}(F_{0},F_{1})  &  ={\operatorname*{dp}}(P)\\
&  ={\operatorname*{dp}}({\operatorname*{cm}}(P))\\
&  ={\operatorname*{dp}}\left[
\begin{array}
[c]{cccccc}%
a_{03} & a_{02} & a_{01} & a_{00} &  & \\
& a_{03} & a_{02} & a_{01} & a_{00} & \\
&  & a_{03} & a_{02} & a_{01} & a_{00}\\\hline
a_{14} & a_{13} & a_{12} & a_{11} & a_{10} & \\
& a_{14} & a_{13} & a_{12} & a_{11} & a_{10}%
\end{array}
\right] \\
&  =\det\left[
\begin{array}
[c]{ccccc}%
a_{03} & a_{02} & a_{01} & a_{00} & \\
& a_{03} & a_{02} & a_{01} & a_{00}\\
&  & a_{03} & a_{02} & a_{01}\\\hline
a_{14} & a_{13} & a_{12} & a_{11} & a_{10}\\
& a_{14} & a_{13} & a_{12} & a_{11}%
\end{array}
\right]  x+{\det}\left[
\begin{array}
[c]{ccccc}%
a_{03} & a_{02} & a_{01} & a_{00} & \\
& a_{03} & a_{02} & a_{01} & \\
&  & a_{03} & a_{02} & a_{00}\\\hline
a_{14} & a_{13} & a_{12} & a_{11} & \\
& a_{14} & a_{13} & a_{12} & a_{10}%
\end{array}
\right]
\end{align*}

\item The \emph{principal} \emph{coefficient} of $R_{2}(F_{0},F_{1})$, written
as $r_{2}(F_{0},F_{1})$, is
\begin{align*}
r_{2}(F_{0},F_{1})  &  =\operatorname*{coeff}\nolimits_{x^{1}}\left(
R_{2}\left(  F_{0},F_{1}\right)  \right) \\
&  =\det\left[
\begin{array}
[c]{ccccc}%
a_{03} & a_{02} & a_{01} & a_{00} & \\
& a_{03} & a_{02} & a_{01} & a_{00}\\
&  & a_{03} & a_{02} & a_{01}\\\hline
a_{14} & a_{13} & a_{12} & a_{11} & a_{10}\\
& a_{14} & a_{13} & a_{12} & a_{11}%
\end{array}
\right]
\end{align*}

\end{itemize}
\end{example}

\begin{remark}
\ 

\begin{itemize}
\item Note that we are using a new indexing for subresultants. For instance,
$R_{k}(F_{0},F_{1})$ in the new indexing would have been indexed as
$R_{d_{0}-k}(F_{0},F_{1})$ in the classical indexing.

\item Note that we are using the terminologies \textquotedblleft subresultant"
and \textquotedblleft principal coefficient" as convention. In other
literatures, readers may also see the terminologies \textquotedblleft
subresultant" and \textquotedblleft principal subresultant coefficient" (e.g.,
\cite{Collins:1976a,1993_Mishra}), or \textquotedblleft polynomial
subresultant" and \textquotedblleft scalar subresultant" (e.g.,
\cite{2003_von_zur_Gathen_Lucking}), \textquotedblleft subresultant
polynomial" and \textquotedblleft subresultant" (e.g., \cite{2006_DAndrea_Krick_Szanto})
as their alternatives.
\end{itemize}
\end{remark}

\subsection{Sylvester's theory for two polynomials}

Sylvester~\cite{1853_Sylvester} gave an elegant relationship between
subresultants and the gcd of two given polynomials.

\begin{theorem}
[Sylvester's Theorem]Given $F_{0},F_{1}\in\mathcal{Z}[x]$ with $\deg
(F_{i})=d_{i}$. Then the following claims are equivalent:

\begin{itemize}
\item $\deg\gcd(F_{0},F_{1})=i$;

\item $r_{d_{0}}(F_{0},F_{1})=\cdots=r_{d_{0}-i+1}(F_{0},F_{1})=0$ and
$r_{d_{0}-i}(F_{0},F_{1})\ne0$.
\end{itemize}

\noindent Furthermore, we have:\;\;\; $\deg\gcd(F_{0},F_{1}%
)=i\ \ \ \Longleftrightarrow\ \ \ \gcd(F_{0},F_{1})=R_{d_{0}-i}(F_{0},F_{1}).
$
\end{theorem}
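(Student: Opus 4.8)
The plan is to extract one half of each equivalence from a cheap divisibility observation and the other half from an injectivity (rank) analysis of the generalized Sylvester matrix attached to $R_{d_0-i}$; the forward direction of the last assertion falls out of the same analysis, and its converse then reduces to the first equivalence. The basic tool is the elementary identity
\[
\operatorname{dp}(P_1,\ldots,P_t)=\det\bigl[\;C_1\ \cdots\ C_{t-1}\ \ (P_1,\ldots,P_t)^{\mathsf T}\;\bigr],
\]
valid whenever $\operatorname{cm}(P)$ is square or wide, where $C_1,\ldots,C_{t-1}$ are the first $t-1$ columns of $\operatorname{cm}(P)$ and the last column is the column whose entries are the polynomials $P_1,\ldots,P_t$; it follows from multilinearity of the determinant in its last column (the terms with a repeated column drop out) once one writes that column as $\sum_\ell x^\ell\cdot(\text{column of the coefficients of }x^\ell\text{ in the }P_i)$. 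Expanding this determinant for $R_k(F_0,F_1)$ along the polynomial column yields a B\'ezout-type identity $R_k=U_kF_0+V_kF_1$ with $\deg U_k\le d_1-d_0+k-1$ and $\deg V_k\le k-1$, so every common divisor of $F_0$ and $F_1$---in particular $\gcd(F_0,F_1)$---divides $R_k$ for every $k$.

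Write $g=\deg\gcd(F_0,F_1)$. Because $\deg R_k\le d_0-k$ while $\gcd(F_0,F_1)\mid R_k$, we must have $R_k=0$, hence $r_k=0$, whenever $d_0-k<g$, i.e.\ for every $k\in\{d_0-g+1,\ldots,d_0\}$: this is the ``vanishing'' half of the equivalence. Conversely, if $r_{d_0-i}\ne0$ then $R_{d_0-i}$ has degree exactly $i$ and is divisible by $\gcd(F_0,F_1)$, so $g\le i$; and $g<i$ would put $d_0-g$ into the range $\{d_0-i+1,\ldots,d_0\}$, which---once we know $r_{d_0-g}\ne0$ from the crux below---contradicts the hypothesis $r_{d_0}=\cdots=r_{d_0-i+1}=0$. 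Thus the two bullets are equivalent modulo the following claim.

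The crux is to show that $g=\deg\gcd(F_0,F_1)$ forces $r_{d_0-g}\ne0$ and that $R_{d_0-g}$ is a nonzero scalar multiple of $\gcd(F_0,F_1)$. I would work over the fraction field $\mathcal K$ of $\mathcal Z$ and factor $F_0=G\bar F_0$, $F_1=G\bar F_1$ with $G=\gcd(F_0,F_1)$, $\deg G=g$, and $\gcd(\bar F_0,\bar F_1)=1$ in $\mathcal K[x]$. By the definition of $\operatorname{pcdp}$, $r_{d_0-g}$ is, up to sign, the determinant of the $\mathcal K$-linear map
\[
\Phi\colon(U,V)\longmapsto\bigl(\text{coefficients of }x^{d_0+d_1-g-1},\ldots,x^{g}\text{ in }UF_0+VF_1\bigr),\qquad \deg U\le d_1-g-1,\ \ \deg V\le d_0-g-1,
\]
whose source and target both have dimension $d_0+d_1-2g$. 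If $\Phi(U,V)=0$ then $UF_0+VF_1$ has degree $\le g-1$, so $(U\bar F_0+V\bar F_1)\,G$ has degree $\le g-1$; since $\deg G=g$ this forces $U\bar F_0+V\bar F_1=0$, hence $\bar F_0\mid V$, and since $\deg V<d_0-g=\deg\bar F_0$ we get $V=0$ and then $U=0$. Therefore $\Phi$ is injective between equidimensional spaces, so $r_{d_0-g}\ne0$; consequently $R_{d_0-g}$ has degree exactly $g$, and being divisible by $G$ it equals $G$ times the ratio of the leading coefficients of $R_{d_0-g}$ and $G$, a nonzero constant. Combining this with the preceding two paragraphs yields the stated equivalence, and specializing to $i=g$ gives $\gcd(F_0,F_1)=R_{d_0-i}(F_0,F_1)$ up to the usual scaling ambiguity of the gcd, while the converse of that last assertion reduces to the first equivalence. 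The boundary case $i=d_0$ is handled separately by the convention $R_0=a_{0d_0}^{\,d_1-d_0-1}F_0$ (which presupposes $d_0\ne d_1$).

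I expect the crux to be the only genuine obstacle: the vanishing half and all the bookkeeping are purely formal consequences of the B\'ezout identity, whereas the crux is where the coprimality of the cofactors $\bar F_0=F_0/G$ and $\bar F_1=F_1/G$ together with an honest injectivity argument for the generalized Sylvester matrix are really needed---the case $g=0$ being exactly the classical fact that $\operatorname{Res}(F_0,F_1)\ne0$ if and only if $\gcd(F_0,F_1)=1$.
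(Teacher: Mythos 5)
The paper does not actually prove this statement: it is quoted as Sylvester's classical result (cited to \cite{1853_Sylvester}), and the closest argument in the paper is the proof of its generalization (Theorem~\ref{thm:gsylvester}), which combines the B\'ezout-type expansion of Lemma~\ref{lem:sres_multipoly} with the Kakie--Ho rank results (Theorem~\ref{thm:Kakie_Ho}). Your argument is the standard direct proof of the two-polynomial case and its core is correct: expanding $\operatorname{dp}$ along the polynomial column gives $R_k=U_kF_0+V_kF_1$ with $\deg U_k\le d_1-d_0+k-1$ and $\deg V_k\le k-1$, so $\gcd(F_0,F_1)\mid R_k$ and hence $R_k=0$ (so $r_k=0$) whenever $d_0-k<g:=\deg\gcd(F_0,F_1)$; and your identification of $r_{d_0-g}$ with $\pm\det\Phi$, together with the injectivity argument using the coprimality of $F_0/G$ and $F_1/G$ and the bound $\deg V<\deg(F_0/G)$, correctly yields $r_{d_0-g}\ne0$ and $R_{d_0-g}=cG$ for a nonzero constant $c$. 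In effect you replace the rank machinery the paper imports from Kakie and Ho by a self-contained coprime-cofactor injectivity argument; that is the classical route, and it is cleaner for $n=1$, while the paper's route is chosen because the rank statements extend uniformly to $n+1$ polynomials.

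The one genuine soft spot is the claim that the converse of the ``furthermore'' reduces to the first equivalence. It does not: knowing $\gcd(F_0,F_1)=R_{d_0-i}(F_0,F_1)$ does not give $r_{d_0-i}\ne0$, because a defective subresultant sitting above the last block can be nonzero and equal to the gcd while its principal coefficient vanishes. Concretely, for $F_0=x^3$, $F_1=x^4+x$ one has $\gcd(F_0,F_1)=x$ and $R_{d_0-2}=R_1=\operatorname{dp}(xF_0,\,F_0,\,F_1)=x$ with $r_1=0$, so the right-hand side of the ``furthermore'' holds with $i=2$ although $\deg\gcd(F_0,F_1)=1$; that implication is false as literally stated (the paper's review statement is loose here). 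The provable content, which you do establish, is the equivalence of the two bullet conditions together with the implication $\deg\gcd(F_0,F_1)=i\Rightarrow\gcd(F_0,F_1)=R_{d_0-i}(F_0,F_1)$ up to a nonzero constant; a correct converse must be phrased via the bullet conditions (if $r_{d_0}=\cdots=r_{d_0-i+1}=0$ and $r_{d_0-i}\ne0$, then $\gcd=R_{d_0-i}$), not via the equality of polynomials. Finally, your deferral of the boundary case $i=d_0$ to the convention $R_0=a_{0d_0}^{d_1-d_0-1}F_0$ is fine when $d_0<d_1$ (then $r_0=a_{0d_0}^{d_1-d_0}\ne0$ and $R_0$ is a constant multiple of $F_0$), and when $d_0=d_1$ the paper's own definition leaves $R_0$ undefined, so nothing more can be asked of you there.
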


\subsection{Habicht's theory for two polynomials}

Habicht~\cite{Habicht:48} gave two intrinsic relationships among subresultants
of two polynomials. The first is the similarity of a subresultant with the
pseudo-remainder of its two consecutive subresultants and the second is the
similarity of a subresultant with the subresultant of two others. In this
paper, we will generalize the first result to several polynomials. Therefore,
we reproduce the first result below.

\begin{theorem}
[Habicht's theorem]\label{thm:structure_2poly} Given $F_{0}$ and $F_{1}$ of
degrees $d_{0}$ and $d_{1}=d_{0}+1$, with formal coefficients, let
$R_{-1}:=F_{1}$, $r_{-1}:=1$. Then for $k=0,\ldots,d_{0}-1$,
\[
r_{k-1}^{2}R_{k+1}=\mathrm{prem}(R_{k-1},R_{k})
\]

\end{theorem}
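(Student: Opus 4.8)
The plan is to prove Habicht's theorem for two polynomials by the classical degree-and-specialization argument. First I would set up the relevant objects: with $\deg F_0 = d_0$ and $\deg F_1 = d_0+1$, the chain $R_{d_0}, R_{d_0-1}, \ldots, R_1, R_0$ plus the conventions $R_{-1} = F_1$, $r_{-1} = 1$; note that with the new indexing $R_k$ has degree $d_0 - k$ in $x$ (generically), and $r_k = \operatorname{pcdp}(R_k)$ is the leading coefficient in that generic situation. The target identity $r_{k-1}^2 R_{k+1} = \operatorname{prem}(R_{k-1}, R_k)$ is an identity in $\mathcal{Z}[a][x]$ with $\mathcal{Z} = \mathbb{Z}$ and $a$ the formal coefficient vectors, so it suffices to prove it there; I would work in the fraction field $\mathbb{Q}(a)$ where all leading coefficients are nonzero, prove the identity there, and then clear denominators, checking that the result is a polynomial identity (both sides visibly lie in $\mathbb{Z}[a][x]$).

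The core step is a determinantal/linear-algebra computation. In $\mathbb{Q}(a)[x]$, $R_{k-1}$ and $R_k$ are polynomials of degrees $d_0 - k + 1$ and $d_0 - k$ with leading coefficients $r_{k-1}$ and $r_k$. The pseudo-remainder $\operatorname{prem}(R_{k-1}, R_k) = r_k^{\,\delta} R_{k-1} - Q R_k$ for an appropriate quotient $Q$ and exponent $\delta$ (here $\delta = 2$ since $\deg R_{k-1} - \deg R_k = 1$, so two division steps are needed), and it has degree at most $d_0 - k - 1$. The key is to express $R_{k+1}$ as a specific $\mathcal{Z}[a]$-linear combination of $x^j F_0$ and $x^j F_1$ whose coefficient matrix is a submatrix of the one defining $R_{k-1}$ and $R_k$, and then recognize $r_{k-1}^2 R_{k+1}$ as arising from row operations (adding multiples of the rows that produce $R_k$ to those that produce $R_{k-1}$) inside that matrix — this is exactly the classical fact that one step of the subresultant PRS reduces $R_{k-1}$ modulo $R_k$ up to the predicted scalar factor. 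I would make this precise by writing both sides as determinant polynomials of explicitly described matrices and showing the matrices are related by elementary operations (or by invoking the standard ``gap-free'' structure of the subresultant chain for $F_1$ of degree $d_0+1$, where no degree gaps occur).

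The cleanest route is probably to reduce to the truly classical Habicht theorem as it appears in standard references, but since the excerpt only gives the statement (Theorem~\ref{thm:structure_2poly}) without a cited proof, I would instead argue from first principles: (i) show $\deg_x \operatorname{prem}(R_{k-1}, R_k) \le d_0 - k - 1 = \deg_x R_{k+1}$, so both sides are polynomials of the same degree bound; (ii) show that $r_{k-1}^2 R_{k+1} - \operatorname{prem}(R_{k-1}, R_k)$ is divisible, in $\mathbb{Q}(a)[x]$, by $\gcd(F_0, F_1)$ considerations or more directly vanishes by a dimension count on the underlying Sylvester-type matrices; (iii) match leading coefficients. The main obstacle I anticipate is step (ii): carefully identifying the quotient $Q$ in the pseudo-division and verifying that subtracting $Q R_k$ from $r_k^2 R_{k-1}$ lands exactly on $r_{k-1}^2 R_{k+1}$ — this requires tracking the precise row/column bookkeeping in the $\operatorname{dp}$ matrices and using the genericity (formal coefficients, $d_1 = d_0+1$) to guarantee no exceptional degree drops occur. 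Once the matrix identity is established over $\mathbb{Q}(a)$, specialization back to $\mathbb{Z}[a]$ is immediate because every term in sight is a polynomial in the $a_{ij}$ with integer coefficients.
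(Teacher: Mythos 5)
First, a point of reference: the paper does not prove Theorem~\ref{thm:structure_2poly} at all. It sits in the review section and is attributed to Habicht \cite{Habicht:48}, so there is no internal proof to compare against and your attempt must stand on its own. Your framing moves are fine: with $d_1=d_0+1$ the degrees $\deg R_j=d_0-j$ are generically attained, all principal coefficients $r_j$ are nonzero over $\mathbb{Q}(a)$, the pseudo-division exponent is indeed $2$, and an identity proved over $\mathbb{Q}(a)$ descends to $\mathbb{Z}[a][x]$ because both sides already live there.

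The genuine gap is that the central step is never carried out. Your step (ii) \emph{is} Habicht's theorem: showing that $r_k^2R_{k-1}-QR_k$ lands exactly on $r_{k-1}^2R_{k+1}$. Deferring this as ``tracking the precise row/column bookkeeping,'' and asserting it is ``exactly the classical fact that one step of the subresultant PRS reduces $R_{k-1}$ modulo $R_k$ up to the predicted scalar factor,'' is circular --- that classical fact is the statement to be proved. Of the two concrete routes you sketch, the first (divisibility by $\gcd(F_0,F_1)$) is vacuous, since over $\mathbb{Q}(a)$ the generic gcd is $1$; the second (a dimension count) can be made to work, but only if you actually prove that any $P=AF_0+BF_1$ with $\deg A\le k+1$, $\deg B\le k$ and $\deg P\le d_0-k-1$ is generically a scalar multiple of $R_{k+1}$ (this needs a non-degeneracy/rank argument for the relevant Sylvester submatrix, valid for formal coefficients), observe that $\operatorname{prem}(R_{k-1},R_k)=r_k^2R_{k-1}-QR_k$ inherits such a Bézout representation with exactly these degree bounds from those of $R_{k-1}$ and $R_k$, and then pin down the scalar as $r_{k-1}^2$ by an explicit cofactor computation --- the analogue of Lemma~\ref{lem:sres_multipoly}-3, which your step (iii) ``match leading coefficients'' silently presupposes but never supplies. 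You also do not treat the boundary case $k=0$, where $R_{-1}=F_1$, $r_{-1}=1$ are conventions rather than determinant polynomials of the same shape and the identity reduces to $R_1=\operatorname{prem}(F_1,F_0)$. For a model of what the missing bookkeeping looks like, see the paper's proof of its own Theorem~\ref{thm:ghabicht}, which combines the cofactor expansion of Lemma~\ref{lem:sres_multipoly} with multilinearity and alternation of determinant polynomials and with Lemma~\ref{lem:dp_prem}. As written, your text is a plausible plan for a proof, not a proof.
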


\section{Generalization of subresultants to several polynomials}

\label{sec:gen_sres}

In this section, we describe a natural and obvious extension of subresultant
of two polynomials to several polynomials. The main objectives are two folds:
(1) fix notations and notions and (2) review some relevant previous results.

There are several different formulation of the subresultants of two
polynomials such as Sylvester-type~\cite{1853_Sylvester},
Bezout-type~\cite{white:09,2004_WANG_HOU}, and
Barnett-type~\cite{1971_BARNETT}. In this paper, we choose the Sylvester-type
for generalization because the Sylvester form has a nice single-determinant
structure and its entries are merely the coefficients of given polynomials,
which is convenient for theory development.

\begin{notation}
\ 

\begin{itemize}

\item $P(d_{0},n)=\{(\delta_{1},\ldots,\delta_{n})\in\mathbb{N}^{n}%
:\,|\delta|=\delta_{1}+\cdots+\delta_{n}\leq d_{0}\}$;

\item $\mathcal{F}_{k}=x^{\delta_{k}-1}F_{k},\ldots,x^{0}F_{k}$ where
$\delta_{k}\in\mathbb{N}$;

\item $c\left(  \delta\right)  =\#\operatorname{col}\operatorname{cm}\left(
\mathcal{F}_{1},\ldots,\mathcal{F}_{n}\right)  . $
\end{itemize}
\end{notation}

\begin{example}
Let $d=(3,3,4)$. Then
\[
P(d_{0}%
,n)=\{\ (3,0),\ (2,1),\ (1,2),\ (0,3),\ (2,0),\ (1,1),\ (0,2),\ (1,0),\ (0,1),\ (0,0)\ \}
\]
Choose $\delta=(1,1)\in P(3,2)$. We have
\[
\mathcal{F}_{1}=x^{0}F_{1},\quad\mathcal{F}_{2}=x^{0}F_{2}
\]
Thus
\begin{align*}
\operatorname{cm}\left(  \mathcal{F}_{1},\mathcal{F}_{2}\right)   &
=\operatorname{cm}\ (x^{0}F_{1},x^{0}F_{2}) =%
\begin{bmatrix}
& a_{13} & a_{12} & a_{11} & a_{10}\\
a_{24} & a_{23} & a_{22} & a_{21} & a_{20}%
\end{bmatrix}
\\
c\left(  \delta\right)   &  =\#\operatorname{col}%
\begin{bmatrix}
& a_{13} & a_{12} & a_{11} & a_{10}\\
a_{24} & a_{23} & a_{22} & a_{21} & a_{20}%
\end{bmatrix}
=5
\end{align*}

\end{example}

\begin{definition}
[Subresultant]\label{def:sres_npoly} Let $\delta\in P(d_{0},n)$.

\begin{itemize}
\item The $\delta$-\emph{subresultant} of $F$, written as $R_{\delta}\left(
F\right)  $, is defined by
\[
R_{\delta}\left(  F\right)  ={\operatorname*{dp}}\operatorname{cm}\left(
\mathcal{F}_{0},\ldots,\mathcal{F}_{n}\right)
\]
where again%
\begin{equation}
\label{eqs:delta0}\delta_{0}=\left\{
\begin{array}
[c]{ll}%
c\left(  \delta\right)  -d_{0} & \text{if\ \ }c\left(  \delta\right)  \geq
d_{0}\\
1 & \text{else}%
\end{array}
\right.
\end{equation}

\item The \emph{principal} \emph{coefficient} of $R_{\delta}\left(  F\right)
$, written as $r_{\delta}(F)$, is defined by
\[
r_{\delta}(F)=\operatorname*{coeff}\nolimits_{x^{d_{0}-\left\vert
\delta\right\vert }}\left(  R_{\delta}\left(  F\right)  \right)
\]

\end{itemize}
\end{definition}

\begin{remark}
In the above, the particular expression for $\delta_{0}$ is chosen because it
naturally extends the formulation of subresultants for two polynomials.
Roughly speaking, such choice of $\delta_{0}$ makes the submatrix of
$\operatorname{cm}\left(  \mathcal{F}_{0},\ldots,\mathcal{F}_{n}\right)  $
involving the coefficients of $F_{0}$ the widest block while keeping the size
of the matrix $\operatorname{cm}\left(  \mathcal{F}_{0},\ldots,\mathcal{F}%
_{n}\right)  $ as small as possible.
\end{remark}

\begin{example}
Let $d=(3,3,4)$ and $\delta=\left(  1,1\right)  $. Note%
\[
c\left(  \delta\right)  =\#\operatorname{col}\operatorname{cm}\left(
\mathcal{F}_{1},\mathcal{F}_{2}\right)  =5\geq3=d_{0}\ \text{and}%
\ \ \delta_{0}=5-3=2
\]
Therefore
\begin{align*}
R_{(1,1)}(F) &  ={\operatorname*{dp}}%
\begin{bmatrix}
a_{03} & a_{02} & a_{01} & a_{00} & \\
& a_{03} & a_{02} & a_{01} & a_{00}\\
& a_{13} & a_{12} & a_{11} & a_{10}\\
a_{24} & a_{23} & a_{22} & a_{21} & a_{20}%
\end{bmatrix}
=\det%
\begin{bmatrix}
a_{03} & a_{02} & a_{01} & a_{00}\\
& a_{03} & a_{02} & a_{01}\\
& a_{13} & a_{12} & a_{11}\\
a_{24} & a_{23} & a_{22} & a_{21}%
\end{bmatrix}
x+\det%
\begin{bmatrix}
a_{03} & a_{02} & a_{01} & \\
& a_{03} & a_{02} & a_{00}\\
& a_{13} & a_{12} & a_{10}\\
a_{24} & a_{23} & a_{22} & a_{20}%
\end{bmatrix}
\\
r_{(1,1)}(F) &  =\det%
\begin{bmatrix}
a_{03} & a_{02} & a_{01} & a_{00}\\
& a_{03} & a_{02} & a_{01}\\
& a_{13} & a_{12} & a_{11}\\
a_{24} & a_{23} & a_{22} & a_{21}%
\end{bmatrix}
\end{align*}

\end{example}

\section{Generalization of Sylvester's theory to several polynomials}

The classical Sylvester's theory establishes an elegant relationship between
subresultants and the gcd of two given polynomials, yielding an elegant
algorithm. In this section, we \emph{generalize Sylvester's theory to several
polynomials} and obtain an elegant relationship between generalized
subresultants and the gcd of several polynomials, yielding an \emph{elegant
algorithm}.

\subsection{Main result}

\begin{definition}
[Graded lexicographic order]Let $\delta,\gamma\in\mathbb{N}^{n}$. We say
$\delta\succ_{\mathrm{glex}}\gamma$ if one of the two occurs:

\begin{itemize}
\item $|\delta|\;>\;|\gamma|$, or

\item $|\delta|\;=\;|\gamma|$ and the first non-zero entry of $\delta-\gamma$
is positive.
\end{itemize}
\end{definition}

\begin{notation}
$P(d_{0},n)=[(\delta_{1},\ldots,\delta_{n}):\,\delta_{1}+\cdots+\delta_{n}\le
d_{0}]$ where the tuples are decreasing under $\succ_{\mathrm{glex}}$.
\end{notation}


\begin{theorem}
[Generalized Sylvester's Theorem]\label{thm:gsylvester} Let $d\in
\mathbb{N}^{n+1}$ where $d_{0}=\min d$. Let $P(d_{0},n)=[\delta^{1}, \ldots,
\delta^{p}]$. Then we have
\[
\gcd\left(  F\right)  =\left\{
\begin{array}
[c]{lll}%
R_{\delta^{1}}\left(  F\right)  & \text{if} & r_{{\delta^{1}}}\left(
F\right)  \neq0\\
R_{{\delta^{2}}}\left(  F\right)  & \text{else if} & r_{{\delta^{2}}}\left(
F\right)  \neq0\\
\ \ \ \vdots & \ \ \ \vdots & \ \ \ \vdots\\
R_{{\delta^{p}}}\left(  F\right)  & \text{else if} & r_{{\delta^{p}}}\left(
F\right)  \neq0
\end{array}
\right.
\]

\end{theorem}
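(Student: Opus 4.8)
The plan is to reduce the generalized statement to the classical Sylvester theorem for two polynomials, applied to the pair consisting of $F_0$ and a suitable generic combination of the remaining polynomials, and then to check that the generalized subresultants $R_{\delta}(F)$ defined via $\operatorname{cm}(\mathcal{F}_0,\ldots,\mathcal{F}_n)$ encode exactly the same rank/degree information as the classical subresultant chain of that pair. Concretely, I would first establish the ``degree detection'' half: I claim that $\deg\gcd(F)=d_0-|\delta^j|$ if and only if $r_{\delta^1}(F)=\cdots=r_{\delta^{j-1}}(F)=0$ and $r_{\delta^j}(F)\neq 0$, where $\delta^1,\ldots,\delta^p$ is the $\succ_{\mathrm{glex}}$-decreasing enumeration of $P(d_0,n)$. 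The key linear-algebra fact underlying this is that the vanishing of $r_{\delta}(F)$ for all $\delta$ with a fixed value $|\delta|=k$ is equivalent to a drop in the rank of the full coefficient matrix of the module generated by $F_0,\ldots,F_n$ in degrees $<$ something — i.e., it detects that the ideal/module $(F_0,\ldots,F_n)$ contains no polynomial of degree $d_0-k$ that is ``in the expected row span,'' which in turn is equivalent to $\deg\gcd(F)\ge d_0-k+1$. Since the $\succ_{\mathrm{glex}}$ order processes all $\delta$ with a given $|\delta|$ before any $\delta$ with smaller $|\delta|$, walking down the list and stopping at the first nonzero $r_{\delta^j}$ isolates the correct degree.

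Next I would establish the ``gcd expression'' half: once $j$ is the first index with $r_{\delta^j}(F)\neq 0$, I must show $R_{\delta^j}(F)=\gcd(F)$ (up to the usual nonzero scalar, or exactly, depending on normalization conventions in force). Here the natural route is: (i) $R_{\delta^j}(F)$ is, by construction, a $\mathcal{Z}[x]$-linear combination of $F_0,\ldots,F_n$ (each row of $\operatorname{cm}(\mathcal{F}_0,\ldots,\mathcal{F}_n)$ is $x^{i}F_k$ for some $i,k$, and $\operatorname{dp}$ of a matrix is a fixed integer-coefficient combination of its rows — the standard ``last column expansion'' identity), hence $\gcd(F)\mid R_{\delta^j}(F)$; (ii) conversely $\deg R_{\delta^j}(F)=d_0-|\delta^j|=\deg\gcd(F)$ by the degree-detection step together with $r_{\delta^j}(F)\neq 0$ (which certifies the leading coefficient of $R_{\delta^j}(F)$ is nonzero, so the degree is exactly $d_0-|\delta^j|$, not lower); (iii) a degree argument plus the fact that $R_{\delta^j}(F)$ is a nonzero multiple of $\gcd(F)$ of the same degree forces $R_{\delta^j}(F)\sim\gcd(F)$. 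Step (i) is the routine determinant-polynomial lemma (the same one that makes classical subresultants lie in the ideal), and I would cite or quickly restate it; step (ii) is where the combinatorial bookkeeping of $\delta_0$ from \eqref{eqs:delta0} matters, since I need $c(\delta^j)$ and $\delta_0$ to line up so that $\operatorname{cm}(\mathcal{F}_0,\ldots,\mathcal{F}_n)$ is exactly wide-by-one in the right way for $\operatorname{dp}$ to have degree $d_0-|\delta^j|$.

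The main obstacle I anticipate is the degree-detection equivalence in full generality — specifically, proving that the \emph{simultaneous} vanishing of all $r_{\delta}(F)$ with $|\delta|=k$ (as opposed to a single one) is what corresponds to $\deg\gcd(F)\ge d_0-k+1$, and that no ``gap'' phenomenon can make an intermediate $r_{\delta^j}(F)$ spuriously nonzero while the gcd degree is already smaller. For two polynomials this is exactly the classical subresultant gap structure; for several polynomials one must argue that the collection $\{R_{\delta}(F):|\delta|=k\}$ spans (over $\mathcal{Z}$-combinations with appropriate $x$-shifts) precisely the degree-$(\le d_0-k)$ part of the module $(F_0,\ldots,F_n)$, so that all of them vanish iff that graded piece is degenerate iff $\gcd$ has degree $>d_0-k$. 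I would handle this by a direct dimension count on $\operatorname{cm}$ of the full shifted family, reducing to: the rank of $\operatorname{cm}(x^{e_1}F_1,\ldots)$-type matrices equals the total number of rows minus the degree of the gcd, a statement that follows from the coprimality-to-gcd factorization $F_k=\gcd(F)\cdot \tilde F_k$ with $\gcd(\tilde F_0,\ldots,\tilde F_n)=1$ and an application of the two-polynomial Sylvester theorem (or its Bézout-identity consequence) to witness the rank. Everything else — the ideal-membership of $R_{\delta^j}(F)$, the leading-coefficient identification via $r_{\delta^j}(F)$, and the final degree-matching argument — I expect to be straightforward given the classical results recalled in Section~\ref{sec:review}.
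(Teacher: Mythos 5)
Your overall skeleton is sound and close in spirit to the paper's: the ideal-membership of $R_{\delta}(F)$ via row expansion of the determinant polynomial is the paper's Lemma~\ref{lem:sres_multipoly}; the vanishing of every $R_{\delta}$ with $|\delta|>d_{0}-\deg\gcd F$ is exactly Case~1 of Lemma~\ref{lem:vanishing_wR_delta}; and your divisibility-plus-degree argument showing that \emph{any} $\delta$ at the critical level $|\delta|=d_{0}-\deg\gcd F$ with $r_{\delta}\neq0$ already gives $R_{\delta}\sim\gcd F$ is a nice streamlining (it would let you bypass Case~2 of Lemma~\ref{lem:vanishing_wR_delta}, which the paper proves separately). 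However, there is a genuine gap at the step you yourself flag as the main obstacle.

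Concretely, your degree-detection equivalence needs the existence half: if $\deg\gcd F=d_{0}-k$, then some $\delta$ with $|\delta|=k$ has $r_{\delta}(F)\neq0$. Your proposed justification --- a rank count ``$\operatorname{rank}=\#\text{rows}-\deg\gcd F$'' for the full shifted coefficient matrix, obtained from $F_{i}=\gcd F\cdot\widetilde{F}_{i}$ and the two-polynomial Sylvester theorem --- does not deliver this. Knowing the rank of the big matrix does not produce a $\delta$-shaped submatrix $M_{\overline{\delta}}$ (contiguous shifts $x^{\delta_{i}-1}F_{i},\ldots,F_{i}$ of each $F_{i}$, with $\delta_{0}$ forced by \eqref{eqs:delta0}) whose determinant polynomial is nonzero: $\operatorname{dp}$ can vanish even when the selected rows are linearly independent (the leading columns may be rank-deficient), and membership of $\gcd F$ in the row span of such a submatrix does not force $\operatorname{dp}\neq0$ either. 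Likewise, your claim that $\{R_{\delta}:|\delta|=k\}$ spans the degree-$\le d_{0}-k$ part of the module is unsubstantiated and is essentially equivalent to the statement being proved. This is precisely where the paper invests its effort: it uses Kakie's rank formulas together with Ho's results that $\gcd F$ is the determinant polynomial of a maximal independent set of rows and that row dependencies propagate upward within a block (Theorem~\ref{thm:Kakie_Ho}, parts 3--4), and then performs a recursive block-by-block row selection to show that the explicit index $\theta=(e_{0}-e_{1},\ldots,e_{n-1}-e_{n})$ of gcd-degree drops satisfies $R_{\theta}\sim\gcd F$ (Lemma~\ref{lem:wR_theta}). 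Until you supply an argument of this kind --- or, alternatively, a degree-bounded B\'ezout identity matched to the $\delta$-shape plus an argument that the corresponding leading square minor is nonsingular --- the proof is incomplete.
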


\begin{remark}
Note $\delta^{p}=(0,\ldots,0)$, in turn $R_{\delta^{p}}(F)=F_{0}$ and
$r_{\delta^{p}}\left(  F\right)  =a_{0d_{0}}$. As the result, $r_{{\delta^{p}%
}}\left(  F\right)  $ is always non-zero. Hence the last condition
$r_{{\delta^{p}}}\left(  F\right)  \neq0$ is vacuously true and thus can be omitted.
\end{remark}

\begin{example}
Let $n=1$ and $d=\left(  d_{0},d_{1}\right)  $. Then $P(d_{0},1)=[d_{0}%
,d_{0}-1,\ldots,1,0]$. By the generalized Sylvester's theorem (Theorem
\ref{thm:gsylvester}), we have
\[
\gcd\left(  F\right)  =\left\{
\begin{array}
[c]{lll}%
R_{d_{0}}\left(  F\right)  & \text{if} & r_{d_{0}}\left(  F\right)  \neq0\\
R_{d_{0}-1}\left(  F\right)  & \text{else if} & r_{d_{0}-1}\left(  F\right)
\neq0\\
\ \ \ \vdots & \ \vdots & \ \ \ \vdots\\
R_{{1}}\left(  F\right)  & \text{else if} & r_{{1}}\left(  F\right)  \neq0\\
R_{{0}}\left(  F\right)  & \text{else} &
\end{array}
\right.
\]
which is exactly the classical Sylvester's theorem, with the observation that
$\deg R_{k}(F)=d_{0}-k$.
\end{example}

\begin{example}
\label{ex:gsylvester} Let $d=(3,3,4)$. By the generalized Sylvester's theorem
(Theorem \ref{thm:gsylvester}), we have
\[
\gcd\left(  F\right)  =\left\{
\begin{array}
[c]{lll}%
R_{{(3,0)}}\left(  F\right)  & \text{if} & r_{{(3,0)}}\left(  F\right)
\neq0\\
R_{{(2,1)}}\left(  F\right)  & \text{else if} & r_{{(2,1)}}\left(  F\right)
\neq0\\
R_{{(1,2)}}\left(  F\right)  & \text{else if} & r_{{(1,2)}}\left(  F\right)
\neq0\\
R_{{(0,3)}}\left(  F\right)  & \text{else if} & r_{{(0,3)}}\left(  F\right)
\neq0\\
R_{{(2,0)}}\left(  F\right)  & \text{else if} & r_{{(2,0)}}\left(  F\right)
\neq0\\
R_{{(1,1)}}\left(  F\right)  & \text{else if} & r_{{(1,1)}}\left(  F\right)
\neq0\\
R_{{(0,2)}}\left(  F\right)  & \text{else if} & r_{{(0,2)}}\left(  F\right)
\neq0\\
R_{{(1,0)}}\left(  F\right)  & \text{else if} & r_{{(1,0)}}\left(  F\right)
\neq0\\
R_{{(0,1)}}\left(  F\right)  & \text{else if} & r_{{(0,1)}}\left(  F\right)
\neq0\\
R_{{(0,0)}}\left(  F\right)  & \text{else} &
\end{array}
\right.
\]

\end{example}

\begin{remark}
Note that
\[
\deg R_{\delta}(F)=c(\delta)-(|\delta|+\delta_{0})=(d_{0}+\delta_{0}%
)-(|\delta|+\delta_{0})=d_{0}-|\delta|.
\]
Thus the parametric gcd of a given degree could come in several different
expressions. For instance, in Example \ref{ex:gsylvester}, the parametric gcd of
degree $1$ comes in three different expressions: $R_{{(2,0)}}\left(  F\right)
,\ R_{{(}1,{1)}}\left(  F\right)  \ \,$and $R_{{(0,2)}}\left(  F\right)  $.
Roughly put, different expressions correspond to different \textquotedblleft
paths\textquotedblright\ toward degree $1$, that is,

\begin{itemize}
\item if $\ \ \deg(F_{0},F_{1})=1$ \ and $\ \deg(F_{0},F_{1},F_{2})=1$,
\ \ then $\ \ \gcd(F)\ =\ R_{{(2,0)}}\left(  F\right)  $;

\item if $\ \ \deg(F_{0},F_{1})=2$ \ and $\ \deg(F_{0},F_{1},F_{2})=1$,
\ \ then $\ \ \gcd(F)\ =\ R_{{(1,1)}}\left(  F\right)  $;

\item if $\ \ \deg(F_{0},F_{1})=3$ \ and $\ \deg(F_{0},F_{1},F_{2})=1$,
\ \ then $\ \ \gcd(F)\ =\ R_{{(0,2)}}\left(  F\right)  $.
\end{itemize}

\noindent It naturally raises a question: Can some of these expressions (and the
corresponding cells) be combined
into one? Answering that question is beyond the scope of this paper, and
thus is left for the future research.
\end{remark}

\subsection{Proof of the generalized Sylvester's theorem
(Theorem~\ref{thm:gsylvester})}

This subsection is devoted to proving the generalized Sylvester's theorem. The
proof is divided into three steps. First, we show that a subresultant can be
written as the combination of the input polynomials with the leading
coefficients of the co-factors explicitly given. Then we decompose the result
of the generalized Sylvester's theorem into two lemmas and show their
correctness, respectively. Based on the two lemmas, we provide a proof of the
generalized Sylvester's theorem.

\medskip

\begin{notation}
We begin by introducing a few notations.

\begin{itemize}
\item $\Delta_{d}:=\{\mu=(\mu_{0},\ldots,\mu_{n})\in\mathbb{N}^{n+1}%
\backslash\{\boldsymbol{0}\}: \mu_{0}>0,\ |\mu|\le\max\limits_{\substack{i\ge
0\\\delta_{i}\ne0}}(d_{i}+\delta_{i})=d_{0}+\mu_{0}\}$

\item For $\mu\in\Delta_{d}$, $M_{\mu}(F):={\operatorname*{cm}}(x^{\mu_{0}%
-1}F_{0},\ldots,x^{0}F_{0},\ldots\ldots,x^{\mu_{n}-1}F_{n},\ldots,x^{0}F_{n})$;

\item For $\mu\in\Delta_{d}$, $\widetilde{R}_{\mu}(F):=\operatorname{dp}
M_{\mu}(F)$ and $\widetilde{r}_{\mu}(F)$ is the principal coefficient of
$\widetilde{R}_{\mu}(F)$.
\end{itemize}
\end{notation}

Below, we will show that $\widetilde{R}_{\mu}$ can be written as a combination
of the input polynomials, which is expected. Furthermore, we provide an
explicit expression for the leading coefficients of the co-factors of the
input polynomials in the combination. For the sake of simplicity, when $F$ is
clear from the context, we abbreviate~$\widetilde{R}_{\mu}(F)$ and
$\widetilde{r}_{\mu}(F)$ as $\widetilde{R}_{\mu}$ and $\widetilde{r}_{\mu}$, respectively.

\begin{lemma}
\label{lem:sres_multipoly} Let $\mu\in\Delta_{d}$. Then there exist
$H_{0},\ldots,H_{n}$ such that

\begin{enumerate}
\item $\widetilde{R}_{\mu}(F)=H_{0}F_{0}+\cdots+H_{n}F_{n}$,

\item $\deg H_{i}\leq\mu_{i} -1$, and

\item for $i\ge1$, $\operatorname*{coeff}(H_{i},x^{\mu_{i}-1}) =
(-1)^{\sigma_{i}}\widetilde{r}_{\tilde{\mu}}$
where

\begin{enumerate}
\item $\sigma_{i}=1+\mu_{i}+\cdots+\mu_{n}$

\item $\tilde{\mu}=(\mu_{0},\ldots,\mu_{i-1},\mu_{i}-1,\mu_{i+1},\ldots
,\mu_{n})$.
\end{enumerate}
\end{enumerate}
\end{lemma}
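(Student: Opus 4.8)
The plan is to realize $\widetilde R_{\mu}(F)=\operatorname{dp}M_{\mu}(F)$ as a \emph{single} determinant whose last column is a column of polynomials, and then Laplace-expand along that column; the resulting coefficients of the $F_i$ are the desired $H_i$. Write $M=M_{\mu}(F)$, let $p=|\mu|$ be its number of rows and $q$ its number of columns. Among the rows $x^{\mu_0-1}F_0,\ldots,x^0F_0,\ldots,x^{\mu_n-1}F_n,\ldots,x^0F_n$ the one of highest degree is $x^{\mu_0-1}F_0$, of degree $d_0+\mu_0-1$ — this uses precisely the defining condition $\max_{i:\mu_i\neq0}(d_i+\mu_i)=d_0+\mu_0$ of $\Delta_d$ — so $q=d_0+\mu_0$, and $p\le q$ by the same definition, so $\operatorname{dp}M$ makes sense. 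Group the rows into blocks: for $i=0,\ldots,n$ put $s_i=\mu_0+\cdots+\mu_{i-1}$, and write $G_1,\ldots,G_p$ for the list of $p$ polynomials whose coefficient rows form $M$, so that $G_{s_i+t}=x^{\mu_i-t}F_i$ for $1\le t\le\mu_i$.

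First I would establish the identity $\operatorname{dp}M=\det[\,M_1\ \cdots\ M_{p-1}\ v\,]$, where $v$ is the column whose $r$-th entry is the polynomial $\sum_{k=1}^{q}x^{q-k}M_{rk}$, which (since $\deg G_r\le q-1$) is just $G_r$. Indeed the $x^j$-coefficient of $\operatorname{dp}M$ is $\det[M_1\ \cdots\ M_{p-1}\ M_{q-j}]$ for $0\le j\le q-p$, and this determinant vanishes for $q-p<j<q$ because then $M_{q-j}$ repeats one of the first $p-1$ columns; summing over all $0\le j<q$ and using multilinearity in the last column gives the identity. Expanding the right-hand determinant along its last ($p$-th) column yields
\[
\widetilde R_{\mu}(F)=\sum_{r=1}^{p}(-1)^{r+p}G_r\,A_r=\sum_{i=0}^{n}H_iF_i,\qquad H_i:=\sum_{t=1}^{\mu_i}(-1)^{s_i+t+p}x^{\mu_i-t}A_{s_i+t},
\]
where $A_r\in\mathcal Z$ is the minor of $[M_1\ \cdots\ M_{p-1}]$ obtained by deleting its $r$-th row. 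This proves item~1; item~2 follows since the $A_{s_i+t}$ are constants and the exponents $\mu_i-t$ are at most $\mu_i-1$.

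For item~3, fix $i\ge1$. The coefficient of $x^{\mu_i-1}$ in $H_i$ comes only from $t=1$, hence equals $(-1)^{s_i+1+p}A_{s_i+1}$. Deleting row $s_i+1$ — the row of $x^{\mu_i-1}F_i$ — from $M$ produces exactly $M_{\tilde\mu}(F)$ with $\tilde\mu=(\mu_0,\ldots,\mu_{i-1},\mu_i-1,\mu_{i+1},\ldots,\mu_n)$; since $i\ge1$ we still have $\tilde\mu_0=\mu_0>0$, and by the $\Delta_d$ condition $M_{\tilde\mu}(F)$ still has $q$ columns (and $p-1$ rows). Therefore $A_{s_i+1}$ is the determinant of the first $p-1$ columns of $M_{\tilde\mu}(F)$, which is exactly its principal coefficient $\widetilde r_{\tilde\mu}(F)$. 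Finally, $s_i+p=(\mu_0+\cdots+\mu_{i-1})+(\mu_0+\cdots+\mu_n)\equiv\mu_i+\cdots+\mu_n\pmod2$, so $(-1)^{s_i+1+p}=(-1)^{1+\mu_i+\cdots+\mu_n}=(-1)^{\sigma_i}$, giving $\operatorname{coeff}(H_i,x^{\mu_i-1})=(-1)^{\sigma_i}\widetilde r_{\tilde\mu}(F)$, as claimed.

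The routine part is the determinant identity together with the Laplace expansion. The part that requires care — and where the hypotheses genuinely enter — is the size/parity bookkeeping: verifying that both $M_{\mu}(F)$ and the row-deleted matrix $M_{\tilde\mu}(F)$ have exactly $q=d_0+\mu_0$ columns (so that $A_{s_i+1}$ is the \emph{principal} coefficient of $\widetilde R_{\tilde\mu}$ and not a lower-degree one), which rests on $\mu\in\Delta_d$ and $i\ge1$, and pinning down the sign $(-1)^{s_i+1+p}=(-1)^{\sigma_i}$. I expect this bookkeeping to be the only real obstacle.
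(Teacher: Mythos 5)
Your proof is correct and follows essentially the same route as the paper's: both rewrite $\operatorname{dp}M_{\mu}$ via multilinearity as a single determinant whose last column consists of the polynomials $x^{j}F_{i}$, Laplace-expand along that column to obtain the $H_{i}$, and identify the cofactor of the row $x^{\mu_{i}-1}F_{i}$ with $\pm\widetilde{r}_{\tilde{\mu}}$. Your version merely spells out more explicitly the column-count check for $M_{\tilde{\mu}}$ and the parity computation $(-1)^{s_{i}+1+p}=(-1)^{\sigma_{i}}$, which the paper leaves implicit.
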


\begin{proof}
Let $\tilde{d}_{i}=\max_{\substack{k\ge0 \\\mu_{k}\ne0}}(d_{k}+\mu_{k}%
)-\mu_{i}$ for $i=0,\ldots,n$. Then
\[
M_{\mu}=\left[
\begin{array}
[c]{cccccc}%
a_{0\tilde{d}_{0}} & \cdots & a_{01} & a_{00} &  & \\
& \ddots &  & \ddots & \ddots & \\
&  & a_{0\tilde{d}_{0}} & \cdots & a_{01} & a_{00}\\\hline
a_{1\tilde{d}_{1}} & \cdots & a_{11} & a_{10} &  & \\
& \ddots &  & \ddots & \ddots & \\
&  & a_{1\tilde{d}_{1}} & \cdots & a_{11} & a_{10}\\\hline
\vdots & \vdots & \vdots & \vdots & \vdots & \vdots\\
\vdots & \vdots & \vdots & \vdots & \vdots & \vdots\\\hline
a_{n\tilde{d}_{n}} & \cdots & a_{n1} & a_{n0} &  & \\
& \ddots &  & \ddots & \ddots & \\
&  & a_{n\tilde{d}_{n}} & \cdots\cdots & a_{n1} & a_{n0}%
\end{array}
\right]
\]
where $a_{ij}=0$ when $j>d_{i}$. By applying the multi-linearity of
determinants, we have
\[
\widetilde{R}_{\mu}(F)=\operatorname*{dp}M_{\mu}=\det%
\begin{array}
[c]{l}%
\left[
\begin{array}
[c]{ccccc}%
a_{0\tilde{d}_{0}} & \cdots & \cdots & \cdots & x^{\mu_{0}-1}F_{0}\\
& \ddots & \ddots &  & \vdots\\
&  & a_{0\tilde{d}_{0}} & \cdots & x^{0}F_{0}\\\hline
\vdots & \vdots & \vdots & \vdots & \vdots\\\hline
a_{i\tilde{d}_{i}} & \cdots & \cdots & \cdots & x^{\mu_{i}-1}F_{i}\\
& \ddots & \ddots &  & \vdots\\
&  & a_{i\tilde{d}_{i}} & \cdots & x^{0}F_{i}\\\hline
\vdots & \vdots & \vdots & \vdots & \vdots\\\hline
a_{n\tilde{d}_{n}} & \cdots & \cdots & \cdots & x^{\mu_{n}-1}F_{n}\\
& \ddots & \ddots &  & \vdots\\
&  & a_{n\tilde{d}_{n}} & \cdots & x^{0}F_{n}%
\end{array}
\right] \\[-8pt]%
~~\underbrace{\hspace{10em}}_{\text{The first }|\mu|-1\text{ columns of
}M_{\mu}}%
\end{array}
\]
Taking the Laplace expansion on the last column, we have
\[
\widetilde{R}_{\mu}(F)=\sum_{i=0}^{n}\det%
\begin{array}
[c]{l}%
\left[
\begin{array}
[c]{ccccc}%
a_{0\tilde{d}_{}} & \cdots & \cdots & \cdots & 0\\
& \ddots & \ddots &  & \vdots\\
&  & a_{0\tilde{d}_{0}} & \cdots & 0\\\hline
\vdots & \vdots & \vdots & \vdots & \vdots\\\hline
a_{i\tilde{d}_{i}} & \cdots & \cdots & \cdots & x^{\mu_{i}-1}F_{i}\\
& \ddots & \ddots &  & \vdots\\
&  & a_{i\tilde{d}_{i}} & \cdots & x^{0}F_{i}\\\hline
\vdots & \vdots & \vdots & \vdots & \vdots\\\hline
a_{n\tilde{d}_{n}} & \cdots & \cdots & \cdots & 0\\
& \ddots & \ddots &  & \vdots\\
&  & a_{n\tilde{d}_{n}} & \cdots & 0
\end{array}
\right] \\[-8pt]%
~~\underbrace{\hspace{10em}}_{\text{The first }|\mu|-1\text{ columns of
}M_{\mu}}%
\end{array}
=\sum_{i=0}^{n}\det\left[
\begin{array}
[c]{ccccc}%
a_{0\tilde{d}_{0}} & \cdots & \cdots & \cdots & 0\\
& \ddots & \ddots &  & \vdots\\
&  & a_{0\tilde{d}_{0}} & \cdots & 0\\\hline
\vdots & \vdots & \vdots & \vdots & \vdots\\\hline
a_{i\tilde{d}_{i}} & \cdots & \cdots & \cdots & x^{\mu_{i}-1}\\
& \ddots & \ddots &  & \vdots\\
&  & a_{i\tilde{d}_{i}} & \cdots & x^{0}\\\hline
\vdots & \vdots & \vdots & \vdots & \vdots\\\hline
a_{n\tilde{d}_{n}} & \cdots & \cdots & \cdots & 0\\
& \ddots & \ddots &  & \vdots\\
&  & a_{n\tilde{d}_{n}} & \cdots & 0
\end{array}
\right]  F_{i}%
\]
Denote the coefficient of $F_{i}$ in the right-hand side of the above equation
by $H_{i}$ for $i=0,\ldots,n$. Then $\widetilde{R}_{\mu}=H_{0}F_{0}%
+\cdots+H_{n}F_{n}$, which implies $\widetilde{R}_{\mu}(F)\in\langle F\rangle
$. It is easy to see that $\deg H_{i}<\mu_{i}$.

Assume $H_{i}=\sum_{j=0}^{\mu_{i}-1}c_{ij}x^{j}$. When $i\ge1$, the principal
coefficient of $H_{i}$, i.e., the coefficient of $H_{i}$ in the term
$x^{\mu_{i}-1}$, is given by
\begin{equation}
c_{i(\mu_{i}-1)}=\det\left[
\begin{array}
[c]{cccccc}%
a_{0\tilde{d}_{0}} & \cdots & \cdots & \cdots & \cdots & 0\\
& \ddots & \ddots &  &  & \vdots\\
&  & a_{0\tilde{d}_{0}} & \cdots & \cdots & 0\\\hline
\vdots & \vdots & \vdots & \vdots & \vdots & \vdots\\\hline
a_{i\tilde{d}_{i}} & \cdots & \cdots & \cdots & \cdots & 1\\
& a_{i\tilde{d}_{i}} & \cdots & \cdots & \cdots & 0\\
&  & \ddots &  &  & \vdots\\
&  &  & a_{i\tilde{d}_{i}} & \cdots & 0\\\hline
\vdots & \vdots & \vdots & \vdots & \vdots & \vdots\\\hline
a_{n\tilde{d}_{n}} & \cdots & \cdots & \cdots & \cdots & 0\\
& \ddots & \ddots &  &  & \vdots\\
&  & a_{n\tilde{d}_{n}} & \cdots & \cdots & 0
\end{array}
\right]  =(-1)^{\sigma_{i}}\widetilde{r}_{\tilde{\mu}} \label{eqs:lc_ci}%
\end{equation}
where
\begin{enumerate}
[(a)]

\item $\sigma_{i}=1+\mu_{i}+\cdots+\mu_{n}$

\item $\tilde{\mu}=(\mu_{1},\ldots,\mu_{i-1},\mu_{i}-1,\mu_{i+1},\ldots
,\mu_{n})$.\vspace{-2em}
\end{enumerate}
\end{proof}

\smallskip

\begin{remark}
For $\delta\in P(d_{0},n)$ with $c(\delta)>d_{0}$, let $\overline{\delta}$ be
such that $R_{\delta}=\widetilde{R}_{\overline{\delta}}$, that is,
$\overline{\delta}=(\delta_{0},\delta_{1},\ldots,\delta_{n})$ where
$\delta_{0}$ is chosen as in \eqref{eqs:delta0}. It is easy to see that
$\overline{\delta}\in\Delta_{d}$. Thus there exist $H_{0},\ldots,H_{n}$ such that

\begin{enumerate}
\item $R_{\delta}(F)=H_{0}F_{0}+\cdots+H_{n}F_{n}$,

\item $\deg H_{i}\leq\delta_{i} -1$, and

\item for $i\ge1$, $\operatorname*{coeff}(H_{i},x^{\delta_{i}-1}) =
(-1)^{\sigma_{i}}\widetilde{r}_{\tilde{\delta}}$
where

\begin{enumerate}
\item $\sigma_{i}=1+\delta_{i}+\cdots+\delta_{n}$

\item $\tilde{\delta}=(\delta_{0},\ldots,\delta_{i-1},\delta_{i}%
-1,\delta_{i+1},\ldots,\delta_{n})$.
\end{enumerate}
\end{enumerate}
\end{remark}

The following theorem lists several relevant results on the generalized
Sylvester matrix. They were first presented by Kakie in \cite{1976_Kakie} and
further developed by Ho in \cite{Ho:89}.

\begin{theorem}
[Kakie \cite{1976_Kakie} and Ho \cite{Ho:89}.]\label{thm:Kakie_Ho} Let
$h=\deg\gcd F$ and $N_{k}\left(  F\right)  =M_{\mu}\left(  F\right)  $ where%
\[
\mu_{i}=m+\ell-d_{i}-k,\ \ \ \ \ \ \ \ m=\max\limits_{0\leq i\leq n}%
d_{i},\ \ \ \ \ \ell=\min\limits_{0\leq i\leq n}d_{i}%
\]
Then we have

\begin{enumerate}
\item \label{lem:rank_decrease}$\mathrm{rank}\,N_{k}(F)=\mathrm{rank}%
\,N_{0}(F)-k$ for $k\leq h$.\footnote[1]{It is allowed that $k<0$.}

\item \label{lem:rank_R0}$\mathrm{rank}\,N_{0}(F)=m+\ell-h$.

\item \label{lem:gcd}Suppose $U_{1},\ldots,U_{m+\ell-h}$ are the $m+\ell-h$
linearly independent row vectors of $N_{0}(F)$. Then
\[
\gcd F=\mathrm{dp}%
\begin{bmatrix}
U_{1}\\
\vdots\\
U_{m+\ell-h}%
\end{bmatrix}
\]
where the difference by a constant factor is allowed.

\item \label{lem:ind_rows}Denote the $i$-th block of $N_{0}(F)$ by $N_{0i}$
and the $j$-th row of $N_{0i}$ by $N_{0i,j}$ where $i=0,1,\ldots,n$. If
$N_{0i,j}$ can be written as a linear combination of the rows in
$N_{01},\ldots,N_{0(i-1)}$ and $N_{0i,j+1},\ldots,N_{0i,m+\ell-d_{i}}$, then
so is $N_{0i,j-1}$.
\end{enumerate}
\end{theorem}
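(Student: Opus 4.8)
The common thread is to pass from rows to polynomials: identifying a row vector $(c_{0},\ldots,c_{q-1})$ with $\sum_{j}c_{j}x^{q-1-j}$, the $\mathcal{Z}$-span of the rows of $N_{k}(F)=M_{\mu}(F)$, where $\mu_{i}=m+\ell-d_{i}-k$ and the matrix has $q=m+\ell-k$ columns, becomes the module $V_{k}:=\{\sum_{i}H_{i}F_{i}:\deg H_{i}\le\mu_{i}-1\}$, so that $\mathrm{rank}\,N_{k}(F)=\dim V_{k}$ (over the fraction field of $\mathcal{Z}$). Since $G:=\gcd F$ divides every $H_{i}F_{i}$ and $\deg(H_{i}F_{i})\le m+\ell-k-1$, we may write $V_{k}=G\cdot W_{k}$, where $W_{k}:=\{\sum_{i}H_{i}\bar F_{i}:\deg H_{i}\le\mu_{i}-1\}\subseteq\{Q:\deg Q\le m+\ell-k-h-1\}$ and $\bar F_{i}:=F_{i}/G$, so $\gcd(\bar F_{i})=1$, $\deg\bar F_{i}=d_{i}-h$, and --- crucially --- every summand $H_{i}\bar F_{i}$ attains the same degree bound $m+\ell-k-h-1$.

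The heart of the matter is the \textbf{Combination Lemma}: for $k\le h$ (so all $\mu_{i}\ge1$) the inclusion above is an equality, i.e., $W_{k}=\{Q:\deg Q\le m+\ell-k-h-1\}$. I would prove this by induction on the number of polynomials. The base case of two coprime polynomials is the classical fact that $\{H_{0}P_{0}+H_{1}P_{1}:\deg H_{0}<\nu_{0},\ \deg H_{1}<\nu_{1}\}$ equals $\{Q:\deg Q\le\nu_{0}+\deg P_{0}-1\}$ provided $\nu_{0}+\deg P_{0}=\nu_{1}+\deg P_{1}\ge\deg P_{0}+\deg P_{1}$; this is a dimension count whose kernel is the bounded syzygy $T\mapsto(TP_{1},-TP_{0})$, and the degree hypothesis is secured by $k\le h$. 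For the inductive step, put $\widehat G:=\gcd(\bar F_{0},\ldots,\bar F_{n-1})$ and $\widehat F_{i}:=\bar F_{i}/\widehat G$; since $\gcd(\widehat G,\bar F_{n})=1$, the two-polynomial case splits a given $Q$ of degree $\le m+\ell-k-h-1$ as $Q=\widehat G\,A+\bar F_{n}H_{n}$ with $\deg H_{n}\le\mu_{n}-1$ and $\deg A\le m+\ell-k-h-1-\deg\widehat G$, and the induction hypothesis applied to the coprime tuple $(\widehat F_{0},\ldots,\widehat F_{n-1})$ writes $A=\sum_{i<n}H_{i}\widehat F_{i}$ with $\deg H_{i}\le\mu_{i}-1$. \textbf{The main obstacle is exactly this lemma}: one must check that no degree budget overflows in the split or in the recursive call --- the split needs $\deg\widehat G+\deg\bar F_{n}\le m+\ell-k-h$, which follows from $\deg\widehat G\le\deg\bar F_{0}=d_{0}-h$ (here the assumption $d_{0}=\min d$ is used) and $\deg\bar F_{n}\le m-h$ together with $k\le h$, and the recursive instance must again satisfy its form of ``$k\le h$''. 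Granting the lemma, $\dim V_{k}=\dim W_{k}=m+\ell-k-h$; with $k=0$ this is claim~\ref{lem:rank_R0}, and subtracting the two formulas gives claim~\ref{lem:rank_decrease} for $k\le h$.

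For claim~\ref{lem:gcd}, let $U$ be the $(m+\ell-h)\times(m+\ell)$ matrix formed by the independent rows $U_{1},\ldots,U_{m+\ell-h}$; its row span, viewed in $\mathcal{Z}[x]$, is a submodule of $V_{0}$ of full dimension $m+\ell-h=\dim V_{0}$, hence equals $V_{0}=G\cdot\{Q:\deg Q\le m+\ell-1-h\}$. Now $\mathrm{dp}(U)$ is a $\mathcal{Z}$-linear combination of the rows of $U$, hence lies in $V_{0}$, and $\deg\mathrm{dp}(U)\le(m+\ell)-(m+\ell-h)=h$; but the only polynomials of degree $\le h$ inside $G\cdot\{Q:\deg Q\le m+\ell-1-h\}$ are the $\mathcal{Z}$-multiples of $G$. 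Therefore $\mathrm{dp}(U)=cG$, and $c\ne0$ once the rows are chosen so that $\mathrm{dp}(U)$ does not degenerate (equivalently, the first $m+\ell-h-1$ columns of $U$ are independent) --- a harmless choice since the conclusion is only asserted up to a constant factor.

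Claim~\ref{lem:ind_rows} then follows from a shift argument made clean by the Combination Lemma. Let $A_{i}$ be the $\mathcal{Z}$-span of the rows of blocks $0,\ldots,i-1$; applying the Combination Lemma to the subfamily $(F_{0},\ldots,F_{i-1})$ --- which again has $F_{0}$ of minimal degree and whose relevant shift parameter is non-positive, hence $\le$ the degree of its gcd --- yields $A_{i}=G_{i}\cdot\{Q:\deg Q\le m+\ell-1-\deg G_{i}\}$ with $G_{i}:=\gcd(F_{0},\ldots,F_{i-1})$. Consequently, if $v\in A_{i}$ and $\deg v\le m+\ell-2$ then $xv\in A_{i}$. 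Now suppose $N_{0i,j}=x^{\mu_{i}-j}F_{i}=b+\sum_{a=0}^{\mu_{i}-j-1}\lambda_{a}x^{a}F_{i}$ with $b\in A_{i}$; comparing degrees gives $\deg b\le m+\ell-j\le m+\ell-2$ (since $j\ge2$), hence $xb\in A_{i}$, and multiplying the relation by $x$ yields $N_{0i,j-1}=x^{\mu_{i}-j+1}F_{i}=xb+\sum_{a=1}^{\mu_{i}-j}\lambda_{a-1}x^{a}F_{i}\in A_{i}+\mathrm{span}\{F_{i},xF_{i},\ldots,x^{\mu_{i}-j}F_{i}\}$, which is precisely claim~\ref{lem:ind_rows}.
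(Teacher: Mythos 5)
The paper never proves this theorem: it is imported verbatim from Kakie \cite{1976_Kakie} and Ho \cite{Ho:89} and only used as an external ingredient in the proofs of Lemma~\ref{lem:wR_theta} and Lemma~\ref{lem:vanishing_wR_delta}. So there is no internal argument to compare against, and your proposal must stand on its own — which it essentially does. The dictionary ``row span of $N_k$ $\leftrightarrow$ the space $V_k=\{\sum_i H_iF_i:\deg H_i\le\mu_i-1\}$'' is the right reduction, and your Combination Lemma is the correct engine. I checked the degree bookkeeping you flag as the main obstacle: with $\ell=d_0$ (Notation~\ref{notation1}) the split condition $\deg\widehat G+\deg\bar F_n\le m+\ell-k-h$ does reduce exactly to $k\le h$, and the recursive instance for $(F_0,\ldots,F_{n-1})$ has shift parameter $k'=k-(m-\max_{i<n}d_i)\le k\le h\le h'$ with unchanged cofactor bounds $\mu_i$, so the hypothesis propagates (this also covers $k<0$, which the footnote requires). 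Claims~\ref{lem:rank_decrease} and \ref{lem:rank_R0} then follow from $\dim V_k=m+\ell-k-h$, and your derivation of claim~\ref{lem:ind_rows} from the shift-invariance of $A_i=G_i\cdot\{Q:\deg Q\le m+\ell-1-\deg G_i\}$ (the lemma applied to the truncated family, whose shift is $\le 0\le\deg G_i$) is correct.

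The one place you should tighten is claim~\ref{lem:gcd}. The theorem asserts $\operatorname{dp}(U)\sim\gcd F$ for an \emph{arbitrary} maximal independent set of rows, and your closing remark that the rows can be ``chosen'' so that $\operatorname{dp}(U)$ does not degenerate concedes ground you do not need to concede: nondegeneracy is automatic. Since the row span of $U$ equals $V_0=G\cdot\{Q:\deg Q\le m+\ell-h-1\}$, the projection of this span onto the coefficients of $x^{m+\ell-1},\ldots,x^{h+1}$ has one-dimensional kernel (exactly the constant multiples of $G$), so the first $m+\ell-h-1$ columns of $U$ are linearly independent; if all the minors $c_j=\det[U_1\cdots U_{p-1}\,U_{q-j}]$ vanished, every remaining column would lie in the span of those first $p-1$ columns and $U$ would have rank $p-1$, contradicting full row rank. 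Hence $\operatorname{dp}(U)=cG$ with $c\ne0$ for any admissible choice. A second, smaller point: some blocks can be empty ($\mu_i=0$), which happens only when $k=h=d_0$; there $\bar F_0$, and hence $\widehat G$, is constant, so the split degenerates harmlessly, but a sentence acknowledging this would make the induction airtight. With these two additions your plan is a complete proof.
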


For proving the generalized Sylvester's theorem (Theorem \ref{thm:gsylvester}%
), we identify two essential ingredients it consists of, and reformulate them
as the following two lemmas (i.e., Lemma \ref{lem:wR_theta} and Lemma
\ref{lem:vanishing_wR_delta}) whose correctness will be shown below.

\begin{notation}
\label{notation:gcd} \ 

\begin{itemize}
\item $G_{i}=\gcd(F_{0},\ldots,F_{i})$ with $G_{0}:=F_{0}$;

\item $e_{i}=\deg G_{i}$;

\item ${\theta}=(\theta_{1},\ldots,\theta_{n})$ with $\theta_{i}=e_{i-1}%
-e_{i}$ for $i=1,\ldots,n$.
\end{itemize}
\end{notation}

\begin{remark}
It is easy to see that $\theta\in P(d_{0},n)$.
\end{remark}

\begin{lemma}
\label{lem:wR_theta} $R_{{\theta}}$ is similar to $\gcd F$.
\end{lemma}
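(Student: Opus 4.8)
\textbf{Proof proposal for Lemma~\ref{lem:wR_theta}.}

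The plan is to show that $R_\theta$ and $\gcd F$ divide each other up to a nonzero constant (i.e.\ a unit in the fraction field of $\mathcal{Z}$), which is exactly what ``similar'' means here. There are two directions to establish, and they use the two structural facts already available in the excerpt.

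\emph{Direction 1: $\gcd F \mid R_\theta$.} By construction $\theta\in P(d_0,n)$, so Definition~\ref{def:sres_npoly} applies, and since I will check that $c(\theta)\ge d_0$ (this is forced by $\theta_i = e_{i-1}-e_i$ summing to $d_0-e_n\le d_0$, together with how $\delta_0$ is defined), the overline construction $\overline\theta=(\theta_0,\theta_1,\dots,\theta_n)\in\Delta_d$ is legitimate. Then the Remark following Lemma~\ref{lem:sres_multipoly} gives $R_\theta = H_0F_0+\cdots+H_nF_n$ for suitable cofactors $H_i$, so $R_\theta\in\langle F\rangle$ and hence $\gcd F\mid R_\theta$ in $\mathcal{Z}[x]$ (after passing to the fraction field if needed).

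\emph{Direction 2: $R_\theta \mid \gcd F$, with matching degree.} Here I would invoke Theorem~\ref{thm:Kakie_Ho}. First I need to match the indexing: the matrix $N_0(F)$ there is $M_\mu(F)$ with $\mu_i = m+\ell-d_i-0$, and I claim that the block-row structure of $M_{\overline\theta}(F)$ lines up with a maximal linearly independent set of rows of $N_0(F)$ --- intuitively, $\theta_i = e_{i-1}-e_i$ counts exactly how many ``new'' independent rows the block of $F_i$ contributes once the blocks of $F_0,\dots,F_{i-1}$ are already present, and part~\ref{lem:ind_rows} of Theorem~\ref{thm:Kakie_Ho} guarantees that these new independent rows can be taken to be the \emph{top} rows of each block, which is precisely the shape of $M_{\overline\theta}$. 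Counting gives a total of $\theta_0+\theta_1+\cdots+\theta_n$ rows; using $\delta_0 = c(\theta)-d_0$ and $\deg R_\theta = d_0-|\theta| = d_0-(d_0-e_n) = e_n = \deg\gcd F$, the degrees match. Then part~\ref{lem:gcd} of Theorem~\ref{thm:Kakie_Ho} identifies $\mathrm{dp}$ of any maximal independent row set with $\gcd F$ up to a constant, and since $R_\theta=\mathrm{dp}\,M_{\overline\theta}(F)$ is the $\mathrm{dp}$ of such a set, we get $R_\theta \doteq \gcd F$. Combined with Direction~1 and the degree equality, this forces $R_\theta$ to be similar to $\gcd F$.

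\emph{Main obstacle.} The delicate point is Direction~2: verifying that the specific rows selected by the subresultant template $M_{\overline\theta}(F)$ --- namely the top $\theta_i$ rows of the $i$-th block --- really do form a maximal linearly independent subset of the rows of $N_0(F)$, and that $R_\theta$ does not vanish identically. This requires carefully translating $\theta_i = e_{i-1}-e_i$ into a statement about ranks of the nested coefficient matrices (via $\mathrm{rank}$ increments of $\operatorname{cm}(\mathcal F_0,\dots,\mathcal F_i)$ as $i$ grows), and then applying part~\ref{lem:ind_rows} of Theorem~\ref{thm:Kakie_Ho} repeatedly to push the independent rows to the top of each block. I also need to double-check the edge cases where some $\theta_i=0$ (the $i$-th block contributes no rows) and where $c(\theta)<d_0$ (so $\delta_0=1$), confirming the construction and the degree bookkeeping still go through; I expect these to be routine once the main rank argument is in place.
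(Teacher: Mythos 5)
Your overall strategy (reduce to Theorem~\ref{thm:Kakie_Ho}) is the same as the paper's, but Direction~2 as written has a genuine gap: the rows of $M_{\overline\theta}(F)$ are \emph{not} a maximal linearly independent subset of the rows of $N_0(F)$, so part~\ref{lem:gcd} of Theorem~\ref{thm:Kakie_Ho} cannot be applied to $M_{\overline\theta}$ directly. Your row count $\theta_0+\theta_1+\cdots+\theta_n$ is generally strictly smaller than $\operatorname{rank}N_0(F)=m+\ell-\deg\gcd F$: e.g.\ for $d=(3,3,4)$ and $\theta=(1,0)$ one has $\theta_0=1$, so $M_{\overline\theta}$ has $2$ rows, while $\operatorname{rank}N_0(F)=4+3-2=5$. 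The correct statement, and the heart of the paper's proof, is that a maximal independent set consists of \emph{all} $m_n$ rows of the $0$-th block together with the \emph{last} $\theta_i$ rows of the $i$-th block for $i\ge1$ (part~\ref{lem:ind_rows} pushes dependencies upward, so the independent rows accumulate at the bottom of each block, matching the template $x^{\theta_i-1}F_i,\ldots,x^0F_i$ of $M_{\overline\theta}$ --- not the ``top'' rows as you state). This identifies the maximal-row submatrix as $M_{(m_n,\theta_1,\ldots,\theta_n)}$, not $M_{\overline\theta}$, and one then needs the additional bridging step $\widetilde{R}_{(m_n,\theta_1,\ldots,\theta_n)}=a_{0d_0}^{\,m_n-\theta_0}R_\theta$ (using $m_n\ge\theta_0$, and a separate treatment of the case $\theta_1=\cdots=\theta_n=0$) to transfer similarity from $\operatorname{dp}M_{(m_n,\theta_1,\ldots,\theta_n)}$ to $R_\theta$; this step is absent from your sketch.

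Moreover, the part you label the ``main obstacle'' and defer as routine is in fact the substance of the proof: one must apply parts~\ref{lem:rank_decrease} and~\ref{lem:rank_R0} of Theorem~\ref{thm:Kakie_Ho} recursively to the nested matrices $N_{-(m_n-m_i)}(F_0,\ldots,F_i)$ to show that the $i$-th block contributes exactly $e_{i-1}-e_i=\theta_i$ new independent rows, and then invoke part~\ref{lem:ind_rows} at each stage to place them at the bottom of the block. Without this recursion the claim that $M_{(m_n,\theta_1,\ldots,\theta_n)}$ (let alone $M_{\overline\theta}$) realizes a maximal independent row set is unsupported. Your Direction~1 ($R_\theta\in\langle F\rangle$ via the remark after Lemma~\ref{lem:sres_multipoly}) is correct but redundant once Direction~2 is done properly, since part~\ref{lem:gcd} already yields similarity, not mere divisibility.
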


\begin{proof}
Let $\overline{\theta}=(\theta_{0},\theta_{1},\ldots,\theta_{n})$ where
\[
\theta_{0}=\left\{
\begin{array}
[c]{ll}%
c(\theta)-d_{0} & \text{if}\ \exists_{i>0}\theta_{i}\neq0;\\
1 & \text{else}%
\end{array}
\right.
\]
Then $R_{\theta}=\widetilde{R}_{\overline{\theta}}$. Recall $N_{0}\left(
F\right)  =M_{\mu}\left(  F\right)  $ where $\mu_{i}=m+\ell-d_{i}%
,\ m=\max\limits_{0\leq i\leq n}d_{i}$ and $\ell=\min\limits_{0\leq i\leq
n}d_{i}.$ By Theorem~\ref{thm:Kakie_Ho}-3, we only need to show that the
determinant polynomial of the matrix consisting of a maximal linearly
independent subset of the rows of $N_{0}(F)$ is similar to $\text{dp}%
\ M_{\overline{\theta}}$. In what follows, we will find such a maximal
linearly independent subset of rows in a recursive way.

Let $m_{i}=\max_{0\le j\le i}d_{j}$. One may observe that
\[
N_{0}(F)=\left[
\begin{array}
[c]{cccccc}%
a_{0d_{0}} & \cdots\cdots & a_{01} & a_{00} &  & \\
& \ddots &  & \ddots & \ddots & \\
&  & a_{0d_{0}} & \cdots & a_{01} & a_{00}\\\hline
\vdots & \vdots & \vdots & \vdots & \vdots & \vdots\\
\vdots & \vdots & \vdots & \vdots & \vdots & \vdots\\\hline
a_{nd_{n}} & \cdots & a_{n1} & a_{n0} &  & \\
& \ddots &  & \ddots & \ddots & \\
&  & a_{nd_{n}} & \cdots\cdots & a_{n1} & a_{n0}%
\end{array}
\right]  \hspace{-1.5em}%
\begin{array}
[c]{l}%
\left.
\begin{array}
[c]{l}%
\\
\\
\\
\end{array}
\right\}  m_{n}+\ell-d_{0}(=m_{n})~\text{rows}\\[15pt]%
\\[35pt]%
\left.
\begin{array}
[c]{l}%
\\
\\
\\
\end{array}
\right\}  m_{n}+\ell-d_{n}~\text{rows}%
\end{array}
\]
\noindent can be viewed as appending $m_{n}+\ell-d_{n}$ rows of the
coefficients of $F_{n}$ to the matrix
\[
\left[
\begin{array}
[c]{cccccc}%
a_{0d_{0}} & \cdots\cdots & a_{01} & a_{00} &  & \\
& \ddots &  & \ddots & \ddots & \\
&  & a_{0d_{0}} & \cdots & a_{01} & a_{00}\\\hline
\vdots & \vdots & \vdots & \vdots & \vdots & \vdots\\
\vdots & \vdots & \vdots & \vdots & \vdots & \vdots\\\hline
a_{(n-1)d_{n-1}} & \cdots & a_{(n-1)1} & a_{(n-1)0} &  & \\
& \ddots &  & \ddots & \ddots & \\
&  & a_{(n-1)d_{n-1}} & \cdots & a_{(n-1)1} & a_{(n-1)0}%
\end{array}
\right]  \hspace{-1.5em}%
\begin{array}
[c]{l}%
\left.
\begin{array}
[c]{l}%
\\
\\
\\
\end{array}
\right\}  m_{n}+\ell-d_{0}(=m_{n})~\text{rows}\\[15pt]%
\\[35pt]%
\left.
\begin{array}
[c]{l}%
\\
\\
\\
\end{array}
\right\}  m_{n}+\ell-d_{n-1}~\text{rows}%
\end{array}
\]
which is exactly $N_{-(m_{n}-m_{n-1})}(F_{0},\ldots,F_{n-1})$. Hence, the
number of linearly independent rows selected from the last block of $N_{0}(F)
$ is $\operatorname{rank}\,N_{0}(F)-\operatorname{rank}\,N_{-(m_{n}-m_{n-1}%
)}(F_{0},\ldots,F_{n-1})$ .

\medskip

\noindent By Theorem \ref{thm:Kakie_Ho}-\ref{lem:rank_decrease},
\begin{equation}
\operatorname{rank}\,N_{-(m_{n}-m_{n-1})}(F_{0},\ldots,F_{n-1}%
)=\operatorname{rank}\,N_{0}(F_{0},\ldots,F_{n-1})+(m_{n}-m_{n-1})
\label{eq:rank_R0_n-1}%
\end{equation}

\noindent Recall $G_{k}=\gcd(F_{0},\ldots,F_{k})$ and $e_{k}=\deg G_{k}$. By
Theorem \ref{thm:Kakie_Ho}-\ref{lem:rank_R0},
\begin{align}
\mathrm{rank}\,N_{0}(F)\ =\ \mathrm{rank}\,N_{0}(F_{0},\ldots,F_{n})  &
=m_{n}+\ell-e_{n}\label{eq:rank_R0_Fn}\\
\operatorname{rank}\,N_{0}(F_{0},\ldots,F_{n-1})  &  =m_{n-1}+\ell-e_{n-1}
\label{eq:rank_R0_Fn-1}%
\end{align}

\medskip

\noindent Combining \eqref{eq:rank_R0_n-1}-\eqref{eq:rank_R0_Fn-1}, we have
\[
N_{0}(F)-\operatorname{rank}\,N_{-(m_{n}-m_{n-1})}(F_{0},\ldots,F_{n-1}%
)=e_{n-1}-e_{n}%
\]
\noindent By Theorem \ref{thm:Kakie_Ho}-\ref{lem:ind_rows}, a maximal linearly
independent subset of rows for $N_{0}(F)$ consists of the last $e_{n-1}-e_{n}$
rows of the last block of $N_{0}(F)$ and a maximal linearly independent subset
of rows for $N_{-(m_{n}-m_{n-1})}(F_{0},\ldots,F_{n-1})$.

\medskip

\noindent Note that $N_{-(m_{n}-m_{n-1})}(F_{0},\ldots,F_{n-1})$ can be
obtained by appending $m_{n}-m_{n-1}$ rows of the coefficients of $F_{i}$ to
the $i$-th block of $N_{0}(F_{0},\ldots,F_{n-1})$ for $i=0,\ldots,n-1$ and its
rank is $\operatorname{rank}\,N_{0}(F_{0},\ldots,F_{n-1})+(m_{n}-m_{n-1})$.
Thus we can obtain a maximal linearly independent subset of rows for
$N_{-(m_{n}-m_{n-1})}(F_{0},\ldots,F_{n-1})$ by selecting the first
$m_{n}-m_{n-1}$ rows of the $0$-th block and a maximal linearly independent
subset of rows for $N_{0}(F_{0},\ldots,F_{n-1})$.

\medskip

\noindent From the above derivation, we see that $N_{0}(F)$ has a maximal
linearly independent subset of rows consisting~of:

\begin{itemize}
\item the last $e_{n-1}-e_{n}(=\theta_{n})$ rows of the last block,

\item the first $m_{n}-m_{n-1}$ rows of the $0$-th block, and

\item a maximal linearly independent subset of rows for $N_{0}(F_{0}%
,\ldots,F_{n-1})$.
\end{itemize}

\noindent With the same technique, we carry out the above procedure for
$N_{0}(F_{0},\ldots,F_{i})$ with $i=n-1,\ldots,1$ and finally obtain a maximal
linearly independent subset of rows for $N_{0}(F)$ which consists of:

\begin{itemize}
\item the last $e_{i-1}-e_{i}(=\theta_{i})$ rows of the $i$-th block for
$i=1,\ldots,n$,

\item the first $(m_{n}-m_{n-1})+\cdots+(m_{1}-m_{0})=m_{n}-m_{0}$ rows of the
$0$-th block, and

\item a maximal linearly independent subset of rows for $N_{0}(F_{0})$ whose
number of rows is exactly $d_{0}$.
\end{itemize}

\noindent Thus the number of rows chosen from the $0$-th block of $N_{0}(F)$
is $m_{n}-m_{0}+d_{0}=m_{n}$ (since $m_{0}=d_{0}$).

\medskip

\noindent One may count the total number of these rows which is $(e_{0}%
-e_{1})+\cdots+(e_{n-1}-e_{n})+m_{n}=e_{0}-e_{n}+m_{n}$. Utilizing the
relation $e_{0}=m_{0}=d_{0}$ and $e_{n}=\deg\gcd F$, we further simplify the
expression and obtain that the number of a maximal linearly independent subset
of rows for $N_{0}(F)$ is $m_{n}+d_{0}-\deg\gcd F$. By
Theorem~\ref{thm:Kakie_Ho}-\ref{lem:gcd}, we have $\widetilde{R}%
_{(m_{n},\theta_{1},\ldots,\theta_{n})}=\operatorname{dp}\,M_{(m_{n}%
,\theta_{1},\ldots,\theta_{n})}$ is similar to $\gcd F$. Next we show that
$\widetilde{R}_{(m_{n},\theta_{1},\ldots,\theta_{n})}$ is similar to
$\widetilde{R}_{\overline{\theta}}$ (i.e., $R_{\theta}$).

\medskip\noindent Note that
\[
m_{n}+d_{0}=\max_{k\ge0}d_{k}+d_{0}\ge\max_{k> 0}d_{k}+\max_{k> 0} \theta_{k}
\]
If there exists $i\ge1$ such that $\theta_{i}>0$, then
\[
m_{n}+d_{0}\ge\max_{k> 0}d_{k}+\max_{k> 0} \theta_{k}\geq\max_{\substack{k>0
\\\theta_{k}>0}}(\theta_{k}+d_{k})
\]
Hence $m_{n}\geq\max_{\substack{k>0 \\\theta_{k}>0}}(\theta_{k}+d_{k}%
)-d_{0}=\theta_{0}$. By the property of determinant polynomial, we have
\[
\widetilde{R}_{(m_{n},\theta_{1},\ldots,\theta_{n})}=a_{0d_{0}}^{m_{n}%
-\theta_{0}}\widetilde{R}_{(\theta_{0},\theta_{1},\ldots,\theta_{n}%
)}=a_{0d_{0}}^{m_{n}-\theta_{0}}\widetilde{R}_{\overline{\theta}}=a_{0d_{0}%
}^{m_{n}-\theta_{0}}R_{\theta}
\]
Otherwise, $\theta_{i}=0$ for $i=1,\ldots,n$ and thus
\[
\widetilde{R}_{(m_{n},0,\ldots,0)}=a_{0d_{0}}^{m_{n}-1}F_{1}=a_{0d_{0}}%
^{m_{n}-1}\widetilde{R}_{\overline{\theta}}=a_{0d_{0}}^{m_{n}-1}R_{\theta}
\]
For both cases, we have that $R_{\theta}$ is similar to $\gcd F$.
\end{proof}

\begin{lemma}
\label{lem:vanishing_wR_delta} For $\delta\in P(d_{0},n)$, if $\delta
\succ_{\mathrm{glex}}{\theta}$, then $R_{\delta}=0$.
\end{lemma}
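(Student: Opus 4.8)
The plan is to show that the matrix $M := \operatorname{cm}(\mathcal{F}_0,\ldots,\mathcal{F}_n)$, whose determinant polynomial is $R_\delta$, has linearly dependent rows over the fraction field $\mathcal{K}$ of $\mathcal{Z}$. Since $M$ will be seen to be square or wide, every coefficient of $\operatorname{dp} M$ is a maximal minor of $M$, so deficient row rank forces all of them to vanish, giving $R_\delta = \operatorname{dp} M = 0$. (Whether rank is read over $\mathcal{Z}$ or over $\mathcal{K}$ is immaterial, a minor vanishing over one iff over the other.)

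The first step is to extract from $\delta \succ_{\mathrm{glex}} \theta$ the one combinatorial fact the argument needs: there is an index $j \in \{1,\ldots,n\}$ with $\delta_1 + \cdots + \delta_j > \theta_1 + \cdots + \theta_j$. If $|\delta| > |\theta|$, take $j = n$; if $|\delta| = |\theta|$, let $j$ be the position of the first nonzero entry of $\delta - \theta$, which is positive, so $\sum_{i\le j}\delta_i = \sum_{i<j}\theta_i + \delta_j > \sum_{i<j}\theta_i + \theta_j = \sum_{i\le j}\theta_i$. Because $\theta_i = e_{i-1}-e_i$ and $e_0 = \deg F_0 = d_0$, the right-hand side telescopes to $\sum_{i\le j}\theta_i = d_0 - e_j$, so
\[
\delta_1 + \cdots + \delta_j \;>\; d_0 - e_j.
\]
In particular some $\delta_i$ with $i \ge 1$ is positive, hence $c(\delta) \ge d_0 + 1$ and $\delta_0 = c(\delta)-d_0 \ge 1$; moreover $M$ has exactly $\delta_0 + d_0$ columns and $\delta_0 + |\delta| \le \delta_0 + d_0$ rows (as $|\delta| \le d_0$), so $\operatorname{dp} M$ is indeed defined.

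Next I look at the submatrix $M'$ of $M$ formed by the rows coming from the shifts of $F_0, F_1, \ldots, F_j$, namely the coefficient vectors of the polynomials $x^{a}F_i$ with $0 \le i \le j$ and $0 \le a \le \delta_i - 1$; there are $\delta_0 + \delta_1 + \cdots + \delta_j$ of them. Each of these has degree at most $\delta_0 - 1 + d_0$: clear for $i = 0$, and for $1 \le i \le j$ it follows from $\delta_i - 1 + d_i \le c(\delta) - 1 = \delta_0 - 1 + d_0$. Over $\mathcal{K}[x]$ the polynomial $G_j = \gcd(F_0,\ldots,F_j)$ divides every $F_i$ with $i \le j$, so each row of $M'$ is the coefficient vector of a multiple of $G_j$ of degree at most $\delta_0 - 1 + d_0$; such polynomials form a $\mathcal{K}$-subspace of dimension $(\delta_0 - 1 + d_0 - e_j) + 1 = \delta_0 + d_0 - e_j$. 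By the displayed inequality, $M'$ has strictly more rows than this, namely $\delta_0 + (\delta_1 + \cdots + \delta_j) > \delta_0 + (d_0 - e_j)$; hence the rows of $M'$, and therefore the rows of $M$, are linearly dependent over $\mathcal{K}$, and $R_\delta = 0$ follows.

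The extraction of the index $j$ is the heart of the matter, and the passage from row-dependence to $\operatorname{dp} M = 0$ is routine. The step I would expect to demand the most care is the dimension bookkeeping in the last paragraph --- confirming that $M$ has precisely $\delta_0 + d_0$ columns and that every shift of $F_0,\ldots,F_j$ fits within them. This is exactly where the specific formula \eqref{eqs:delta0} for $\delta_0$ is used, and it is what makes the ambient dimension come out to $\delta_0 + d_0 - e_j$, so that the inequality $\delta_1 + \cdots + \delta_j > d_0 - e_j$ supplied by $\delta \succ_{\mathrm{glex}} \theta$ is precisely what is needed to force the dependence.
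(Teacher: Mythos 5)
Your proof is correct, and it takes a genuinely different route from the paper's. The paper splits the hypothesis $\delta\succ_{\mathrm{glex}}\theta$ into two cases: when $|\delta|>|\theta|$ it argues that $R_{\delta}$ lies in the ideal $\langle\gcd F\rangle$ (via the B\'ezout-type representation of Lemma \ref{lem:sres_multipoly}) yet has degree strictly below $\deg\gcd F$, forcing $R_{\delta}=0$; when $|\delta|=|\theta|$ it exhibits an explicit dependent row of $M_{\overline{\delta}}$, leaning on the Kakie--Ho results (Theorem \ref{thm:Kakie_Ho}) and on the maximal-independent-row structure built in the proof of Lemma \ref{lem:wR_theta}. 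You instead handle both cases at once by a single dimension count: extracting an index $j$ with $\delta_{1}+\cdots+\delta_{j}>d_{0}-e_{j}$ from the graded-lex hypothesis, observing that all $\delta_{0}+\delta_{1}+\cdots+\delta_{j}$ rows coming from the shifts of $F_{0},\ldots,F_{j}$ are coefficient vectors of multiples of $G_{j}$ of degree at most $\delta_{0}+d_{0}-1$, a $\mathcal{K}$-space of dimension only $\delta_{0}+d_{0}-e_{j}$, and concluding that the rows of $M$ are dependent, so every coefficient of $\operatorname{dp}M$ (a maximal minor) vanishes. Your bookkeeping checks out: under the hypothesis some $\delta_{i}>0$ with $i\geq1$, so $c(\delta)=\max_{k\geq1,\delta_{k}\neq0}(d_{k}+\delta_{k})>d_{0}$, $\delta_{0}=c(\delta)-d_{0}$, the matrix has $\delta_{0}+d_{0}$ columns, and each shift $x^{a}F_{i}$ with $i\leq j$, $a\leq\delta_{i}-1$ indeed fits. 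What your argument buys is brevity and self-containedness --- it needs neither Lemma \ref{lem:sres_multipoly} nor Theorem \ref{thm:Kakie_Ho} nor the row-selection analysis from Lemma \ref{lem:wR_theta}; what the paper's longer Case-2 argument buys is an explicit identification of which row is redundant, reusing machinery it has already set up for Lemma \ref{lem:wR_theta}, but for the purpose of this lemma your count suffices.
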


\begin{proof}
We consider two cases of $\delta\succ_{\text{glex}}{\theta}$.

\begin{enumerate}
[{Case} 1]

\item $|\delta|>|\gamma|$.

In this case, $\delta_{1}+\cdots+\delta_{n}>\theta_{1}+\cdots+\theta_{n}$.
Assume $\delta_{0}$ and $\theta_{0}$ are chosen as in \eqref{eqs:delta0},
respectively. By Definition \ref{def:sres_npoly},
\begin{align*}
\deg R_{\delta}  &  =(d_{0}+\delta_{0})-(\delta_{0}+\cdots+\delta_{n})\\
&  <d_{0}-(\theta_{1}+\cdots+\theta_{n})\\
&  =(d_{0}+\theta_{0})-(\theta_{0}+\cdots+\theta_{n})\\
&  =\deg R_{\theta}\\
&  =\deg\gcd F
\end{align*}
By Lemma \ref{lem:sres_multipoly}, $R_{\delta}=\widetilde{R}_{\overline
{\delta}}\in\langle F\rangle=\langle\gcd F\rangle$, which immediately implies
that $R_{\delta}=0$.

\item $|\delta|=|\gamma|$ and the first non-zero entry of $\delta-\gamma$ is positive.

\begin{enumerate}
[(1)]

\item In this case, there exists $i>0$ such that $\delta_{1}=\theta
_{1}\;\wedge\;\cdots\wedge\;\delta_{i-1}=\theta_{i-1}\;\wedge\;\delta
_{i}>\theta_{i}$. Since $\delta_{1}+\cdots+\delta_{n}\le d_{0}$, we have
$\delta_{i}\le d_{0}$, which indicates that $\theta_{i}<d_{0}$. Hence from
$(\theta_{i}+1)+d_{i}\le d_{0}+\max_{0\le j\le i}d_{j}$, we can derive that
the last $(\theta_{i}+1)$-th row in the $i$-th block of $M_{\overline{\delta}%
}$ is also in the matrix $N_{-(m_{n}-m_{i})}(F_{0},\ldots,F_{i})$, which
implies that it can be written as a linear combination of any maximal linearly
independent rows of $N_{-(m_{n}-m_{i})}(F_{0},\ldots,F_{i})$.

\item From the proof of Lemma \ref{lem:wR_theta}, $N_{-(m_{n}-m_{i})}%
(F_{0},\ldots,F_{i})$ has a maximal subset of linearly independent rows
consisting of:

\begin{enumerate}
[(a)]

\item the last $e_{j-1}-e_{j}(=\theta_{j})$ rows of the $j$-th block of
$N_{0}(F_{0},\ldots,F_{i})$ for $j=1,\ldots,i$, and

\item all the $m_{i}$ rows of the $0$-th block.
\end{enumerate}

\item Let $\delta_{0}$ be chosen as in \eqref{eqs:delta0} and $\overline
{\delta}=(\delta_{0},\delta_{1},\ldots,\delta_{n})$. Note that $N_{-(m_{n}%
-m_{i})}(F_{0},\ldots,F_{i})$ is the submatrix of $N_{0}(F_{0},\ldots,F_{n})$
by selecting its first $i$ blocks. Thus the last $(\theta_{i}+1)$-th row of
the $i$-th block of $M_{\overline{\delta}}$ for $i>0$ is a linear combination
of the rows described in (2).

\item Consider $N_{0}(F)$ and $M_{\overline{\delta}}(F)$. Denote their the
blocks consisting of the coefficients of $F_{i}$ by $N_{0i}$ and
$M_{\overline{\delta},i}$, respectively. Since the number of the rows in the
block $N_{0i}$ is $d_{0}+\max_{k\ge1}{d_{k}}-d_{i}$ and the number of rows in
the block $M_{\overline{\delta},i}$ is $\delta_{i}$, one can easily verify
that $d_{0}+\max_{k\ge1}{d_{k}}-d_{i}\ge\delta_{i}$. Therefore, $M_{\overline
{\delta}}(F)$ is a submatrix of $N_{0}(F)$.

\item Furthermore, we have

\begin{itemize}
\item the $0$-th block of $M_{\overline{\delta}}(F)$ is the $0$-th block of
$N_{0}(F)$ by deleting its first $m_{n}-\delta_{0}$ rows,

\item the $j$-th block of $M_{\overline{\delta}}(F)$ for $j=1,\ldots,i-1$ is
exactly the last $\theta_{j}$ rows of the $j$-th block of $N_{0}(F)$, and

\item the $i$-th block of $M_{\overline{\delta}}(F)$ is the last $\theta_{i}$
rows of the $i$-the block of $N_{0}(F)$ appending $\delta_{i}-\theta_{i}$ rows
which are linear combinations of the first $\delta_{0}+\theta_{1}%
+\cdots+\theta_{i}$ rows of $M_{\overline{\delta}}$.
\end{itemize}

\item From (5), one immediately concludes that $R_{\delta}=\operatorname*{dp}
M_{\overline{\delta}}(F)=0$.
\end{enumerate}
\end{enumerate}
\end{proof}

\noindent Now we are ready to prove the generalized Sylvester's theorem
(Theorem \ref{thm:gsylvester}). \medskip

\begin{proof}
[Proof of the generalized Sylvester's theorem (Theorem \ref{thm:gsylvester})]
Note that the theorem is equivalent to the following claim:
\[
\gcd(F)=R_{\delta}%
\]
where
\[
\delta=\max\limits_{\substack{\gamma\in P(d_{0},n) \\r_{\gamma}\ne0}}\gamma
\]
where again $\max$ is with respect to the ordering $\succ
_{\operatorname*{glex}}$. We will show the correctness of the claim.

\medskip\noindent By Lemma \ref{lem:wR_theta}, $\gcd(F)=R_{\theta}$. Moreover,
it is easy to see that $\theta\in P(d_{0},n)$. We will show $\delta=\theta$ by
disproving {$\delta\succ_{\text{glex}}\theta$} and $\theta\succ_{\text{glex}%
}\delta$.

\medskip\noindent If $\delta\succ_{\text{glex}}\theta$, by Lemma
\ref{lem:vanishing_wR_delta}, $R_{\delta}=0$. However, it contradicts the
condition for determining $\delta$.

\medskip\noindent If $\theta\succ_{\text{glex}}\delta$, by the condition for
determining $\delta$, we immediately have $r_{\theta} =0$. Assume $\delta_{0}$
and $\theta_{0}$ are chosen as in \eqref{eqs:delta0}, respectively. It follows
that
\begin{align}
\deg R_{\theta}  &  <(\theta_{0}+d_{0})-\sum_{i=0}^{n}\theta_{i}\nonumber\\
&  =d_{0}-\sum_{i=1}^{n}\theta_{i}\nonumber\\
&  =d_{0}-\sum_{i=1}^{n}(e_{i-1}-e_{i})\nonumber\\
&  =d_{0}-(e_{0}-e_{n})\nonumber\\
&  =e_{n}. \label{eq:contradiction}%
\end{align}
On the other hand, $\gcd(F)=R_{\theta}$ implies that $e_{n}=\deg\gcd
(F_{0},\ldots, F_{n})=\deg R_{\theta}$, which is a contradiction with \eqref{eq:contradiction}.

\medskip\noindent Therefore, the only possibility is $\delta=\theta$, which
immediately deduces that $\gcd(F)=R_{\theta}$ by Lemma \ref{lem:wR_theta}.
\end{proof}

\begin{remark}
In the above, we provide an algebraic proof for a main result (Theorem
\ref{thm:gsylvester}). In the preprint~\cite{2021_Hong_Yang_b}, we also
provided a geometric proof where the subresultants are expressed in terms of
the roots of polynomials, Such approach was pioneered by Sylvester in his
major work~\cite{1853_Sylvester} and later simplified and generalized by
Bostan, D'Andrea, Hong, Krick, Szanto and Valdettaro in \cite{Hong:99a,
Hong:99b,Hong:2000a, 2006_DAndrea_Krick_Szanto, 2007_DAndrea_Hong_Krick_Szanto,
2009_DAndrea_Hong_Krick_Szanto, 2013_DAndrea_Krick_Szanto,
2015_DAndrea_Krick_Szanto, 2017_Bostan_DAndrea_Krick_Szanto_Valdettaro,
2017_Krick_Szanto_Valdettaro, 2019_Andrea_Krick_Szanto_Valdettaro}.
\end{remark}

\subsection{An algorithm for computing parametric gcd with the generalized
Sylvester's theorem}

With the help of the generalized Sylvester's theorem (Theorem
\ref{thm:gsylvester}), we immediately come up with the following elegant
algorithm for computing the parametric gcd of several univariate polynomials.

\begin{algorithm}
\label{alg:non_recursive_naive} $\mathcal{G}\leftarrow$ PGCD($d$)

\begin{itemize}
\item[In \ :] $d$ is a list of degrees

\item[Out:] $\mathcal{G}$ is a representation of the parametric gcd for $d$,
that is,%
\[
\mathcal{G}=\left(  \left(  H_{1},G_{1}\right)  ,\ldots,\left(  H_{p}%
,G_{p}\right)  \right)
\]
such that%
\[
\gcd\left(  F\right)  =\left\{
\begin{array}
[c]{lll}%
G_{1} & \text{if} & H_{1}\neq0\\
G_{2} & \text{else if} & H_{2}\neq0\\
\ \vdots & \ \ \ \vdots & \ \ \ \vdots\\
G_{p} & \text{else if} & H_{p}\neq0
\end{array}
\right.
\]

\end{itemize}

\begin{enumerate}
\item $F\leftarrow(F_{0},\ldots,F_{n})$ where $n=\#d-1$ and $F_{i}=\sum
_{j=0}^{d_{i}}a_{ij}x^{j}\ $(where $a_{ij}$ are indeterminates)

\item $P(d_{0},n)\leftarrow\left[  \ (\delta_{1},\ldots,\delta_{n}%
):\ \delta_{1}+\cdots+\delta_{n}\le d_{0}\ \right]  $ ordered decreasingly
under $\succ_{\mathrm{glex}}$

\item $R_{\delta}\leftarrow R_{\delta}\left(  F\right)  \ $for $\delta\in
P(d_{0},n)$

$r_{\delta}\leftarrow\operatorname{lc}_{x} R_{\delta}\left(  F\right)  \ $

\item $\mathcal{G}\leftarrow\left(  \left(  r_{\delta},R_{\delta}\right)
:\delta\in P(d_{0},n)\right)  $
\end{enumerate}
\end{algorithm}

\begin{proposition}
Algorithm \ref{alg:non_recursive_naive} terminates in finite steps and the
output is correct.
\end{proposition}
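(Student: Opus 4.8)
The plan is to verify the two assertions of the proposition separately: termination in finitely many steps, and correctness of the output. Termination is immediate. The set $P(d_0,n)=\{(\delta_1,\dots,\delta_n)\in\mathbb{N}^n : \delta_1+\cdots+\delta_n\le d_0\}$ is finite (it has $\binom{d_0+n}{n}$ elements), so Step 2 produces a finite list, Step 3 computes finitely many subresultants $R_\delta(F)$ and their principal coefficients, and Step 4 assembles a finite tuple. Each $R_\delta(F)$ is a determinant polynomial of an explicitly described matrix over $\mathbb{Z}[a]$, hence computable in finitely many arithmetic operations. So the only real content is correctness.

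For correctness, I would argue that the output $\mathcal{G}=\big((r_\delta,R_\delta):\delta\in P(d_0,n)\big)$, read as a case distinction in the order in which $P(d_0,n)$ is listed (decreasing under $\succ_{\mathrm{glex}}$), computes $\gcd(F)$. This is precisely the content of the generalized Sylvester's theorem (Theorem~\ref{thm:gsylvester}): writing $P(d_0,n)=[\delta^1,\dots,\delta^p]$ in $\succ_{\mathrm{glex}}$-decreasing order, we have $\gcd(F)=R_{\delta^j}(F)$ where $j$ is the least index with $r_{\delta^j}(F)\ne 0$. Since Algorithm~\ref{alg:non_recursive_naive} builds its list of pairs $(H_i,G_i)=(r_{\delta^i},R_{\delta^i})$ in exactly this order, the algorithm's output branch ``$G_i$ if $H_1=\cdots=H_{i-1}=0\ \wedge\ H_i\ne 0$'' matches the theorem's branch verbatim. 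I would also invoke the Remark following Theorem~\ref{thm:gsylvester}: $\delta^p=(0,\dots,0)$, so $R_{\delta^p}(F)=F_0$ and $r_{\delta^p}(F)=a_{0d_0}$, which is a nonzero element of $\mathbb{Z}[a]$; hence the final guard $H_p\ne 0$ is always satisfied (as a condition on the indeterminates, or for any specialization with $a_{0d_0}\ne 0$, which is forced by $\deg F_0=d_0$). Thus the case distinction is exhaustive and the output is well-defined and correct.

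One small bookkeeping point to address: the proposition's statement of the output format in Algorithm~\ref{alg:non_recursive_naive} uses the guard ``$H_p\ne 0$'' rather than ``else'', whereas Theorem~\ref{thm:gsylvester}'s last line reads ``else if $r_{\delta^p}(F)\ne 0$'' and the subsequent Remark observes this guard is vacuously true. I would note explicitly that these are equivalent given $r_{\delta^p}(F)=a_{0d_0}\ne 0$, so no gap arises. Apart from that, the proof is a direct citation of Theorem~\ref{thm:gsylvester} together with the observation that Steps 1--4 faithfully construct the data appearing in that theorem.

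The main obstacle, such as it is, is not mathematical depth but making sure the correspondence between the algorithm's imperative description and the declarative statement of Theorem~\ref{thm:gsylvester} is airtight: one must check that the ordering used in Step 2 ($\succ_{\mathrm{glex}}$-decreasing) is the same ordering under which the ``first nonzero $r_{\delta^j}$'' selection in the theorem is made, and that $R_\delta\leftarrow R_\delta(F)$ and $r_\delta\leftarrow\operatorname{lc}_x R_\delta(F)$ in Step 3 indeed compute the subresultant and its principal coefficient as defined in Definition~\ref{def:sres_npoly} (here one uses $\deg R_\delta(F)=d_0-|\delta|$, noted in the Remark, so that $\operatorname{lc}_x$ of $R_\delta(F)$ is its coefficient in $x^{d_0-|\delta|}$, i.e.\ $r_\delta(F)$ — with the caveat that over $\mathbb{Z}[a]$ the leading coefficient is the formal one and the degree bound is an equality of formal degrees). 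Once these identifications are in place, correctness follows immediately from Theorem~\ref{thm:gsylvester}.
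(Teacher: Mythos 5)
Your proof is correct and follows essentially the same route as the paper: termination from the finiteness of $P(d_0,n)$, and correctness by direct appeal to the generalized Sylvester's theorem (Theorem~\ref{thm:gsylvester}). The paper's own proof is just a two-sentence version of this; your additional bookkeeping about the $\succ_{\mathrm{glex}}$ ordering, the identification $\operatorname{lc}_x R_\delta = r_\delta$, and the vacuous final guard only makes the same argument more explicit.
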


\begin{proof}
The termination of Algorithm \ref{alg:non_recursive_naive} is obvious because
$P(d_{0},n)$ is a finite set whose finiteness is implied by $\delta_{1}%
+\cdots+\delta_{n}\leq d_{0}$. The correctness of the algorithm is guaranteed
by the generalized Sylvester's theorem.
\end{proof}

\section{Generalization of Habicht's theory to several polynomials}

\label{sec:gen_habicht}

Habicht's theory is another key ingredient in the classical subresultant
theory. It establishes a systematic relationship between subresultants and
pseudo-remainders, yielding an efficient algorithm for computing the
subresultant chain of two polynomials. In this section, we \emph{generalize
Habicht's theory to several polynomials}, in order to obtain a systematic
relationship between generalized subresultants and pseudo-remainders (the
generalized Habicht's theorem). We then describe a resulting \emph{efficient
algorithm}. It should be pointed out that the generalized Habicht's theory
presented in this section is an analogy of the classical Habicht's theory. It
adopts the same idea of converting the computation of subresultants into that
of pseudo-remainders.

\subsection{Main result}

The main cost for computing the parametric gcd via the generalized Sylvester's
theorem is charged by the computation of determinant polynomials $R_{\delta}%
$'s where $\delta\in P(d_{0},n)$. If we carry out the determinant-polynomial
calculation in a naive way, obviously the method will become very inefficient.
One may naturally ask the question: is there an analog for the
multi-polynomial case to the structure theorem for the subresultant chain of
two polynomials so that it can be used to enhance the efficiency of the
determinant-polynomial calculation required for generating all the generalized
subresultants in the generalized Sylvester's theorem? Through numerous trial
and errors, we discovered the following quadrilateral property among some
$R_{\delta}$'s where these $\delta$'s are in some sense related.

\begin{notation}
\ 

\begin{itemize}
\item For $\delta=(\delta_{1},\ldots,\delta_{n})$ and $\gamma=(\gamma
_{1},\ldots,\gamma_{n})$,
\begin{align*}
\max(\delta,\gamma)  &  =(\max(\delta_{1},\gamma_{1}),\ldots,\max(\delta
_{n},\gamma_{n}))\\
\min(\delta,\gamma)  &  =(\min(\delta_{1},\gamma_{1}),\ldots,\min(\delta
_{n},\gamma_{n}))
\end{align*}

\item $e_{i}=(0,\ldots,0,\underbrace{1}_{\text{the $i$-th slot}},0,\ldots,0)$.
\end{itemize}
\end{notation}

\begin{theorem}
[Generalized Habicht's Theorem]\label{thm:ghabicht} Let $\delta$, $\gamma\in
P(d_{0},n)$ be such that

\begin{itemize}
\item $|\delta|=|\gamma|<d_{0}$, and

\item there exist distinct $p,q$ such that $1\le p<q$ and $\delta+e_{p}%
=\gamma+e_{q}$.
\end{itemize}

\noindent Then we have
\[
\operatorname*{prem}(R_{\delta},R_{\gamma})=r_{\min\left(  \delta
,\gamma\right)  }\ R_{\max\left(  \delta,\gamma\right)  }%
\]

\end{theorem}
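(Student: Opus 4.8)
The plan is to mimic the structure of the classical Habicht theorem (Theorem~\ref{thm:structure_2poly}) but working directly with the determinant-polynomial formulation, since in the multivariate setting there is no single ``pseudo-remainder sequence'' to induct along. First I would set $\delta_0,\gamma_0$ according to \eqref{eqs:delta0} and write out explicitly the coefficient matrices $\operatorname{cm}(\mathcal F_0,\ldots,\mathcal F_n)$ for $R_\delta$ and for $R_\gamma$. Since $|\delta|=|\gamma|$ and $\delta+e_p=\gamma+e_q$ with $p<q$, the two tuples differ only in slots $p$ and $q$: $\gamma_p=\delta_p+1$, $\gamma_q=\delta_q-1$, all other slots equal, and consequently $c(\delta)=c(\gamma)$, so the two matrices have the same size and the same column count. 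The key observation is that $\min(\delta,\gamma)$ and $\max(\delta,\gamma)$ interpolate: $\min(\delta,\gamma)$ agrees with $\delta$ except $\gamma_q$ in slot $q$, and $\max(\delta,\gamma)$ agrees with $\gamma$ except $\delta_q$ in slot $q$ — i.e. all four tuples live in a single ``quadrilateral'' differing only in the $p,q$ coordinates. So the four matrices $M_{\overline\delta}, M_{\overline\gamma}, M_{\overline{\min}}, M_{\overline{\max}}$ are built from nearly identical blocks, differing only in how many shifts of $F_p$ and $F_q$ are included.

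Next I would carry out the pseudo-division $\operatorname{prem}(R_\delta,R_\gamma)$ at the level of these matrices. Using Lemma~\ref{lem:sres_multipoly}, write $R_\delta = H_0 F_0 + \cdots + H_n F_n$ with $\deg H_i \le \delta_i - 1$ and with the leading coefficient of $H_p$ equal to $\pm\widetilde r_{\tilde\delta}$ where $\tilde\delta$ drops slot $p$ by one; similarly for $R_\gamma$. Because $\deg R_\delta = d_0 - |\delta| = \deg R_\gamma$, the pseudo-remainder of $R_\delta$ by $R_\gamma$ is (up to the power of $r_\gamma$) just $r_\gamma R_\delta - r_\delta R_\gamma$ — a single elimination step suffices. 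The heart of the argument is to show this linear combination equals $r_{\min(\delta,\gamma)} R_{\max(\delta,\gamma)}$. I would do this by a determinant identity: form the matrix $M_{\overline\delta}$ and $M_{\overline\gamma}$, note they share all rows except that $M_{\overline\delta}$ carries one extra shift $x^{\delta_q-1}F_q$ (top row of the $F_q$-block) while $M_{\overline\gamma}$ carries one extra shift $x^{\gamma_p-1}F_p$; then the combination $r_\gamma R_\delta - r_\delta R_\gamma$ telescopes into a single determinant polynomial whose matrix is exactly $M_{\overline{\max(\delta,\gamma)}}$, with the scalar factor $r_{\min(\delta,\gamma)}$ appearing as a cofactor. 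Concretely, one expands $\det$ of the stacked matrix along the distinguishing row and identifies the minor as $\operatorname{pcdp}$ of the $\min$-matrix and the complementary piece as $\operatorname{dp}$ of the $\max$-matrix.

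A cleaner route, which I would try first, is a rank/column-space argument: show $r_\gamma R_\delta - r_\delta R_\gamma$ lies in the ideal $\langle F\rangle$ and has degree $< d_0 - |\delta|$, hence by the same reasoning as in Lemma~\ref{lem:vanishing_wR_delta} it must be a scalar multiple of $R_{\max(\delta,\gamma)}$ (whose degree is $d_0 - |\max(\delta,\gamma)| = d_0 - |\delta| - 1$), and then pin down the scalar by comparing one distinguished coefficient on both sides — the coefficient of $x^{d_0-|\delta|-1}$, which on the left is a Sylvester-type minor that one recognizes as $\pm r_\gamma r_{\min} \mp r_\delta r_{\min}$-type cancellation leaving exactly $r_{\min(\delta,\gamma)}$ times $r_{\max(\delta,\gamma)}$. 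Finally I would account for the exact power of $r_\gamma$ absorbed by $\operatorname{prem}$ (one step, so one factor of $r_\gamma$, which is why the stated identity has no power on the right — it has already been divided out in forming $\operatorname{prem}$, exactly as in the classical $r_{k-1}^2$ versus a single step).

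The main obstacle I anticipate is the \textbf{bookkeeping of signs and of which rows/shifts appear in each of the four matrices}: getting $(-1)^{\sigma_p}$ versus $(-1)^{\sigma_q}$ from Lemma~\ref{lem:sres_multipoly} to combine correctly so that the cross terms cancel and the surviving term has a genuinely positive (or at least explicitly determined) coefficient $r_{\min(\delta,\gamma)}$. A secondary subtlety is the degenerate boundary behavior when some $\delta_i$ or $\gamma_i$ is $0$, or when $c(\delta)$ is exactly $d_0$ so that $\delta_0 = 0$ and the $F_0$-block is empty; these edge cases likely need the hypothesis $|\delta| = |\gamma| < d_0$ in an essential way (it guarantees $c(\delta) > d_0$, hence $\overline\delta \in \Delta_d$, so Lemma~\ref{lem:sres_multipoly} applies uniformly), and I would isolate them at the start to keep the main computation clean.
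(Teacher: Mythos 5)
Your second-paragraph route is, in essence, the paper's own proof: reduce $\operatorname{prem}(R_{\delta},R_{\gamma})$ to the single elimination $r_{\gamma}R_{\delta}-r_{\delta}R_{\gamma}$ (the paper does this at the end via Lemma~\ref{lem:dp_prem}, $\operatorname{dp}=\operatorname{prem}$ for equal degrees), and then prove the determinant identity $r_{\min(\delta,\gamma)}R_{\max(\delta,\gamma)}=\operatorname{dp}(R_{\delta},R_{\gamma})$ by combining Lemma~\ref{lem:sres_multipoly} with multilinearity and row replacement in the matrix of $R_{\max(\delta,\gamma)}$; the distinguished minors are indeed $\pm r_{\min(\delta,\gamma)}$ (note the relevant cofactors are the $F_{q}$-cofactor of $R_{\delta}$ and the $F_{p}$-cofactor of $R_{\gamma}$, not the $F_{p}$-cofactor of $R_{\delta}$ that you quote). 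However, there is a genuine gap in your setup: the claim that $c(\delta)=c(\gamma)$, hence that $M_{\overline{\delta}}$ and $M_{\overline{\gamma}}$ have the same size and share all rows except one distinguishing shift each, is false in general. For instance $d=(3,5,3)$, $\delta=(1,1)$, $\gamma=(2,0)$ satisfy all hypotheses, yet $c(\delta)=6$ and $c(\gamma)=7$, so $\delta_{0}=3\neq 4=\gamma_{0}$ and $M_{\overline{\gamma}}$ contains an extra shift of $F_{0}$ in addition to the extra shift of $F_{p}$. Your telescoping/Laplace step does not literally go through then; one must work in the matrix with $\max(\delta_{0},\gamma_{0})$ rows of $F_{0}$ and control how the $0$-th entries chosen by \eqref{eqs:delta0} for the four tuples relate, which is exactly the content of Lemma~\ref{lem:min_max} ($\eta_{0}=\max(\delta_{0},\gamma_{0})$ and $\zeta_{0}\le\min(\delta_{0},\gamma_{0})$), together with the normalization $a_{0d_{0}}=1$ used to identify $\widetilde{r}_{(\max(\delta_{0},\gamma_{0}),\min(\delta,\gamma))}$ with $r_{\min(\delta,\gamma)}$. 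Your closing ``edge case'' remark misses this: the real boundary issue is not an empty $F_{0}$-block (impossible for $\delta\neq 0$ since $d_{0}=\min d$), but the mismatch among $\delta_{0},\gamma_{0},\eta_{0},\zeta_{0}$.

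The ``cleaner route'' you say you would try first does not work as stated. Over indeterminate coefficients $\gcd(F)=1$, so membership of $r_{\gamma}R_{\delta}-r_{\delta}R_{\gamma}$ in $\langle F\rangle$ together with a degree bound carries no information, and the reasoning of Lemma~\ref{lem:vanishing_wR_delta} only yields vanishing of certain subresultants under specializations whose gcd degree is too large; it is not a proportionality statement for generic coefficients. To conclude that a combination $\sum H_{i}F_{i}$ with the stated degree bounds on the $H_{i}$ and total degree below $d_{0}-|\delta|$ must be a scalar multiple of $R_{\max(\delta,\gamma)}$ you would need a uniqueness (one-dimensionality) lemma that the paper neither states nor proves, and without it the subsequent ``compare one coefficient'' step has nothing to anchor to. So the viable plan is your determinant-identity route, supplemented by the Lemma~\ref{lem:min_max}-type bookkeeping (and tracking of powers of $a_{0d_{0}}$); the sign bookkeeping you worry about is then handled exactly as in Lemma~\ref{lem:sres_multipoly}-3, where the two signs combine to $(-1)^{\delta_{p}+\cdots+\delta_{q-1}+1}$ on both sides and cancel.
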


\begin{example}
Let $d=(3,3,4)$ and let $\delta=(1,1)$ and $\gamma=(2,0)$. Note
\begin{align*}
\delta+e_{1}  &  =\gamma+e_{2}\\
\min\left(  \delta,\gamma\right)   &  =(1,0)\\
\max\left(  \delta,\gamma\right)   &  =(2,1)
\end{align*}
Thus, by the generalized Habicht's theorem, we have
\[
\operatorname*{prem}(R_{(1,1)},R_{(2,0)})=r_{(1,0)}R_{(2,1)}%
\]

\end{example}

\begin{remark}
Let $d=\left(  3,4\right)  $. Then $P(3,1)=[3,2,1,0]$. We cannot find $\delta$
and $\gamma$ from $P(3,1)$ such that they have two different components. Thus
Theorem \ref{thm:ghabicht} is not applicable to the bi-polynomial case. We say
Theorem \ref{thm:ghabicht} is a generalization of the classical Habicht's
theorem in the sense that the theorem generalizes the idea of the classical
Habicht's theorem, that is, establishing a relationship between a subresultant
and two other consecutive subresultants.
\end{remark}

\subsection{Proof of the generalized Habicht's theorem (Theorem
\ref{thm:ghabicht})}

For verifying the correctness of the generalized Habicht's theorem (Theorem
\ref{thm:ghabicht}), we need the following two lemmas.

\begin{lemma}
\label{lem:dp_prem} Assume $A,B\in\mathcal{Z} [x]$ are of the same degree.
Then $\operatorname*{dp}(A,B)=\operatorname{prem}(A,B)$.
\end{lemma}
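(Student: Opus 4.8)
The plan is to work directly from the definition of the determinant polynomial and to compare its coefficients with those produced by one step of pseudo-division. Let $A=\sum_{j=0}^{m}\alpha_j x^j$ and $B=\sum_{j=0}^{m}\beta_j x^j$ with $\alpha_m,\beta_m$ the (possibly zero, but here nonzero since $\deg A=\deg B=m$) leading coefficients, so that $\operatorname{cm}(A,B)$ is the $2\times(m+1)$ matrix
\[
\operatorname{cm}(A,B)=\begin{bmatrix}\alpha_m&\alpha_{m-1}&\cdots&\alpha_0\\ \beta_m&\beta_{m-1}&\cdots&\beta_0\end{bmatrix}.
\]
Here $p=2$, $q=m+1$, so $\operatorname{dp}(A,B)=\sum_{0\le j\le m-1}c_j x^j$ with $c_j=\det[\,\operatorname{cm}(A,B)_1\ \ \operatorname{cm}(A,B)_{q-j}\,]=\det\begin{bmatrix}\alpha_m&\alpha_j\\ \beta_m&\beta_j\end{bmatrix}=\alpha_m\beta_j-\beta_m\alpha_j$. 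Hence
\[
\operatorname{dp}(A,B)=\sum_{j=0}^{m-1}(\alpha_m\beta_j-\beta_m\alpha_j)x^j=\alpha_m\Big(\sum_{j=0}^{m-1}\beta_j x^j\Big)-\beta_m\Big(\sum_{j=0}^{m-1}\alpha_j x^j\Big)=\alpha_m(B-\beta_m x^m)-\beta_m(A-\alpha_m x^m)=\alpha_m B-\beta_m A.
\]

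Next I would recall the definition of $\operatorname{prem}(A,B)$: since $\deg A=\deg B=m$, the pseudo-quotient of $A$ by $B$ is the single term $\beta_m^{\,0}\cdot$(quotient) but more precisely $\operatorname{prem}(A,B)$ is the remainder of $\beta_m^{\,\deg A-\deg B+1}A=\beta_m A$ upon division by $B$. One division step gives $\beta_m A=\alpha_m B+\big(\beta_m A-\alpha_m B\big)$, and the bracketed polynomial has degree $<m$, so it is exactly the pseudo-remainder: $\operatorname{prem}(A,B)=\beta_m A-\alpha_m B$. Comparing with the computation above, $\operatorname{dp}(A,B)=\alpha_m B-\beta_m A=-\operatorname{prem}(A,B)$, so strictly the two differ by a sign. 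I would therefore note that the paper's convention for $\operatorname{prem}$ (or for $\operatorname{dp}$ up to the sign/ordering of the two rows) must absorb this sign — this is the one genuinely delicate point, and I would resolve it by stating at the outset which sign convention for $\operatorname{prem}(A,B)$ is in force (e.g. remainder of $\operatorname{lc}(B)^{\deg A-\deg B+1}A$ by $B$, versus the variant used in Habicht's theorem earlier in the paper), and then matching row order in $\operatorname{cm}(A,B)$ accordingly so the identity holds on the nose.

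So the key steps, in order, are: (1) write out $\operatorname{cm}(A,B)$ explicitly for two degree-$m$ polynomials and unwind Definition~\ref{def:dp} to get $c_j=\alpha_m\beta_j-\beta_m\alpha_j$; (2) reassemble the sum to recognize $\operatorname{dp}(A,B)=\alpha_m B-\beta_m A$; (3) perform the single pseudo-division step to get $\operatorname{prem}(A,B)=\beta_m A-\alpha_m B$ (under the paper's convention), observing the degree drop that certifies it is the remainder; (4) match conventions so the two expressions coincide. The only obstacle is bookkeeping of the sign and of which polynomial plays the role of divisor; there is no real analytic or combinatorial difficulty, since everything reduces to a $2\times2$ determinant and one long-division step. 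I would keep the write-up to a few lines, flagging the convention explicitly so that the later use of this lemma inside the proof of Theorem~\ref{thm:ghabicht} is unambiguous.
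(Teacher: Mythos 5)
Your proof is correct, but it takes a genuinely different route from the paper's. The paper disposes of this lemma in one line, citing the classical determinant-polynomial/pseudo-remainder identity (Theorem 7.5.3 of \cite{1993_Mishra}) specialized to the equal-degree case, whereas you verify it from scratch: unwinding Definition~\ref{def:dp} for the $2\times(m+1)$ matrix $\operatorname{cm}(A,B)$ gives $\operatorname{dp}(A,B)=\operatorname{lc}(A)\,B-\operatorname{lc}(B)\,A$, and a single pseudo-division step gives $\operatorname{prem}(A,B)=\operatorname{lc}(B)\,A-\operatorname{lc}(A)\,B$ under the usual convention $\operatorname{prem}(A,B)=\mathrm{rem}\bigl(\operatorname{lc}(B)^{\deg A-\deg B+1}A,\,B\bigr)$. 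Your elementary computation therefore surfaces something the citation hides: the cited theorem, in the equal-degree case, reads $\operatorname{prem}(A,B)=\operatorname{dp}(B,A)$, which differs from $\operatorname{dp}(A,B)$ exactly by the sign of a row swap, so the lemma holds on the nose only after fixing the row order in $\operatorname{cm}$ (or the sign convention for $\operatorname{prem}$) --- precisely the point you flag and resolve. This sign is immaterial for the downstream gcd statements, which are only needed up to similarity, but your explicit bookkeeping is what makes the equality literal, and your argument has the added advantage of being self-contained rather than resting on an external reference.
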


\begin{proof}
Immediate from a specialization of \cite[Theorem 7.5.3]{1993_Mishra} for $m=n$.
\end{proof}

\begin{lemma}
\label{lem:min_max} For all $\delta,\gamma$ satisfying $\delta+e_{p}%
=\gamma+e_{q}\in P(d_{0},n)$ where $p\ne q\ge1$, let $\eta=\max(\delta
,\gamma)$ and $\zeta=\min(\delta,\gamma)$. Let $\delta_{0}$, $\gamma_{0}$,
$\eta_{0}$ and $\zeta_{0}$ be chosen as in \eqref{eqs:delta0}. Then

\begin{enumerate}
\item $\eta_{0}=\max(\delta_{0},\gamma_{0});$

\item $\zeta_{0}\le\min(\delta_{0},\gamma_{0}).$
\end{enumerate}
\end{lemma}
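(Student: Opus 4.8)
\textbf{Proof proposal for Lemma~\ref{lem:min_max}.}
The plan is to reduce everything to a bookkeeping computation on the column counts $c(\cdot)$, since the formula \eqref{eqs:delta0} for the zeroth index depends only on $c(\delta)$ and $d_0$. First I would record the key identity: for any $\delta\in P(d_0,n)$, by definition $c(\delta)=\#\operatorname{col}\operatorname{cm}(\mathcal F_1,\ldots,\mathcal F_n)$, and when at least one $\delta_i$ is nonzero this equals $\max_{i\ge 1,\ \delta_i\ne 0}(d_i+\delta_i)$. The hypothesis $\delta+e_p=\gamma+e_q=:\alpha$ means $\delta_p=\alpha_p-1$, $\delta_q=\alpha_q$, $\gamma_p=\alpha_p$, $\gamma_q=\alpha_q-1$, and $\delta_i=\gamma_i=\alpha_i$ for $i\notin\{p,q\}$; consequently $\eta=\max(\delta,\gamma)=(\ldots,\alpha_p,\ldots,\alpha_q,\ldots)=\alpha$ in slots $p,q$ and agrees elsewhere, so in fact $\eta=\alpha$, while $\zeta=\min(\delta,\gamma)=(\ldots,\alpha_p-1,\ldots,\alpha_q-1,\ldots)=\alpha-e_p-e_q$. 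So concretely $\eta=\alpha$ and $\zeta=\alpha-e_p-e_q$, and $|\zeta|=|\alpha|-2=|\delta|-1=|\gamma|-1$.

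Next I would translate the claims into statements about $c$. Since $|\delta|=|\gamma|<d_0$ we have $|\eta|=|\alpha|=|\delta|+1\le d_0$, and all of $\delta,\gamma,\eta,\zeta$ lie in $P(d_0,n)$. For the first claim, observe that $c(\eta)=\max_i(d_i+\eta_i)$ and, writing $m_\delta,m_\gamma$ for the indices achieving the max in $c(\delta),c(\gamma)$ respectively, one checks $c(\eta)=\max(c(\delta),c(\gamma))$: increasing $\delta_p$ by $1$ to get $\eta$ can only raise the $p$-th term, so $c(\eta)\ge c(\gamma)$ (since $\eta_i\ge\gamma_i$ for all $i$, as $\eta=\max(\delta,\gamma)$), and symmetrically $c(\eta)\ge c(\delta)$; conversely every term $d_i+\eta_i$ equals either $d_i+\delta_i$ or $d_i+\gamma_i$, so $c(\eta)\le\max(c(\delta),c(\gamma))$. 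Hence $c(\eta)=\max(c(\delta),c(\gamma))$. Now if both $c(\delta),c(\gamma)\ge d_0$ then \eqref{eqs:delta0} gives $\delta_0=c(\delta)-d_0$, $\gamma_0=c(\gamma)-d_0$, $\eta_0=c(\eta)-d_0=\max(c(\delta),c(\gamma))-d_0=\max(\delta_0,\gamma_0)$, as desired. I would then dispatch the remaining cases: if, say, $c(\delta)<d_0$, then $\delta$ must be $(0,\ldots,0)$ (because any nonzero $\delta_i$ forces $d_i+\delta_i\ge d_i\ge d_0$ by $d_0=\min d$, unless $\delta_i$ alone is small — here I need the assumption $d_0\le d_i$, so $d_i+\delta_i\ge d_0$ whenever $\delta_i\ge 0$, with equality possible only if $d_i=d_0$ and $\delta_i=0$); a short case analysis using $d_0=\min d$ shows that $c(\delta)\ge d_0$ always holds as soon as some $\delta_i>0$, and when $\delta=0$ one has $\delta_0=1$. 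Since $|\eta|=|\delta|+1\ge 1$, $\eta\ne 0$, so $c(\eta)\ge d_0$ and $\eta_0=c(\eta)-d_0$; one then verifies $\eta_0=\max(\delta_0,\gamma_0)$ directly in each boundary configuration. This is the step I expect to be the main obstacle — not because it is deep, but because the interaction between the ``generic'' branch $c\ge d_0$ and the degenerate branch where some argument is the zero tuple requires careful separate handling of whether $p$ or $q$ equals the max-achieving index.

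For the second claim, $\zeta=\alpha-e_p-e_q$ so $\zeta_i\le\delta_i$ and $\zeta_i\le\gamma_i$ for all $i\ge 1$; therefore $d_i+\zeta_i\le\min(d_i+\delta_i,\,d_i+\gamma_i)$ for each $i$, hence $c(\zeta)\le\min(c(\delta),c(\gamma))$ whenever $\zeta\ne 0$ (and when $\zeta=0$, $\zeta_0=1\le\min(\delta_0,\gamma_0)$ since $\delta_0,\gamma_0\ge 1$ always by \eqref{eqs:delta0}). If $c(\zeta)\ge d_0$ then $\zeta_0=c(\zeta)-d_0\le\min(c(\delta),c(\gamma))-d_0$; if either $c(\delta)$ or $c(\gamma)$ is $\ge d_0$ this gives $\zeta_0\le\min(\delta_0,\gamma_0)$ after matching the two branches of \eqref{eqs:delta0}, and if instead $c(\zeta)<d_0$ then $\zeta_0=1$, which is $\le\delta_0$ and $\le\gamma_0$ because those are also at least $1$. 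In all cases $\zeta_0\le\min(\delta_0,\gamma_0)$, completing the proof. I would present the argument by first establishing the two monotonicity facts $c(\eta)=\max(c(\delta),c(\gamma))$ and $c(\zeta)\le\min(c(\delta),c(\gamma))$ as a standalone sublemma, and then plugging them into \eqref{eqs:delta0} with the case split on whether the relevant $c$-values exceed $d_0$.
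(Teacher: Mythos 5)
Your proposal is correct, and it rests on the same reduction as the paper's proof: rewrite \eqref{eqs:delta0} for a nonzero tuple as $\delta_{0}=\max_{k\ge 1,\ \delta_{k}\ne 0}(d_{k}+\delta_{k})-d_{0}$, observe $\eta=\delta+e_{p}$ and $\zeta=\delta-e_{q}$, and treat $\zeta=(0,\ldots,0)$ separately. For claim 1 the two arguments essentially coincide (both amount to $c(\eta)=\max\bigl(c(\delta),c(\gamma)\bigr)$). For claim 2 your route is genuinely leaner: componentwise $\zeta\le\delta$ and $\zeta\le\gamma$ give $c(\zeta)\le\min\bigl(c(\delta),c(\gamma)\bigr)$ in one line, whereas the paper splits into three cases according to which block attains $d_{0}+\delta_{0}$ (an index $i\ne p,q$, the index $p$, or the index $q$), deduces $\gamma_{0}\in\{\delta_{0},\delta_{0}+1\}$ (or symmetrically), and only then concludes; your version eliminates that case analysis at no cost. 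Two small points you should tighten. First, the ``boundary configurations'' you flag as the main obstacle in claim 1 are vacuous: your own bookkeeping gives $\delta_{q}=\alpha_{q}=\gamma_{q}+1\ge 1$ and $\gamma_{p}=\alpha_{p}=\delta_{p}+1\ge 1$, so $\delta$ and $\gamma$ are automatically nonzero (the paper notes this at the outset); hence $c(\delta),c(\gamma)\ge d_{0}+1$ and only the first branch of \eqref{eqs:delta0} ever applies to $\delta$, $\gamma$, $\eta$, the lone genuine degenerate case being $\zeta=0$, which you do handle. Second, ``$\delta_{0},\gamma_{0}\ge 1$ always by \eqref{eqs:delta0}'' is not a consequence of the formula alone (the first branch would give $0$ if $c$ equaled $d_{0}$); it holds here because $d_{0}=\min d$ and $\delta,\gamma\ne 0$ force $c(\delta),c(\gamma)\ge d_{0}+1$ --- the same fact the paper obtains in the $\zeta=0$ case by computing $\min(d_{p},d_{q})-d_{0}+1\ge 1$. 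With those two observations made explicit, your argument is a complete and slightly simpler proof of the lemma.
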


\begin{proof}
We assume $\delta=(\delta_{1},\ldots,\delta_{n})$ and $\gamma=(\gamma
_{1},\ldots,\gamma_{n})$. Obviously, $\delta,\gamma$ are nonzero vectors.

\begin{enumerate}
\item Assume $\eta=(\eta_{1},\ldots,\eta_{n})$. Then
\[
\eta_{i}=\left\{
\begin{array}
[c]{ll}%
\delta_{p}+1 & \text{if}\ i=p;\\
\delta_{q} & \text{if}\ i=q;\\
\delta_{i} & \text{otherwise}.
\end{array}
\right.
\]
which implies that $\eta=\delta+e_{p}$. Hence $\eta$ is nonzero.

Assume $\lambda_{0}= \max(\delta_{0},\gamma_{0})$. We will show $\lambda
_{0}=\max_{\substack{k>0\\\eta_{k}\ne0}}(d_{k}+\eta_{k})-d_{0}$, in other
words, $\lambda_{0}+d_{0}=\max_{\substack{k>0\\\eta_{k}\ne0}}(d_{k}+\eta_{k}%
)$. It can be derived from the following:
\begin{align*}
d_{0}+\lambda_{0}  &  =d_{0}+\max(\delta_{0},\gamma_{0})\\
&  =\max(d_{0}+\delta_{0},d_{0}+\gamma_{0})\\
&  =\max(\max_{\substack{k>0\\\delta_{k}\ne0}}(d_{k}+\delta_{k}),\max
_{\substack{k>0\\\gamma_{k}\ne0}}(d_{k}+\gamma_{k}))\\
&  =\max(\max_{\substack{k>0\\k\ne p,q\\\delta_{k}\ne0}}(d_{k}+\delta
_{k}),d_{p}+\delta_{p}+1,d_{q}+\delta_{q})\\
&  =\max_{\substack{k>0\\\eta_{k}\ne0}}(d_{k}+\eta_{k})
\end{align*}
which implies that $\eta_{0}=\lambda_{0}=\max(\delta_{0},\gamma_{0})$.

\item Assume $\zeta=(\zeta_{1},\ldots,\zeta_{n})$. Then
\[
\zeta_{i}=\left\{
\begin{array}
[c]{ll}%
\delta_{p} & \text{if}\ i=p;\\
\delta_{q}-1 & \text{if}\ i=q;\\
\delta_{i} & \text{otherwise}.
\end{array}
\right.
\]

Assume $\lambda_{0}= \min(\delta_{0},\gamma_{0})$. Thus we only need to show
that $\lambda_{0}$ is no less than $\zeta_{0}$.

When $\min(\delta,\gamma)=(0,\ldots,0)$, $\zeta_{0}=1$. In this case,
$\delta=e_{q}$ and $\gamma=e_{p}$, which implies that ${\delta}_{0}%
=d_{q}+1-d_{0}$ and ${\gamma}_{0}=d_{p}+1-d_{0}$. Hence $\lambda_{0}%
=\min(d_{p},d_{q})-d_{0}+1$. Since $d_{0}=\min d$, we have
\[
\lambda_{0}=\min(d_{p},d_{q})-d_{0}+1\ge1=\zeta_{0}%
\]
When $\min(\delta,\gamma)\ne(0,\ldots,0)$, we will show $\lambda_{0}\ge
\zeta_{0}=\max_{\substack{k>0\\\zeta_{k}\ne0}}(d_{k}+\zeta_{k})-d_{0}$, or
equivalently, $\lambda_{0}+d_{0}\ge\max_{\substack{k>0\\\zeta_{k}\ne0}%
}(d_{k}+\zeta_{k})$. To achieve this goal, we consider the following three cases.

\begin{enumerate}
[(a)]

\item If $d_{0}+\delta_{0}=d_{i}+\delta_{i}>d_{p}+\delta_{p}$ where $i\ge1$
and $i\ne p,q$, then
\begin{align*}
d_{0}+\gamma_{0}  &  =\max_{\substack{k\ge1\\\gamma_{k}\ge0}}(d_{k}+\gamma
_{k})\\
&  =\max\left(  \max_{\substack{1\le k\ne p,q\\\gamma_{k}\ne0}}(d_{k}%
+\gamma_{k}), d_{p}+\gamma_{p}, d_{q}+\gamma_{q}\right) \\
&  =\max\left(  \max_{\substack{1\le k\ne p,q\\\gamma_{k}\ne0}}(d_{k}%
+\delta_{k}), d_{p}+\delta_{p}+1, d_{q}+\delta_{q}-1\right) \\
&  =\max\left(  d_{i}+\delta_{i}, d_{p}+\delta_{p}+1, d_{q}+\delta
_{q}-1\right) \\
&  =d_{i}+\delta_{i}\\
&  =d_{0}+\delta_{0}%
\end{align*}
Hence $\gamma_{0}=\delta_{0}$, which implies $\lambda_{0}=\delta_{0}$. Thus we
only need to show $\delta_{0}+d_{0}\ge\max_{\substack{k>0\\\zeta_{k}\ne
0}}(d_{k}+\zeta_{k})$, which can be verified through the following deduction:
\[
\lambda_{0}+d_{0}=\delta_{0}+d_{0}=\max_{\substack{k\ge1\\\delta_{k}\ne
0}}(d_{k}+\delta_{k})\ge\max_{\substack{k>0\\\zeta_{k}\ne0}}(d_{k}+\zeta_{k})
\]

\item If $d_{0}+\delta_{0}=d_{p}+\delta_{p}$, then
\begin{align*}
d_{0}+\gamma_{0}  &  =\max_{\substack{k\ge1\\\gamma_{k}\ge0}}(d_{k}+\gamma
_{k})\\
&  =\max\left(  \max_{\substack{1\le k\ne p,q\\\gamma_{k}\ne0}}(d_{k}%
+\gamma_{k}), d_{p}+\gamma_{p}, d_{q}+\gamma_{q}\right) \\
&  =\max\left(  \max_{\substack{1\le k\ne p,q\\\gamma_{k}\ne0}}(d_{k}%
+\delta_{k}), d_{p}+\delta_{p}+1, d_{q}+\delta_{q}-1\right) \\
&  =d_{p}+\delta_{p}+1
\end{align*}
Hence $\gamma_{0}=\delta_{0}+1$. Therefore, $\lambda_{0}=\delta_{0}$. Thus we
only need to show $\delta_{0}+d_{0}\ge\max_{\substack{k>0\\\zeta_{k}\ne
0}}(d_{k}+\zeta_{k})$, which can be verified through the following deduction:
\[
\lambda_{0}+d_{0}=\delta_{0}+d_{0}=\max_{\substack{k\ge1\\\delta_{k}\ne
0}}(d_{k}+\delta_{k})\ge\max_{\substack{k>0\\\zeta_{k}\ne0}}(d_{k}+\zeta_{k})
\]

\item If $d_{0}+\delta_{0}=d_{q}+\delta_{q}$, we consider the following two
cases.\\[10pt]Case (i): $d_{0}+\gamma_{0}=d_{q}+\gamma_{q}$. Using the same
reasoning as in (b), we can derive $\lambda_{0}+d_{0}\ge\max
_{\substack{k>0\\\zeta_{k}\ne0}}(d_{k}+\zeta_{k})$.\\[2pt]Case (ii):
$d_{0}+\gamma_{0}>d_{q}+\gamma_{q}$. Using the same reasoning as in (a), we
can derive $\lambda_{0}+d_{0}\ge\max_{\substack{k>0\\\zeta_{k}\ne0}%
}(d_{k}+\zeta_{k})$.
\end{enumerate}

To summarize, we have $\lambda_{0}=\min(\delta_{0},\gamma_{0})\ge
\max_{\substack{k>0\\\zeta_{k}\ne0}}(d_{k}+\zeta_{k})-d_{0}=\zeta_{0}$.
\end{enumerate}
\end{proof}

\begin{remark}
Here we provide an example as a witness such that the relation $>$ in Lemma
\ref{lem:min_max}-2 holds. Let $d=(3,3,4)$, $\delta=(1,1)$ and $\gamma=(2,0)$.
Then $\delta_{0}=2$ and $\gamma_{0}=2$. One may verify that $\delta
+e_{1}=\gamma+e_{2}\in P(3,2)$. With simple calculation, we have $\zeta
=\min(\delta,\gamma)=(1,0)$ and thus $\zeta_{0}=1< \min(\delta_{0},\gamma
_{0})=2$.
\end{remark}

\bigskip

\noindent Now we proceed to prove the generalized Habicht's theorem (Theorem
\ref{thm:ghabicht}).

\begin{proof}
[Proof of the generalized Habicht's theorem] Assume $\delta=(\delta_{1}%
,\ldots,\delta_{n})$ and $\gamma=(\gamma_{1},\ldots,\gamma_{n})$. Under the
constraint, we have
\[
\gamma_{p}=\delta_{p}+1, \qquad\gamma_{q}=\delta_{q}-1,\qquad\gamma_{i}%
=\delta_{i}\ \ \text{for}\ \ i\ne p,q
\]
Denote $\max\left(  \delta,\gamma\right)  $ and $\min\left(  \delta
,\gamma\right)  $ by $\eta$ and $\zeta$, respectively. Then
\begin{align*}
\eta=\max\left(  \delta,\gamma\right)   &  =(\delta_{1},\ldots,\delta
_{p-1},\underset{p}{\underbrace{\delta_{p}+1}},\delta_{p+1},\ldots,\delta
_{n})=\delta+e_{p}\\
\zeta=\min\left(  \delta,\gamma\right)   &  =(\delta_{1},\ldots,\delta
_{q-1},\underset{q}{\underbrace{\delta_{q}-1}},\delta_{q+1},\ldots,\delta
_{n}))=\delta-e_{q}%
\end{align*}
Let $\delta_{0}$, $\gamma_{0}$, $\eta_{0}$ and $\zeta_{0}$ be chosen as in
\eqref{eqs:delta0}. Hence
\[
R_{\delta}=\widetilde{R}_{(\delta_{0},\delta_{1},\ldots,\delta_{n}%
)},\ \ R_{\gamma}=\widetilde{R}_{(\gamma_{0},\gamma_{1},\ldots,\gamma_{n}%
)},\ \ R_{\eta}=\widetilde{R}_{(\eta_{0},\eta_{1},\ldots,\eta_{n}%
)},\ \ R_{\zeta}=\widetilde{R}_{(\zeta_{0},\zeta_{1},\ldots,\zeta_{n})}%
\]
respectively. By Lemma \ref{lem:min_max}, $\eta_{0}=\max(\delta_{0},\gamma
_{0})$ and $\zeta_{0}\le\min(\delta_{0},\gamma_{0})$.

\noindent For the sake of simplicity, we introduce the following notations.
Let
\[
P_{i,j}=x^{j}F_{i},\ \ \mathcal{Q}_{i,j}=P_{i,j},\ldots,P_{i,0}
\]
By Definition \ref{def:sres_npoly},
\begin{align*}
R_{\max(\delta,\gamma)}=  &  \operatorname*{dp}\,(\mathcal{Q}_{0,\max
(\delta_{0},\gamma_{0})-1}, \mathcal{Q}_{1,\delta_{1}-1},\ldots
,\ \ \ \ \ \ \ \ \ \ \ \;\mathcal{Q}_{p,\delta_{p}},\ldots
,\ \ \ \ \ \ \ \ \ \ \ \mathcal{Q}_{q,\delta_{q}-1},\ldots,\mathcal{Q}%
_{n,\delta_{n}-1})\\
=  &  \operatorname*{dp}\,(\mathcal{Q}_{0,\max(\delta_{0},\gamma_{0})-1},
\mathcal{Q}_{1,\delta_{1}-1},\ldots,\ \underbrace{P_{p,\delta_{p}}%
,\mathcal{Q}_{p,\delta_{p}-1}}_{\mathcal{Q}_{p,\delta_{p}}},\ldots
,\underbrace{P_{q,\delta_{q}-1},\mathcal{Q}_{q,\delta_{q}-2}}_{\mathcal{Q}%
_{q,\delta_{q}-1}},\ldots,\mathcal{Q}_{n,\delta_{n}-1})\\
=  &  \operatorname*{dp}\,(\mathcal{Q}_{0,\max(\delta_{0},\gamma_{0})-1},
\mathcal{Q}_{1,\delta_{1}-1},\ldots,\ P_{p,\gamma_{p}-1},\mathcal{Q}%
_{p,\delta_{p}-1},\ldots,P_{q,\delta_{q}-1},\mathcal{Q}_{q,\delta_{q}%
-2},\ldots,\mathcal{Q}_{n,\delta_{n}-1})\\
&  \qquad\text{(replacing $\delta_{p}$ with $\gamma_{p}-1$)}%
\end{align*}
Consider $c_{q(\delta_{q}-1)}c_{p(\delta_{p}-1)}^{\prime}R_{\max(\delta
,\gamma)}$. By pushing $c_{q(\delta_{q}-1)}c_{p(\gamma_{p}-1)}^{\prime}$ into
$\operatorname{dp}$, we obtain:
\begin{align*}
&  c_{q(\delta_{q}-1)}c_{p(\gamma_{p}-1)}^{\prime}R_{\max(\delta,\gamma)}\\
=  &  \operatorname*{dp}\,(\mathcal{Q}_{0,\max(\delta_{0},\gamma_{0})-1},
\mathcal{Q}_{1,\delta_{1}-1},\ldots,\ c_{p(\gamma_{p}-1)}^{\prime}%
P_{p,\gamma_{p}-1},\mathcal{Q}_{p,\delta_{p}-1},\ldots,c_{q(\delta_{q}%
-1)}P_{q,\delta_{q}-1},\mathcal{Q}_{q,\delta_{q}-2},\ldots,\mathcal{Q}%
_{n,\delta_{n}-1})
\end{align*}

\noindent Recall Proposition \ref{lem:sres_multipoly}. We have
\[
R_{\delta}=\widetilde{R}_{(\delta_{0},\delta_{1},\ldots,\delta_{n})}%
=\sum_{i=0}^{n}\sum_{j=0}^{\delta_{i}-1}c_{ij}P_{i,j},\qquad R_{\gamma
}=\widetilde{R}_{(\gamma_{0},\gamma_{1},\ldots,\gamma_{n})}=\sum_{i=0}^{n}%
\sum_{j=0}^{\gamma_{i}-1}c_{ij}^{\prime}P_{i,j}%
\]
Therefore,
\begin{align*}
c_{q(\delta_{q}-1)}P_{q,\delta_{q}-1}  &  =R_{\delta}-\sum_{(i,j)\in C_{1}%
}c_{ij}P_{i,j}\\
c_{p(\gamma_{p}-1)}^{\prime}P_{p,\gamma_{p}-1} & =R_{\gamma
}-\sum_{(i,j)\in C_{2}}c_{ij}^{\prime}P_{i,j}%
\end{align*}
where
\begin{align*}
C_{1}  &  =\{(i,j):\,0\leq i\leq n,0\leq j\leq\delta_{i}-1\}\backslash
\{(q,\delta_{q}-1)\}\\
C_{2}  &  =\{(i,j):\,0\leq i\leq n,0\leq j\leq\gamma_{i}-1\}\backslash
\{(p,\gamma_{p}-1)\}
\end{align*}

\noindent Then we have the following derivations:
\begin{align*}
&  c_{q(\delta_{q}-1)}c_{p(\gamma_{p}-1)}^{\prime}R_{\max(\delta,\gamma)}\\
=  &  \operatorname*{dp}\,(\mathcal{Q}_{0,\max(\delta_{0},\gamma_{0})-1},
\mathcal{Q}_{1,\delta_{1}-1},\ldots,\ R_{\gamma},\mathcal{Q}_{p,\delta_{p}%
-1},\ldots,R_{\delta},\mathcal{Q}_{q,\delta_{q}-2},\ldots,\mathcal{Q}%
_{n,\delta_{n}-1})\\
&  \qquad\text{(by the multi-linearity of determinant polynomials)}\\
=  &  (-1)^{\delta_{p}+\ldots+\delta_{q-1}+1}\,\operatorname*{dp}%
\,(\mathcal{Q}_{0,\max(\delta_{0},\gamma_{0})-1}, \mathcal{Q}_{1,\delta_{1}%
-1},\ldots,\ \mathcal{Q}_{p,\delta_{p}-1},\ldots,\mathcal{Q}_{q,\delta_{q}%
-2},\ldots,\mathcal{Q}_{n,\delta_{n}-1},R_{\delta},R_{\gamma})\\
&  \qquad\text{(by the alternating property of determinant polynomials)}\\
=  &  (-1)^{\delta_{p}+\ldots+\delta_{q-1}+1}\widetilde{r}_{(\max(\delta
_{0},\gamma_{0}),\delta_{1},\ldots,\delta_{q}-1,\ldots,\delta_{n}%
)}\operatorname*{dp}(R_{\delta},R_{\gamma})\\
&  \qquad\text{(using the properties of determinant polynomials of
block-triangular matrices)}%
\end{align*}
Note that
\[
\min(\delta,\gamma)=(\delta_{1},\ldots,\delta_{q}-1,\ldots,\delta_{n}),
\quad\max(\delta_{0},\gamma_{0})\ge\min(\delta_{0},\gamma_{0})\ge\zeta_{0}%
\]
and $a_{0d_{0}}=1$. It follows that
\[
\widetilde{R}_{(\max(\delta_{0},\gamma_{0}),\delta_{1},\ldots,\delta
_{q}-1,\ldots,\delta_{n})}=R_{\delta-e_{q}}=R_{\min(\delta,\gamma)}%
\]
and thus
\[
\widetilde{r}_{(\max(\delta_{0},\gamma_{0}),\delta_{1},\ldots,\delta
_{q}-1,\ldots,\delta_{n})}=r_{\min(\delta,\gamma)}%
\]
Hence
\begin{align}
c_{q(\delta_{q}-1)}c_{p(\gamma_{p}-1)}^{\prime}R_{\max(\delta,\gamma)}=  &
(-1)^{\delta_{p}+\cdots+\delta_{q-1}+1}r_{\min(\delta,\gamma)}%
\operatorname*{dp}(R_{\delta},R_{\gamma}) \label{eq:cp*cq*Smax}%
\end{align}

\noindent Recall $\zeta=\min(\delta,\gamma)$. By Lemma
\ref{lem:sres_multipoly}-3, we can further deduce that
\begin{align*}
c_{q(\delta_{q}-1)}  &  =(-1)^{\sigma_{q}}\widetilde{r}_{(\delta_{0},\zeta
_{1},\ldots,\zeta_{n})}\\
&  =(-1)^{\sigma_{q}}\widetilde{r}_{(\zeta_{0},\zeta_{1},\ldots,\zeta_{n})}
\ \ \ \text{(since $\delta_{0}\ge\zeta_{0}$ and $a_{0d_{0}}=1$)}\\
&  =(-1)^{\sigma_{q}}r_{\min(\delta,\gamma)}\\
c_{p(\gamma_{p}-1)}^{\prime}  &  =(-1)^{\sigma_{p}}\widetilde{r}_{(\gamma
_{0},\zeta_{1},\ldots,\zeta_{n})}\\
&  =(-1)^{\sigma_{p}}\widetilde{r}_{(\zeta_{0},\zeta_{1},\ldots,\zeta_{n})}
\ \ \ \text{(since $\gamma_{0}\ge\zeta_{0}$ and $a_{0d_{0}}=1$)}\\
&  =(-1)^{\sigma_{p}}r_{\min(\delta,\gamma)}%
\end{align*}
where
\begin{align*}
\sigma_{p}  &  \equiv1+\delta_{q}+\cdots+\delta_{n}\mod 2\\
\sigma_{q}  &  \equiv1+\gamma_{p}+\cdots+\gamma_{n}\mod 2
\end{align*}
Since $\delta+e_{p}=\gamma+e_{q}$,
\[
\sigma_{p}+\sigma_{q}\equiv(\delta_{p}+\cdots+\delta_{q-1}+1)\mod 2
\]
Hence
\begin{align}
c_{q(\delta_{q}-1)}c_{p(\gamma_{p}-1)}^{\prime}  &  =(-1)^{\delta_{p}%
+\cdots+\delta_{q-1}+1}r_{\min(\delta,\gamma)}^{2} \label{eq:cp*cq}%
\end{align}
The substitution of \eqref{eq:cp*cq} into \eqref{eq:cp*cq*Smax} leads to
\begin{align*}
&  (-1)^{\delta_{p}+\cdots+\delta_{q-1}+1}r_{\min(\delta,\gamma)}^{2}%
R_{\max(\delta,\gamma)}= (-1)^{\delta_{p}+\cdots+\delta_{q-1}+1}r_{\min
(\delta,\gamma)}\operatorname*{dp}(R_{\delta},R_{\gamma})
\end{align*}
It is easy to know that $r_{\min(\delta,\gamma)}\neq0$.\footnote{Otherwise, by
the generalized Sylvester's theorem, the following never occurs:
\begin{align*}
(\deg G_{1},\ldots,\deg G_{n})  &  =(d_{0}-\zeta_{1},d_{0}-(\zeta_{1}%
+\zeta_{2}),\ldots,d_{0}-(\zeta_{1}+\cdots+\zeta_{n}))
\end{align*}
which is obviously not true.} The cancellation of the common factors from both
sides immediately yields
\begin{equation}
r_{\min(\delta,\gamma)}R_{\max(\delta,\gamma)}=\operatorname*{dp}(R_{\delta
},R_{\gamma}) \label{eq:sminSmax}%
\end{equation}

\noindent Observing that $r_{\delta}$ and $r_{\gamma}$ are not identical to
zero\footnote{Otherwise, by the generalized Sylvester's theorem, the following
two never occur:
\begin{align*}
(\deg G_{1},\ldots,\deg G_{n})  &  =(d_{0}-\delta_{1},d_{0}-(\delta_{1}%
+\delta_{2}),\ldots,d_{0}-(\delta_{1}+\cdots+\delta_{n}))\\
\text{or}\quad(\deg G_{1},\ldots,\deg G_{n})  &  =(d_{0}-\gamma_{1}%
,d_{0}-(\gamma_{1}+\gamma_{2}),\ldots,d_{0}-(\gamma_{1}+\cdots+\gamma_{n}))
\end{align*}
which is obviously not true.} and combining the assumption that $\delta
+e_{p}=\gamma+e_{q}$, we have
\[
\deg R_{\delta}=d_{0}-(\delta_{1}+\cdots+\delta_{n})=d_{0}-(\gamma_{1}%
+\cdots+\gamma_{n})=\deg R_{\gamma}%
\]
Applying Lemma \ref{lem:dp_prem} to \eqref{eq:sminSmax}, we obtain
\[
\operatorname*{prem}(R_{\delta},R_{\gamma})=r_{\min\left(  \delta
,\gamma\right)  }\ R_{\max\left(  \delta,\gamma\right)  }%
\]

\end{proof}

\subsection{An algorithm for computing parametric gcd with the generalized
Habicht's theorem}

In this subsection, we will develop an efficient algorithm for computing
parametric gcd by exploiting the generalized Habicht's theorem
(Theorem~\ref{thm:ghabicht}). Let us begin by observing the following.

\begin{lemma}
\label{lem:consecutive_degree} Let $\delta$, $\gamma$ satisfy the conditions
in the generalized Habicht's theorem (Theorem~\ref{thm:ghabicht}). Then we have

\begin{enumerate}
\item $\deg R_{\delta}=\deg R_{\gamma}\ge1,$

\item $\deg R_{\min\left(  \delta,\gamma\right)  }=\deg R_{\delta}+1$, and

\item $\deg R_{\max\left(  \delta,\gamma\right)  }=\deg R_{\delta}-1$.
\end{enumerate}
\end{lemma}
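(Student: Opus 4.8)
The plan is to reduce the entire lemma to the single degree identity $\deg R_\delta(F)=d_0-|\delta|$, valid for every $\delta\in P(d_0,n)$, which is recorded in the remark following Example~\ref{ex:gsylvester}. Recall this holds because $\operatorname{cm}(\mathcal{F}_0,\ldots,\mathcal{F}_n)$ has $c(\delta)=d_0+\delta_0$ columns and $|\delta|+\delta_0$ rows, so its determinant polynomial has degree $(d_0+\delta_0)-(|\delta|+\delta_0)=d_0-|\delta|$; this formal degree is the genuine degree of $R_\delta$ in $x$ because the leading coefficient $r_\delta$ is a nonzero element of $\mathbb{Z}[a]$, a non-vanishing of the same kind already invoked in the proof of the generalized Habicht's theorem. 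Once this identity is available, the lemma is pure bookkeeping on the weights $|\cdot|$ of the four tuples $\delta$, $\gamma$, $\min(\delta,\gamma)$, $\max(\delta,\gamma)$.

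First I would unpack the hypothesis exactly as in the proof of Theorem~\ref{thm:ghabicht}: from $\delta+e_p=\gamma+e_q$ with $1\le p<q$ we get $\gamma_p=\delta_p+1$, $\gamma_q=\delta_q-1$ and $\gamma_i=\delta_i$ for $i\notin\{p,q\}$, whence $\max(\delta,\gamma)=\delta+e_p$ and $\min(\delta,\gamma)=\delta-e_q$. Passing to weights gives $|\gamma|=|\delta|$, $|\max(\delta,\gamma)|=|\delta|+1$ and $|\min(\delta,\gamma)|=|\delta|-1$. Before applying the degree identity I would check that these two new tuples still lie in $P(d_0,n)$: from $|\delta|<d_0$ we get $|\max(\delta,\gamma)|=|\delta|+1\le d_0$, and from $\gamma_q=\delta_q-1\ge 0$ we get $\delta_q\ge 1$, hence $|\min(\delta,\gamma)|=|\delta|-1\ge 0$; both clearly have nonnegative coordinates.

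Then all three claims drop out. For (1), $\deg R_\delta=\deg R_\gamma=d_0-|\delta|$, and since $|\delta|<d_0$ forces $|\delta|\le d_0-1$, this common value is at least $1$. For (2), $\deg R_{\min(\delta,\gamma)}=d_0-(|\delta|-1)=(d_0-|\delta|)+1=\deg R_\delta+1$. For (3), $\deg R_{\max(\delta,\gamma)}=d_0-(|\delta|+1)=(d_0-|\delta|)-1=\deg R_\delta-1$.

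I do not expect a genuine obstacle here. The only two points that warrant care are (i) that ``$\deg$'' refers to the true degree, i.e.\ that $r_\delta$, $r_\gamma$, $r_{\min(\delta,\gamma)}$ and $r_{\max(\delta,\gamma)}$ are not the zero polynomial --- which is exactly the kind of non-vanishing established via the generalized Sylvester's theorem inside the proof of Theorem~\ref{thm:ghabicht}, applied to the degree tuples arising from $\delta$, $\gamma$, $\min(\delta,\gamma)$, $\max(\delta,\gamma)$ --- and (ii) confirming that $\min(\delta,\gamma)$ and $\max(\delta,\gamma)$ remain in $P(d_0,n)$, as done above. Everything beyond that is the arithmetic of $|\delta|$ and $|\delta|\pm 1$.
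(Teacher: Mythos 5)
Your proposal is correct and follows essentially the same route as the paper: it unpacks $\delta+e_{p}=\gamma+e_{q}$ to get $\max(\delta,\gamma)=\delta+e_{p}$, $\min(\delta,\gamma)=\delta-e_{q}$, and then applies the degree identity $\deg R_{\delta}=d_{0}-|\delta|$ to all four tuples. The extra checks you add (membership of $\min$ and $\max$ in $P(d_{0},n)$ and non-vanishing of the principal coefficients for formal coefficients) are sound and only make explicit what the paper leaves implicit.
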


\begin{proof}
\ 

\begin{enumerate}
\item $\deg R_{\delta}=\deg R_{\gamma}\ge1$. The claim can be obtained from
\begin{align*}
\deg R_{\delta}=d_{0}-|\delta|=d_{0}-|\gamma|=\deg R_{\gamma}\ge1
\end{align*}

\item $\deg R_{\min\left(  \delta,\gamma\right)  }=\deg R_{\delta}+1$

Assume $\delta=(\delta_{1},\ldots,\delta_{n})$, $\gamma=(\gamma_{1}%
,\ldots,\gamma_{n})$ and $p,q$ are such that $1\le p<q$ and $\delta
+e_{p}=\gamma+e_{q}$. Then we have
\begin{align*}
\deg R_{\min\left(  \delta,\gamma\right)  }  &  =d_{0}-(\delta_{1}%
+\cdots+\delta_{q-1}+(\delta_{q}-1)+\delta_{q+1}+\cdots+\delta_{n})\\
&  =d_{0}-(\delta_{1}+\cdots+\delta_{n})+1\\
&  =\deg R_{\delta}+1
\end{align*}

\item $\deg R_{\max\left(  \delta,\gamma\right)  } =\deg R_{\delta}-1$

We have%
\begin{align*}
\deg R_{\max\left(  \delta,\gamma\right)  }  &  =d_{0}-(\delta_{1}%
+\cdots+\delta_{p-1}+(\delta_{p}+1)+\delta_{p+1}+\cdots+\delta_{n})\\
&  =d_{0}-(\delta_{1}+\cdots+\delta_{n})-1\\
&  =\deg R_{\delta}-1
\end{align*}
\end{enumerate}
\vspace{-2em}
\end{proof}


From Lemma \ref{lem:consecutive_degree}, we observe that the smaller
$|\delta|$ is, the bigger $\deg R_{\delta}$ is. Meanwhile, from Definition
\ref{def:sres_npoly}, the matrix for defining $R_{\delta}$ is with smaller
order, which makes $R_{\delta}$ easier to be computed. Therefore, we may
follow the strategy for computing subresultants of two univariate polynomials
with the classical Habicht theorem and devise an analogy to compute
subresultants of several univariate polynomials with the generalized Habicht's
theorem. More explicitly, one can compute $R_{\delta},R_{\gamma}%
,R_{\min\left(  \delta,\gamma\right)  }$ and $R_{\max\left(  \delta
,\gamma\right)  }$ with the order illustrated below.
\[%
\begin{matrix}
R_{\delta} & \quad & R_{\max\left(  \delta,\gamma\right)  }\\
&  & \\
\ R_{\min\left(  \delta,\gamma\right)  } & \quad & R_{\gamma}%
\end{matrix}
\]
More explicitly, the subresultant in the upper-right corner (with lower
degree) can be computed from the three at the other corners (with higher
degrees) by using the generalized Habicht's theorem. For example, in order to
compute all the $R_{\delta}$'s of three polynomials where $\delta\in
P(d_{0},n)$, we follow the orders indicated by the following array:
\[%
\begin{matrix}
\vdots & \vdots & \vdots & \\
R_{(0,2)} & R_{(1,2)} & R_{(2,2)} & \cdots\\[10pt]%
R_{(0,1)} & R_{(1,1)} & R_{(2,1)} & \cdots\\[10pt]%
R_{(0,0)} & R_{(1,0)} & R_{(2,0)} & \cdots\\
&  &  &
\end{matrix}
\]
Roughly speaking, for every rectangle, the upper-right subresultant can be
computed by taking the pseudo-remainder of the upper-left and the bottom-right
and then divided by the leading coefficient of the bottom-left.

\bigskip

\noindent Now  we  package together all the above  findings into   an algorithm.
 See
Algorithm~\ref{alg:non_recursive_cute} EPGCD. The name stands for algorithm for
 \textbf{E}fficient \textbf{P}arametric \textbf{GCD} .

\begin{figure}[pt]

\hrule

\bigskip

\begin{algorithm}
\label{alg:non_recursive_cute} $\mathcal{G} \leftarrow$ EPGCD($d$)

\begin{itemize}
\item[In \ :] $d$ is a list of degrees

\item[Out:] $\mathcal{G}$ is a representation of the parametric gcd for $d$,
that is, $\mathcal{G}=\left(  \left(  H_{1},G_{1}\right)  ,\ldots,\left(
H_{p},G_{p}\right)  \right)  $ such that%
\[
\gcd\left(  F\right)  =\left\{
\begin{array}
[c]{lll}%
G_{1} & \text{if} & H_{1}\neq0\\
G_{2} & \text{else if} & H_{2}\neq0\\
\ \vdots & \ \ \vdots & \ \ \ \vdots\\
G_{p} & \text{else if} & H_{p}\neq0
\end{array}
\right.
\]

\end{itemize}

\begin{enumerate}
\item $F\leftarrow(F_{0},\ldots,F_{n})$ where $n=\#d-1$ and $F_{i}=\sum
_{j=0}^{d_{i}}a_{ij}x^{j}\ $(where $a_{ij}$ are indeterminates with
$a_{0d_{0}}=1$)

\item $R_{(0,\ldots,0)}\leftarrow F_{0}$,\;\; $r_{(0,\ldots,0)}\leftarrow
\mathrm{lc}_{x}(R_{(0,\ldots,0)})$

\item $\mathcal{G}\leftarrow\left\{  \left(  r_{(0,\ldots,0)},R_{(0,\ldots
,0)}\right)  \right\}  $

\item For $i=1,\ldots,d_{0}$ do

\begin{enumerate}
[a.]

\item $P_{i}\leftarrow[\delta: |\delta|=i]$ ordered decreasingly under
$\succ_{\mathrm{glex}}$

\item For $\delta=(\delta_{1},\ldots,\delta_{n})\in P_{i}$ do

\begin{enumerate}
[(1)]

\item $\mathcal{I}\leftarrow\{{k}:\,\delta_{k}\ne0\}$

\item If $\#\mathcal{I}=1$, that is, $\mathcal{I}=\{k\}$, then

\begin{enumerate}
[(a)]

\item if $\delta_{k}=1$, then

\quad$R_{\delta}\leftarrow\operatorname{prem}(F_{k},F_{0})$

\item if $\delta_{k}=2$, then

\quad$\gamma\leftarrow\delta-2e_{k}$,\ \ \ $\zeta\leftarrow\delta-e_{k}$
\newline

\quad$\delta_{0}\leftarrow\left\{
\begin{array}
[c]{ll}%
\max\limits_{\substack{i\ge1\\\delta_{i}\ne0}}(d_{i}+\delta_{i})-d_{0} &
\text{if}\ \ \exists_{i\ge1}\delta_{i}\ne0;\\
1 & \text{otherwise}.
\end{array}
\right.  $

\quad$R_{\delta}\leftarrow\operatorname{prem}(R_{\gamma},R_{\zeta
})\bigg/\left(  r_{\gamma}\right)  ^{\delta_{0}-1}$

\item if $\delta_{k}>2$, then

\quad$\gamma\leftarrow\delta-2e_{k}$,\ \ \ $\zeta^{\prime}\leftarrow
\delta-e_{k}$

\quad$R_{\delta}\leftarrow\operatorname{prem}(R_{\gamma},R_{\zeta
})\bigg/\left(  r_{\gamma}\right)  ^{2}$
\end{enumerate}

\item Else

\begin{enumerate}
[(a)]

\item Choose $p,q\in\mathcal{I}$ such that $p<q$

\item $\gamma\leftarrow\delta-e_{p}$,\ \ \ $\eta\leftarrow\delta-e_{q}$,
\ \ \ $\zeta\leftarrow\delta-e_{p}-e_{q}$

\item $R_{\delta}\leftarrow\operatorname*{prem}\left(  R_{\gamma},R_{\eta
}\right)  /r_{\zeta}$
\end{enumerate}

\item $r_{\delta}\leftarrow\mathrm{lc}_{x}(R_{\delta})$

\item $\mathcal{G} \leftarrow$ append $\left\{  \left(  r_{\delta},R_{\delta
}\right)  \right\}  $ to $\mathcal{G}$
\end{enumerate}
\end{enumerate}

\item Return $\mathcal{G}$.
\end{enumerate}
\end{algorithm}
\medskip
\hrule

\end{figure}
\bigskip

\noindent The correctness of the  algorithm relies on the following lemmas.

\begin{lemma}
\label{lem:mod_habicht} Given $\delta=(0,\ldots,0,\delta_{i},0,\ldots,0)$
where $\delta_{i}>0$,

\begin{enumerate}
\item \label{lem:mod_habicht_1}if $\delta_{i}=1$, then $R_{\delta
}=\operatorname{prem}(F_{i},F_{1})$;

\item \label{lem:mod_habicht_2}if $\delta_{i}=2$, then
\[
\left(  r_{\gamma}\right)  ^{\delta_{0}-1}\cdot R_{\delta}=\operatorname{prem}%
\left(  R_{\gamma},R_{\zeta}\right)
\]
where $\delta_{0}$ is as in \eqref{eqs:delta0}, $\gamma=\delta-2e_{i}$ and
$\zeta=\delta-e_{i}$.

\item \label{lem:mod_habicht_3}if $\delta_{i}>2$, then
\[
\left(  r_{\gamma}\right)  ^{2}\cdot R_{\delta}=\operatorname{prem}\left(
R_{\gamma},R_{\zeta}\right)
\]
where $\gamma=\delta-2e_{i}$ and $\zeta=\delta-e_{i}$.
\end{enumerate}
\end{lemma}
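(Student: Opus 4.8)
The plan is to exploit the fact that the multi-indices occurring in Lemma~\ref{lem:mod_habicht} --- namely $\delta=\delta_ie_i$, together with $\gamma=\delta-2e_i$ and $\zeta=\delta-e_i$ --- are all supported on the single index $i$, so that only the two polynomials $F_0$ and $F_i$ are involved. Unfolding Definition~\ref{def:sres_npoly} and the recipe~\eqref{eqs:delta0}, one checks that for $k\ge 1$ the $F_0$-block of $\operatorname{cm}(\mathcal F_0,\mathcal F_i)$ consists of exactly $d_i+k-d_0$ shifted rows of $F_0$, which is precisely the prescription of Definition~\ref{def:sres_2polys}; hence $R_{ke_i}(F)$ equals the classical $k$-subresultant $R_k(F_0,F_i)$ (in the new indexing), while $R_{(0,\ldots,0)}(F)=F_0$. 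So Lemma~\ref{lem:mod_habicht} is a statement about the subresultant chain $F_0=R_{(0,\ldots,0)},\ R_{e_i},\ R_{2e_i},\ldots$ of the two-polynomial system $(F_0,F_i)$, which is normal because the $a_{ij}$ are formal; and it will be proved entirely with the determinant-polynomial calculus already used for Lemma~\ref{lem:sres_multipoly} and Theorem~\ref{thm:ghabicht}: row operations and multilinearity to substitute rows, the alternating and block-triangular behaviour of $\operatorname{dp}$, and Lemma~\ref{lem:dp_prem}.

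For Part~1, I would write $R_{e_i}=\operatorname{dp}(x^{d_i-d_0}F_0,\ldots,x^0F_0,F_i)$. Since $\operatorname{prem}(F_i,F_0)=a_{0d_0}^{d_i-d_0+1}F_i-QF_0$ with $\deg Q\le d_i-d_0$, scaling the $F_i$-row by $a_{0d_0}^{d_i-d_0+1}$ and subtracting the matching $\mathbb{Z}[a]$-combination of the rows $x^0F_0,\ldots,x^{d_i-d_0}F_0$ (all present) replaces it by the coefficient row of $\operatorname{prem}(F_i,F_0)$, and row operations leave $\operatorname{dp}$ unchanged. In the resulting matrix the $F_0$-shift rows form a triangular staircase whose pivots against the top columns are all equal to $a_{0d_0}$, and the $\operatorname{prem}$-row vanishes in those columns, so expanding off that block peels a factor $a_{0d_0}^{d_i-d_0+1}$; cancelling it against the scaling factor (legitimate since $\mathbb{Z}[a]$ is a domain) yields $R_{e_i}=\operatorname{prem}(F_i,F_0)$.

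For Parts~2 and~3, I would write $R_{\delta_ie_i}=\operatorname{dp}M$ where $M$ has rows $x^{\delta_0-1}F_0,\ldots,x^0F_0$ (that is $\delta_0=d_i+\delta_i-d_0$ rows) followed by $x^{\delta_i-1}F_i,\ldots,x^0F_i$. Applying Lemma~\ref{lem:sres_multipoly} to the pair $(F_0,F_i)$ expresses both $R_\gamma$ and $R_\zeta$ as $\mathbb{Z}[x]$-combinations of $F_0,F_i$ whose $F_i$-cofactors have leading coefficients $\pm r_\gamma$, resp.\ $\pm r_\zeta$; substituting these for the last two $F_i$-rows of $M$ via multilinearity (picking up those principal-coefficient factors) and then collapsing the $F_0$-shift rows by the block-triangular rule, exactly as in the proof of Theorem~\ref{thm:ghabicht}, one obtains a relation of the form $(\text{power of }r_\gamma)\cdot R_{\delta_ie_i}=(\text{power of }r_\gamma)\cdot\operatorname{dp}(R_\gamma,R_\zeta)$. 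Since $\deg R_\gamma=d_0-|\gamma|=\deg R_\zeta+1$ along this slice, one explicit pseudo-division reconciles the one-degree gap, after which Lemma~\ref{lem:dp_prem} identifies $\operatorname{dp}(R_\gamma,R_\zeta)$ with $\operatorname{prem}(R_\gamma,R_\zeta)$. The surviving exponent on $r_\gamma$ is $2$ when $\delta_i>2$ (an interior step of the chain, as in classical Habicht), but $\delta_0-1$ when $\delta_i=2$: there $\gamma=(0,\ldots,0)$, $R_\gamma=F_0$, $r_\gamma=a_{0d_0}$, and the $\delta_0$ rows of $F_0$-shifts already present in $M$ contribute $\delta_0-1$ extra pivots $a_{0d_0}$ --- this is the ``seam'' of the chain, where $R_{(0,\ldots,0)}(F)=F_0$ rather than the classically normalized $a_{0d_0}^{d_i-d_0-1}F_0$.

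The main obstacle I anticipate is precisely this bookkeeping: getting the power of $r_\gamma$ (equivalently of $a_{0d_0}$) right forces the separation into $\delta_i=2$ and $\delta_i>2$, because the bottom of the two-polynomial chain is normalized differently from its interior, and the one-degree gap between $R_\gamma$ and $R_\zeta$ means Lemma~\ref{lem:dp_prem} can only be invoked after one preliminary pseudo-division. The overall sign also has to be tracked carefully through the Laplace expansions; any residual sign is harmless for the algorithm, since a gcd is determined only up to a unit and $\operatorname{prem}$ changes only by a unit when its arguments do. The rest is the routine determinant-polynomial manipulation already exercised in Sections~\ref{sec:gen_sres}--\ref{sec:gen_habicht}.
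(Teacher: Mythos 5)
You should first be aware that the paper contains no proof of Lemma~\ref{lem:mod_habicht}: it is stated bare and then invoked in the correctness argument for Algorithm~\ref{alg:non_recursive_cute}, the authors evidently regarding it as a consequence of the classical two-polynomial theory. So there is no paper proof to compare with, and your proposal must stand on its own. Its starting point is exactly the right one and is worth making explicit: for $\delta=\delta_i e_i$ the recipe \eqref{eqs:delta0} gives $\delta_0=d_i+\delta_i-d_0$, so $R_{\delta_i e_i}(F)$ is precisely the classical $\delta_i$-subresultant of the pair $(F_0,F_i)$ of Definition~\ref{def:sres_2polys}, while $R_{(0,\ldots,0)}=F_0$; and your Part~1 argument (replace the $F_i$-row by the row of $\operatorname{prem}(F_i,F_0)$, then peel the factor $a_{0d_0}^{\,d_i-d_0+1}$ off the triangular $F_0$-block) is complete and correct, including your silent correction of the statement's ``$F_1$'' to $F_0$, which the algorithm confirms.

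The gaps are in Parts~2--3. First, Lemma~\ref{lem:sres_multipoly} does not supply what you use: the $F_i$-cofactor of $R_\zeta$ has degree $\le\delta_i-2$ and leading coefficient $\pm\widetilde r$ at the index $\zeta-e_i$ (essentially $\pm r_\gamma$), and that of $R_\gamma$ has degree $\le\delta_i-3$ with leading coefficient tied to $r_{\gamma-e_i}$ --- not $\pm r_\zeta$ and $\pm r_\gamma$ as you assert. In particular neither combination has $x^1F_i$ or $x^0F_i$ as its top $F_i$-term, so substituting them ``for the last two $F_i$-rows'' does not factor out the scalars you expect, and the rows left behind are no longer the coefficient matrix of any $R_\mu$, so the block-triangular collapse from the proof of Theorem~\ref{thm:ghabicht} does not transfer verbatim. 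Second, the terminal step fails as written: here $\deg R_\gamma=\deg R_\zeta+1$, so Lemma~\ref{lem:dp_prem} does not apply, and in fact $\operatorname{dp}(R_\gamma,R_\zeta)=r_\gamma R_\zeta$; hence an intermediate identity of the form $(\text{power of }r_\gamma)\,R_\delta=(\text{power of }r_\gamma)\operatorname{dp}(R_\gamma,R_\zeta)$ cannot be correct (it would make $R_\delta$ proportional to $R_\zeta$). What is needed instead is the three-row identity $\operatorname{prem}(R_\gamma,R_\zeta)=\operatorname{dp}(xR_\zeta,R_\zeta,R_\gamma)$, i.e.\ the degree-gap-one case of the Mishra result underlying Lemma~\ref{lem:dp_prem}, which changes which matrix you should build and which rows get substituted. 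As a result the exponents $2$ and $\delta_0-1$ are asserted rather than derived; they are in fact correct (the seam exponent $\delta_0-1=d_i-d_0+1$ is forced by the convention $R_{(0,\ldots,0)}=F_0$ in place of the classically normalized $a_{0d_0}^{\,d_i-d_0-1}F_0$), but your sketch does not yet produce them. The shortest complete route is to stay inside the two-polynomial chain of $(F_0,F_i)$ --- normal, since the coefficients are formal --- and prove $\operatorname{prem}(R_{k-2},R_{k-1})=r_{k-2}^2R_k$ for $k\ge3$ together with the separate $k=2$ seam computation; note that Theorem~\ref{thm:structure_2poly} cannot simply be cited, since it is stated only for $d_1=d_0+1$, and a full proof must also settle the signs you set aside, because the lemma is an equality, not a similarity.
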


\begin{proposition}
Algorithm \ref{alg:non_recursive_cute} terminates in finite steps and the
output is correct.
\end{proposition}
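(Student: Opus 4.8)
The plan is to dispose of termination in a sentence and then reduce correctness to a single induction on $|\delta|$ that identifies the algorithm's variables with Definition~\ref{def:sres_npoly}. \emph{Termination} is immediate: the outer loop runs over the finite range $i=1,\dots,d_{0}$, the inner loop over the finite list $P_{i}=[\delta:|\delta|=i]$, and each pass performs only a bounded number of pseudo-remainders, leading-coefficient extractions and exact divisions in $\mathbb{Z}[a][x]$. For \emph{correctness} I would prove, by induction on $|\delta|$ (equivalently on the outer index $i$), the invariant that when the pair $(r_{\delta},R_{\delta})$ is appended to $\mathcal{G}$ one has $R_{\delta}=R_{\delta}(F)$ and $r_{\delta}=r_{\delta}(F)$ in the sense of Definition~\ref{def:sres_npoly}, under the normalization $a_{0d_{0}}=1$ fixed in Step~1; here $r_{\delta}=\operatorname{lc}_{x}R_{\delta}$ coincides with $r_{\delta}(F)$ because $\deg_{x}R_{\delta}(F)=d_{0}-|\delta|$. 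Granting the invariant, $\mathcal{G}$ lists exactly the pairs $(r_{\delta}(F),R_{\delta}(F))$ for $\delta\in P(d_{0},n)$, and arranging them in decreasing $\succ_{\mathrm{glex}}$ order turns the generalized Sylvester's theorem (Theorem~\ref{thm:gsylvester}) into precisely the asserted case description of $\gcd(F)$.

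For the base case $\delta=(0,\dots,0)$ one has $c(\delta)=0<d_{0}$, hence $\delta_{0}=1$ and $R_{\delta}(F)=\operatorname{dp}\operatorname{cm}(x^{0}F_{0})=F_{0}$, which is what Step~2 records. For $|\delta|=i\ge 1$ set $\mathcal{I}=\{k:\delta_{k}\neq 0\}$. If $\mathcal{I}=\{k\}$, invoke Lemma~\ref{lem:mod_habicht} in its three cases: for $\delta_{k}=1$ the output is $\operatorname{prem}(F_{k},F_{0})$ with no recursion, while for $\delta_{k}\ge 2$ the right-hand inputs $R_{\delta-e_{k}}$ and $R_{\delta-2e_{k}}$ satisfy $|{\cdot}|<|\delta|$ and so are already correct by the induction hypothesis, and the divisor $(r_{\gamma})^{\delta_{0}-1}$ resp.\ $(r_{\gamma})^{2}$ is a non-vanishing element of $\mathbb{Z}[a]$ — equal to $1$ when $\delta_{k}=2$, since then $\gamma=(0,\dots,0)$ and $a_{0d_{0}}=1$ — so the exact quotient formed by the algorithm returns $R_{\delta}(F)$, and then $r_{\delta}=\operatorname{lc}_{x}R_{\delta}=r_{\delta}(F)$.

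The substantive case is $\#\mathcal{I}\ge 2$, where the generalized Habicht's theorem enters. Choosing $p<q$ in $\mathcal{I}$ and setting $\gamma=\delta-e_{p}$, $\eta=\delta-e_{q}$, $\zeta=\delta-e_{p}-e_{q}$ (genuine elements of $P(d_{0},n)$ since $\delta_{p},\delta_{q}\ge 1$ and $|\delta|\le d_{0}$), one checks directly that $\gamma+e_{p}=\delta=\eta+e_{q}$, $|\gamma|=|\eta|=|\delta|-1<d_{0}$, $\max(\gamma,\eta)=\delta$ and $\min(\gamma,\eta)=\zeta$; thus Theorem~\ref{thm:ghabicht} applies to $(\gamma,\eta)$ and yields $\operatorname{prem}(R_{\gamma},R_{\eta})=r_{\zeta}\,R_{\delta}$. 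Since $|\gamma|,|\eta|,|\zeta|<|\delta|$ the quantities $R_{\gamma},R_{\eta},r_{\zeta}$ are available and correct by induction, and $r_{\zeta}$ is not the zero polynomial (by the Sylvester-type argument used in the footnote of the proof of Theorem~\ref{thm:ghabicht}), so $R_{\delta}=\operatorname{prem}(R_{\gamma},R_{\eta})/r_{\zeta}$ is an exact division reproducing $R_{\delta}(F)$. This closes the induction, and the correctness of the output follows as described above.

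I expect the only real friction to be bookkeeping rather than mathematics: aligning the $\delta_{0}$ recomputed inside the algorithm (and the normalization $a_{0d_{0}}=1$) with convention~\eqref{eqs:delta0}, and confirming that the nested traversal makes every $R_{\delta}$ used on a right-hand side appear before $\delta$ itself — which it does, since in every branch the indices invoked have $|{\cdot}|$ equal to $|\delta|-1$ or $|\delta|-2$. All the genuine content is already packaged in Lemma~\ref{lem:mod_habicht}, Theorem~\ref{thm:ghabicht} and Theorem~\ref{thm:gsylvester}, so what remains is to check that these fit together and that each division performed by the algorithm is exact.
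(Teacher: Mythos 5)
Your proposal is correct and follows essentially the same route as the paper: termination from the finiteness of $P(d_{0},n)$, and correctness by processing the $\delta$'s in increasing $|\delta|$, handling the single-index case via Lemma~\ref{lem:mod_habicht} and the multi-index case via the generalized Habicht's theorem (with $\gamma=\delta-e_{p}$, $\eta=\delta-e_{q}$, $\zeta=\delta-e_{p}-e_{q}$), then invoking the generalized Sylvester's theorem for the final gcd description. Your explicit induction invariant and verification of the hypotheses of Theorem~\ref{thm:ghabicht} (including $r_{\zeta}\neq 0$ and exactness of the divisions) just make precise what the paper's proof states more informally.
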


\begin{proof}
The termination of Algorithm \ref{alg:non_recursive_cute} is guaranteed by the
finiteness of the partition set $P(d_{0},n)$.

To show the correctness of the algorithm, we should first make it clear that a
specific order as indicated in Step 4 should be used for computing $R_{\delta
}$'s where $\delta\in P(d_{0},n)$. This is the key for the algorithm to proceed.

When $|\delta|=0$, the only possibility for $\delta$ is $\delta=(0,\ldots,0)$.
Thus the corresponding $R_{\delta}$ is $F_{1}$, which is obvious (see Step 2).
Next we should compute $R_{\delta}$'s for $\delta\ne(0,\ldots,0)$. For this
purpose, we divide $\delta$'s into several groups $P_{1}, \ldots,P_{n}$ where
$P_{i}=\{\delta:|\delta|=i\}$. We will compute $R_{\delta}$ for $\delta$'s in
$P_{1}$ first, and then $P_{2}$,...., and so on because the computation of
$R_{\delta}$ for $\delta$'s in $P_{i}$ relies on $R_{\delta}$ for $\delta$'s
in $P_{i-1}$ (and $P_{i-2}$ when $i\ge2$).

In order to compute $R_{\delta}$ for $\delta$'s in $P_{i}$, we need to
identify whether $R_{\delta}$ involves only two polynomials.

\begin{enumerate}
\item In the affirmative case, the set $\mathcal{I}=\{{k}:\,\delta_{k}\ne0\}$
contains only one element and thus $\delta$'s are of the form $(0,\ldots
,0,\underbrace{\delta_{k}}_{\text{the $k$-th slot}},0,\ldots,0)$. The
computation technique of $R_{\delta}$ is determined by $\delta_{k}$.

\begin{itemize}
\item If $\delta_{k}=1$, by Lemma \ref{lem:mod_habicht}%
-\ref{lem:mod_habicht_1}, $R_{\delta}$'s are exactly $\mathrm{prem}%
(F_{k},F_{1})$. See Step 4b(2)(a).

\item If $\delta_{k}=2$, by Lemma \ref{lem:mod_habicht}%
-\ref{lem:mod_habicht_2},
\[
R_{\delta}=\operatorname{prem}\left(  R_{\gamma},R_{\zeta}\right)
\big/\left(  r_{\gamma}\right)  ^{\delta_{0}-1}%
\]
where $\delta_{0}$ is as in \eqref{eqs:delta0}, $\gamma=\delta-2e_{k}$ and
$\zeta=\delta-e_{k}$. See Step 4b(2)(b).

\item If $\delta_{k}>2$, by Lemma \ref{lem:mod_habicht}%
-\ref{lem:mod_habicht_3},
\[
R_{\delta}=\operatorname{prem}\left(  R_{\gamma},R_{\zeta}\right)
\big/\left(  r_{\gamma}\right)  ^{2}%
\]
where $\gamma=\delta-2e_{k}$ and $\zeta=\delta-e_{k}$. See Step 4b(2)(c).
\end{itemize}

\item In the negative case, we use the generalized Habicht's theorem to
compute $R_{\delta}$. By Theorem \ref{thm:ghabicht}, we need to find
$\delta-e_{p}$ and $\delta-e_{q}$ such that $1\le p< q$ first. Since
$\mathcal{I}=\{{k}:\,\delta_{k}\ne0\}$ contains at least two elements, we set
$p$ and $q$ to be two distinct elements chosen from $\mathcal{I}$ such that
$p<q$. For the sake of simplicity, we use the short-hand notations:
$\gamma=\delta-e_{p}$, $\eta=\delta-e_{q}$. By Theorem \ref{thm:ghabicht},
\[
R_{\max\left(  \gamma,\eta\right)  }=\operatorname*{prem}(R_{\gamma},R_{\eta
})/r_{\min\left(  \gamma,\eta\right)  }\
\]

The above computations are carried out in Step 4b(3).
\end{enumerate}

The correctness of the algorithm is completed.
\end{proof}

\section{Performance}

\label{sec:performance} In this section, we compare the performance of the
proposed algorithms and the previous algorithm. \textcolor{red}{}

\subsection{Review on the recursive approach for computing parametric gcds}

For the convenience of the readers, we review a folk-lore recursive algorithm
for computing the parametric gcds for several polynomials, which is obtained
through imitating the overall schemes developed by~\cite{1996_Yang_Hou_Zeng}
for complex/real root classification.

\begin{algorithm}
\label{alg:recursive} $\mathcal{G} \leftarrow$ PGCD\_recursive($d$)

\begin{itemize}
\item[In \ :] $d$ is a list of degrees

\item[Out:] $\mathcal{G}$ is a representation of the parametric gcd for $d$,
that is,%
\[
\mathcal{G}=\left(  \left(  C_{1},G_{1}\right)  ,\ldots,\left(  C_{p}%
,G_{p}\right)  \right)
\]
such that%
\[
\gcd\left(  F\right)  =\left\{
\begin{array}
[c]{lll}%
G_{1} & \text{if} & C_{1}\\
G_{2} & \text{if} & C_{2}\\
\ \vdots & \ \vdots & \ \vdots\\
G_{p} & \text{if} & C_{p}%
\end{array}
\right.
\]

\end{itemize}

\begin{enumerate}
\item $F\leftarrow(F_{0},\ldots,F_{n})$ where $n=\#d-1$ and $F_{i}=\sum
_{j=0}^{d_{i}}a_{ij}x^{j}\ \ \ $(where $a_{ij}$ are indeterminates)

\item $\mathcal{G}\leftarrow$ PGCD\_poly\_recursive($F$). Return $\mathcal{G}
$.
\end{enumerate}
\end{algorithm}

\begin{algorithm}
$\mathcal{G}\leftarrow$ PGCD\_poly\_recursive($F$)

\begin{itemize}
\item[In \ :] $F$ is a list of polynomials

\item[Out:] $\mathcal{G}$ is a parametric gcd for $F$
\end{itemize}

\begin{enumerate}
\item $R_{i}\longleftarrow R_{i}\left(  F_{0},F_{1}\right)  \ $for
$i=0,\ldots,d_{0}$

$r_{i}\longleftarrow\mathrm{coeff}\left(  R_{i},x^{i}\right)  $

\item If $\#F=2$ then return $\left\{  \left(  r_{d_{0}}=0\wedge\cdots\wedge
r_{d_{0}-i+1}=0\wedge r_{d_{0}-i}\neq0,R_{d_{0}-i}\right)  :\,i=0,\ldots,d_{0}
\right\}  $


\item For $i\ $from $0$ to $d_{0}$

\qquad$\mathcal{G}^{\prime}\longleftarrow$PGCD\_poly\_recursive($\left[
R_{i},F_{2},\ldots,F_{n}\right]  $)

\qquad$\mathcal{G}\longleftarrow\left\{  \left(  r_{d_{0}}=0\wedge\cdots\wedge
r_{d_{0}-i+1}=0\wedge r_{d_{0}-i}\neq0\wedge C^{\prime},G^{\prime}\right)
:\left(  C^{\prime},G^{\prime}\right)  \in\mathcal{G}^{\prime}\right\}  $

\item Return $\mathcal{G}$.
\end{enumerate}
\end{algorithm}

\subsection{Comparison on degrees of polynomials in the conditions}

It is easy to determine that the number of polynomials used in the condition
generated by the $PGCD\_recursive$ algorithm is $\# P(d_{0},n)$ and so is that
generated by the $EPGCD$ algorithm. Hence the complexity of the conditions is
determined by the degree of polynomials involved in the conditions.

We use the maximal degree of all the possible $\gcd(F)$ in terms of $a$ to
capture the complexity of parametric gcds and polynomials involved in the
conditions (whose degrees are equal to the corresponding gcds). For the
comparison purpose, we introduce two notations $d_{\text{proposed}}$ and
$d_{\text{{recursive}}}$ to denote the maximal degrees of parametric gcd
polynomials produced respectively by the proposed $EPGCD$ algorithm (Algorithm
\ref{alg:non_recursive_cute}) and the folk-lore $PGCD\_recursive$ algorithm
(Algorithm \ref{alg:recursive}).

The 3-5 th columns of Table \ref{tab:comparison} show the degree comparison.
When carrying out the experiments, we terminate the procedure if it cannot be
completed in 600 seconds. In this case, we cannot get information on the
maximum degree or the ratio of degrees. Thus, we use ``?" as an indication.
From Table \ref{tab:comparison}, we observe that

\begin{enumerate}

\item Overall, the degrees of the input polynomials have a higher impact on
$d_{\text{recursive}}$ and $d_{\text{proposed}}$ than the number of input polynomials.

\item When $n$ is fixed and $d$ increases, the ratio increases dramatically,
which reveals that Algorithm \ref{alg:non_recursive_cute} is less sensitive to
$d_{i}$'s than Algorithm \ref{alg:recursive}. Thus we can claim that Algorithm
\ref{alg:non_recursive_cute} has a better scalability compared to Algorithm
\ref{alg:recursive}.

\end{enumerate}

\begin{table}[tbh]
\caption{Comparison on the performance of Algorithms
\ref{alg:non_recursive_cute} and \ref{alg:recursive}. }%
\label{tab:comparison}
\begin{center}%
\begin{tabular}
[c]{|c|c||c|c|r||r|r|}\hline
&  &  &  &  &  & \\[-8pt]%
{No.} & \textbf{$d$} & $d_{\text{recursive}}$ & $d_{\text{proposed}}$ &
$\dfrac{d_{\text{recursive}}}{d_{\text{proposed}}}$ & $t_{\text{recursive}}$ &
$t_{\text{proposed}}$\\[8pt]\hline
1 & (3, 4, 4) & 21 & 7 & 3.0 & 0.215 & 0.081\\\hline
2 & (3, 4, 5) & 26 & 8 & 3.3 & 1.143 & 0.106\\\hline
3 & (4, 4, 4) & 25 & 8 & 3.1 & 5.809 & 0.125\\\hline
4 & (4, 4, 5) & 31 & 9 & 3.4 & 120.812\  & 0.196\\\hline
5 & (5, 5, 5) & ? & 10 & ? & $>$ 600.000 & 3.343\\\hline
6 & (3, 3, 3, 3) & 16 & 6 & 2.7 & 2.743 & 0.046\\\hline
7 & (3, 3, 3, 4) & 21 & 7 & 3.0 & 37.553 & 0.056\\\hline
8 & (4, 4, 5, 5) & ? & 9 & ? & $>$ 600.000 & 0.706\\\hline
9 & (3, 3, 3, 3, 3) & 16 & 6 & 2.7 & $>$ 600.000 & 0.072\\\hline
10 & (3, 3, 4, 4, 4) & ? & 7 & ? & $>$ 600.000 & 0.131\\\hline
11 & (3, 3, 3, 4, 4, 4) & ? & 7 & ? & $>$ 600.000 & 0.156\\\hline
12 & (4, 4, 4, 5, 5, 5) & ? & 9 & ? & $>$ 600.000 & 4.181\\\hline
\end{tabular}
\end{center}
\end{table}

\subsection{Comparison on computing time}

In order to compare the computing time, we carry out a series of experiments
for a collection of parametric polynomials. The experiments are performed on a
laptop equipped with a CPU of Intel(R) Core(TM) i7-7500U and a RAM of 8GB. The
experimental results are shown in the last two columns of Table
\ref{tab:comparison} where $t_{\text{proposed}}$ and $t_{\text{recursive}}$
are the computing times of the proposed algorithm~$EPGCD$%
~\ref{alg:non_recursive_cute} and the folk-lore algorithm~$PGCD\_recursive$
\ref{alg:recursive}, respectively. From the table, we make the following observations.

\begin{enumerate}
\item $t_{\text{proposed}}<< t_{\text{recursive}}$ for most of the tested
examples, especially when the size of the problem is big.

\item The computing time of the folk-lore algorithm $PGCD\_recursive$ is
extremely sensitive to $d$, especially to $d_{0}$, while the proposed
algorithm $EPGCD$ is less sensitive.

\item Note that the size of Example 5 is apparently smaller than Example 12
but the computing time of the former is higher than that of the latter. We
attributes the inconsistency to an internal mechanism in the algorithm, where
the value of $d_{0}$ has a greater impact on the performance than other
parameters. In Example 5, $d_{0}=5$ while in Example 12, $d_{0}=4$.
\end{enumerate}

\subsection{Comparison with generalized-subresultant-matrix-based methods}

In this subsection, we make a comparison between the
generalized-subresultant-matrix methods and the method proposed in the current
paper from two aspects, including

\begin{itemize}
\item the number of determinants for partitioning the parameter space, and

\item the availability of explicit expressions for the parametric gcds.
\end{itemize}

For describing the complexity, we will use the notation $m=\max d$.

\begin{itemize}
\item In \cite{1978_Vardulakis_Stoyle}, Vardulakis and Stoyle proposed a
generalization of Sylvester matrix whose order is $(n+1)m\times2m$ and proved
that the rank deficiency of the formulated matrix is the degree of the gcd.
However, they \emph{did not} provide explicit expressions for the parametric gcds.

\item In \cite{1976_Kakie}, Kakie proposed a method for formulating the
Sylvester subresultant matrix for several polynomials and gave a condition
that these polynomials have a gcd of degree $k$ by using the rank deficiency
of the constructed matrix. In \cite{Ho:89}, Ho proved that the
\emph{parametric} gcd can be written as a determinant polynomial of the
maximal independent rows of the $0$-subresultant matrix of the input
polynomials. Ho's method produces almost the same parametric gcds as produced
by the proposed method in the current paper. The only difference is that the
parametric gcds given by Ho's method sometimes have redundant constant factors
and thus they may have higher degrees in coefficient parameters. Concerning
the number of determinants for partitioning the parameter space, the
conditions therein are much more than what are needed in the current paper.
Roughly speaking, the conditions in the current paper form a small subset of
the condition set produced by Ho's method.

\item In \cite{1971_BARNETT}, Barnett integrated the companion resultant
matrix of several polynomials together and gave a condition to determine the
degree of the gcd for the input polynomials. The condition is described with
the rank of the obtained matrix whose order is $m\times m(n+1)$. He made a
keen observation that if the rank of the matrix is $r$, then the last $r$ rows
must be independent. Based on this, he presented a method for computing the
gcd of several \emph{numeric} polynomials.

\item In \cite{2002_Diaz_Toca_Gonzalez_Vega}, Diaz-Toca and Gonzalez-Vega
generalized Barnett's result to B\'ezout matrix, hybrid B\'ezout matrix and
Hankel matrix and showed that these matrix have the same property as the
companion resultant matrix of several univariate polynomials. Based on this,
they presented a series of methods for computing the gcd of several
\emph{numeric} polynomials.

\item In the current paper, from the generalized Sylvester's theorem, it is
easy to see that the number of determinants to partition the parameter space
is exactly the number of possibilities for partition $d_{0}, d_{0}-1,\ldots0$
into $n$ parts with the appearance of $0$'s allowed in the partition.
Therefore, the number of partitions is ${\binom{d_{0}+n}{n}}$. The explicit
expressions for parametric gcds are provided in the theorem.
\end{itemize}

In Table \ref{tab:num_dets}, we compare the numbers of determinants required
for partitioning the parameter space. The table lists formulas for these
numbers and their values for several instances of $(d_{0},m,n)$. The bottom
Figure \ref{fig:comparison} shows, in the logarithmic scale, the ratios of the
numbers of determinants required by the previous methods and the proposed
method when $d_{0}$ and $m$ are fixed to be $8$ and $15$, respectively. It is
seen that the number of determinants required by the proposed method is
significantly smaller than those by the previous methods.

\begin{table}[tbh]
\caption{Comparison of $\#$ determinants required for partitioning the
parameter space}%
\label{tab:num_dets}
\centering
{\small \medskip%
\begin{tabular}
[c]{|c|c|c|c|c|}\hline
& Sylvester & Barnett/(Hybrid) B\'{e}zout/Hankel & Sylvester & Proposed\\
& Vardulakis \& Stoyle & Barnett, Diaz-Toca \& Gonzalez-Vega & Kakie \& Ho &
\\
$(d_{1},m,n)$ & $\sum\limits_{k=2m-d_{1}}^{2m}{\binom{m(n+1)}{k}}{\binom
{2m}{k}}$ & $\sum\limits_{k=m-d_{1}}^{m}{\binom{m(n+1)}{k}}$ & $\sum
\limits_{k=0}^{d_{1}}{\binom{d_{0}-k+n}{n}}{\binom{m+d_{0}}{k}}$ &
${\binom{d_{0}+n}{n}}$\\\hline
(3,4,2) & \multicolumn{1}{|r|}{77055} & \multicolumn{1}{|r|}{793} &
\multicolumn{1}{|r|}{150} & \multicolumn{1}{|r|}{10}\\\hline
(3,5,2) & \multicolumn{1}{|r|}{1114828} & \multicolumn{1}{|r|}{4928} &
\multicolumn{1}{|r|}{198} & \multicolumn{1}{|r|}{10}\\\hline
(3,6,2) & \multicolumn{1}{|r|}{13984880} & \multicolumn{1}{|r|}{31008} &
\multicolumn{1}{|r|}{256} & \multicolumn{1}{|r|}{10}\\\hline
(3,4,3) & \multicolumn{1}{|r|}{573222} & \multicolumn{1}{|r|}{2516} &
\multicolumn{1}{|r|}{209} & \multicolumn{1}{|r|}{20}\\\hline
(3,5,3) & \multicolumn{1}{|r|}{16835406} & \multicolumn{1}{|r|}{21679} &
\multicolumn{1}{|r|}{268} & \multicolumn{1}{|r|}{20}\\\hline
(3,6,3) & \multicolumn{1}{|r|}{449751660} & \multicolumn{1}{|r|}{189750} &
\multicolumn{1}{|r|}{338} & \multicolumn{1}{|r|}{20}\\\hline
(3,4,4) & \multicolumn{1}{|r|}{2699634} & \multicolumn{1}{|r|}{6195} &
\multicolumn{1}{|r|}{280} & \multicolumn{1}{|r|}{35}\\\hline
(3,5,4) & \multicolumn{1}{|r|}{130053385} & \multicolumn{1}{|r|}{68380} &
\multicolumn{1}{|r|}{351} & \multicolumn{1}{|r|}{35}\\\hline
(3,6,4) & \multicolumn{1}{|r|}{5872564815} & \multicolumn{1}{|r|}{767746} &
\multicolumn{1}{|r|}{434} & \multicolumn{1}{|r|}{35}\\\hline
(3,6,5) & \multicolumn{1}{|r|}{45949195348} & \multicolumn{1}{|r|}{2390829} &
\multicolumn{1}{|r|}{545} & \multicolumn{1}{|r|}{56}\\\hline
\end{tabular}
}\end{table}

\begin{figure}[t]
\centering
\includegraphics[width=0.65\textwidth]{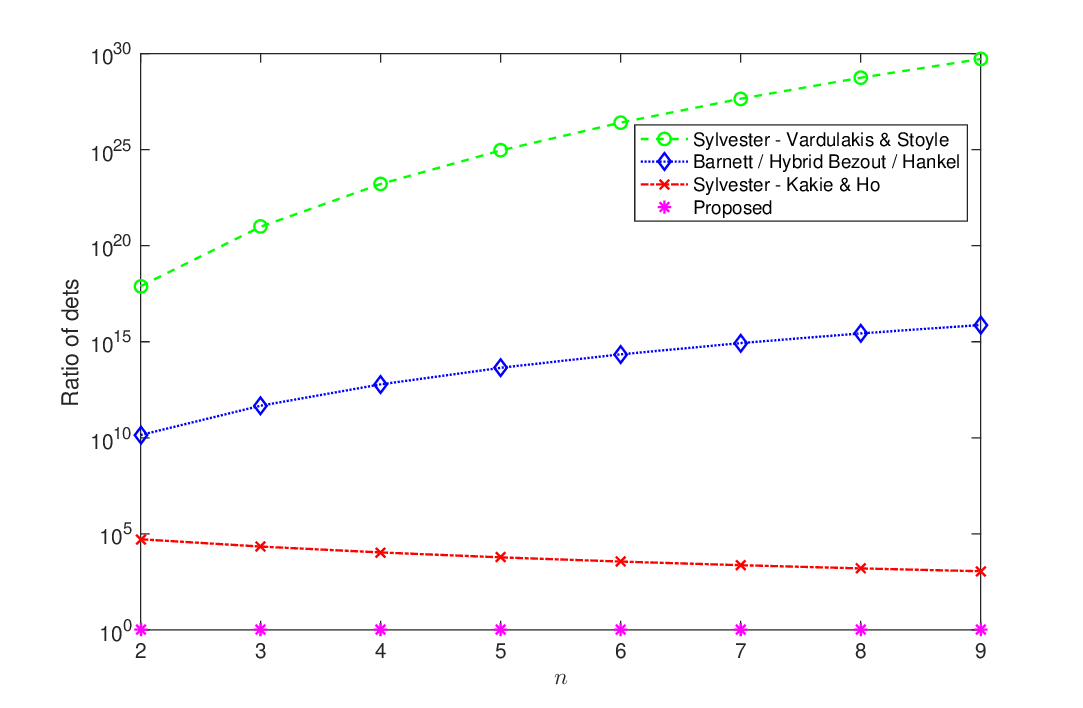}\caption{Ratios of
determinants required for partitioning the parameter space.}%
\label{fig:comparison}%
\end{figure}

\bigskip\noindent\textbf{Acknowledgements.} Hoon Hong's work was supported by
National Science Foundations of USA (Grant Nos: CCF 2212461 and CCF 2331401).
Jing Yang's work was supported by National Natural Science Foundation of China
(Grant Nos.: 12261010 and 12326353) and the Natural Science Cultivation
Project of GXMZU (Grant No.: 2022MDKJ001).


\def\cprime{$'$}

\end{document}